\def\draft{0}   
    \newcommand{\submitremove}[1]{}
    \newcommand{\submitremove}[1]{#1}
    \def\ShowAuthNotes{1}
    \def\ShowAuthNotes{0}
\newcommand{\authnote}[2]{{ \footnotesize \bf{[#1's Note: #2]} }}
\newcommand{\authnote}[2]{}
\newcommand{\cnote}[1]{{\authnote{Cynthia} {#1}}}
\newcommand{\gnote}[1]{{\authnote{Guy} {#1}}}
\newtheorem{theorem}{Theorem}[section]
\newtheorem{corollary}[theorem]{Corollary}
\newtheorem{remark}[theorem]{Remark}
\newtheorem{lemma}[theorem]{Lemma}
\newtheorem{proposition}{Proposition}[section]
\newtheorem{claim}[theorem]{Claim}
\newtheorem{fact}{Fact}[section]
\theoremstyle{definition}
\newtheorem{definition}{Definition}[section]
\newcommand{\Range}{{\rm Range}}
\newcommand{\polylog}{{\rm polylog}}
\newcommand{\E}{{\cal E}}
\newcommand{\M}{{\cal M}}
\newcommand{\N}{{\cal N}}
\newcommand{\X}{{\mathcal{X}}}
\newcommand{\Nt}{\mathbb{N}}
\newcommand{\sub}{\tau}
\newcommand{\eg}{{\it e.g.}}
\newcommand{\Cdp}{Concentrated differential privacy}
\newcommand{\cdp}{concentrated differential privacy}
\newcommand{\DKL}{D_{\mathit{KL}}}
\newcommand{\DsubG}{D_{\mathit{subG}}}
\newcommand{\Dmax}{D_{\infty}}
\newcommand{\Dapprox}{D_{\infty}^{\delta}}
\newcommand{\Var}{\mathit{Var}}
\DeclareMathOperator*{\Ex}{\mathbb{E}}
\newcommand{\by}{{ \bf y}}
\newcommand{\Rt}{\mathbb{R}}
\newcommand{\MD}{antipodal} 
\newcommand{\MDcaps}{Antipodal}
\newcommand{\DB}{{\it d}}
\newcommand{\dba}{x}
\newcommand{\dbb}{y}
\newcommand{\dbc}{z}
\newcommand{\eqdef}{\mathbin{\stackrel{\rm def}{=}}}
\newcommand{\Supp}{\mathrm{Supp}}
\newcommand{\sign}{{\mathit sign}}
\newcommand{\signa}{\sign(\alpha)}
\newcommand{\Loss}{L}
\newcommand{\ex}{\mathrm{E}} 
\newcommand{\eps}{\varepsilon}
\newcommand{\remove}[1]{}
\begin{document}
\title{Concentrated Differential Privacy}

\author{Cynthia Dwork \and Guy N. Rothblum}

\date{}

\maketitle

\begin{abstract}
We introduce {\em Concentrated Differential Privacy}, a relaxation of Differential Privacy enjoying better accuracy than both pure differential privacy and its popular ``$(\eps,\delta)$'' relaxation without compromising on cumulative privacy loss over multiple computations.
\end{abstract}

\section{Introduction}
\label{sec:intro}

The Fundamental Law of Information Recovery states, informally, that ``overly accurate'' estimates of ``too many'' statistics completely destroys privacy (\cite{DinurN03} {\it et sequelae}).  Differential privacy is a mathematically rigorous definition of privacy tailored to analysis of large datasets and equipped with a formal measure of privacy loss~\cite{DworkMNS06,Dwork06}. Moreover, differentially private algorithms take as input a parameter, typically called~$\eps$, that caps the permitted privacy loss in any execution of the algorithm and offers a concrete privacy/utility tradeoff. One of the strengths of differential privacy is the ability to reason about cumulative privacy loss over multiple analyses, given the values of $\eps$ used in each individual analysis.  By appropriate choice of $\eps$ it is possible to stay within the bounds of the Fundamental Law while releasing any given number of estimated statistics; however, before this work the bounds were not tight.

Roughly speaking, differential privacy ensures that the outcome of any anlysis on a database $\dba$ is distributed very similarly to the outcome on any neighboring
database $\dbb$ that differs from $\dba$ in just one row (Definition~\ref{def: DP}).  That is, differentially private algorithms are randomized, and in particular the {\em max divergence} between these two distributions (a sort maximum log odds ratio for any event; see Definition~\ref{def:Max-divergence} below) is bounded by the privacy parameter~$\eps$. This absolute guarantee on the maximum privacy loss is now sometimes referred to as ``pure'' differential privacy.
A popular relaxation, $(\eps,\delta)$-differential privacy~(Definition~\ref{def:delta-DP})\cite{DworkKMMN06}, guarantees that with probability at most $1-\delta$ the privacy loss does not exceed~$\eps$.\footnote{Note that the probability is over the coins of the algorithm performing the analysis. The above formulation is not immediate from the definition of $(\eps,\delta)$-differential privacy, this was proved in the full version of~\cite{DworkRV10}.}
Typically $\delta$ is taken to be ``cryptographically'' small, that is, smaller than the inverse of any polynomial in the size of the dataset, and pure differential privacy is simply the special case in which $\delta=0$.  The relaxation frequently permits
asymptotically better accuracy than pure differential privacy for the same value of~$\eps$, even when~$\delta$ is very small.

What happens in the case of multiple analyses?
While the composition of $k$ $(\eps,0)$-differentially privacy algorithms is at worst $(k\eps,0)$-differentially private, it is also simultaneously $(\sqrt{2k \ln(1/\delta)} \eps +k\eps(e^\eps - 1),\delta)$-differentially private {\em for every $\delta$}~\cite{DworkRV10}.  Let us deconstruct the statement of this result. First, privacy loss is a random variable that captures differences in the probability distributions obtained when a randomized algorithm $\M$ is run on $\dba$ as opposed to neighboring database $\dbb$ (see Section~\ref{sec:defs:DP}).  In general, if the max divergence between two distributions is bounded by $\eps$ then their KL-divergence (Definition~\ref{def:KL-divergence}) is bounded by~$\eps(e^\eps - 1)$.  This means that the expected privacy loss for a single $(\eps,0)$-differentially private computation is bounded by~$\eps(e^\eps - 1)$.  By linearity of expectation, the expected loss over $k$ $(\eps,0)$-differentially private algorithms is bounded by $k \eps(e^\eps-1)$.  The statement therefore says that the cumulative privacy loss random variable over $k$ computations is tightly concentrated about its mean: the probability of privacy loss exceeding its expectation by $t\sqrt{k} \eps$ falls exponentially in $t^2/2$ for all~$t \ge 0$.  We will return to this formulation presently.

More generally, we prove
the following Advanced Composition theorem, which improves on the composition theorem of Dwork, Rothblum, and Vadhan~\cite{DworkRV10} by exactly halving the bound on expected privacy loss of~$(\eps,0)$-differentially privacy mechanisms (the proof is otherwise identical). Details for the sharper bound appear in Section~\ref{sec: relationship to DP}.
\begin{theorem}
\label{thm: advanced composition}
For all $\eps, \delta, \delta' \ge 0$, the class of $(\eps, \delta')$-differentially private mechanisms satisfies
$( \sqrt{2k \ln (1/\delta)}\eps +  k\eps(e^\eps -1)/2, k\delta' + \delta)$-differential privacy under $k$-fold adaptive composition.
\end{theorem}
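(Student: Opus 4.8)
\medskip
\noindent\textbf{Proof plan.}
The plan is to run the proof of the composition theorem of~\cite{DworkRV10} essentially verbatim, the only change being that wherever their analysis invokes the bound $\eps(e^\eps-1)$ on the expected privacy loss of an $(\eps,0)$-differentially private step we substitute the sharpened bound $\eps(e^\eps-1)/2$ proved in Section~\ref{sec: relationship to DP}. First I would eliminate $\delta'$: using the characterization of $(\eps,\delta')$-differential privacy via $\delta'$-approximate max divergence, each mechanism $M_i$ in the composition has, on every neighboring pair of inputs, output distributions lying within total variation distance $\delta'$ of a pair of distributions whose max divergence is at most $\eps$ in both directions. Replacing each $M_i$ by such an $(\eps,0)$-private surrogate and union-bounding over the $k$ steps shows the $k$-fold composition of the true mechanisms and that of the surrogates differ in total variation by at most $k\delta'$, so it suffices to prove the composition of the surrogates is $(\eps^\ast,\delta)$-differentially private for $\eps^\ast = \sqrt{2k\ln(1/\delta)}\,\eps + k\eps(e^\eps-1)/2$, and then add $k\delta'$ to the failure probability.

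Next I would set up the privacy-loss martingale for the surrogate composition. In the $k$-fold adaptive game the neighboring pair $(x,y)$ and the mechanism $M_i$ used at step $i$ are determined by the transcript $v_{<i}=(v_1,\dots,v_{i-1})$; writing $Z_i=\ln\!\bigl(\Pr[M_i(x;v_{<i})=v_i]/\Pr[M_i(y;v_{<i})=v_i]\bigr)$ with $v_i$ drawn from $M_i(x;v_{<i})$, the privacy loss of the whole composition on transcript $v$ is $\sum_{i=1}^k Z_i$. Since each surrogate is $(\eps,0)$-private, $|Z_i|\le\eps$ pointwise, and $\Ex[Z_i\mid v_{<i}]=\DKL\!\bigl(M_i(x;v_{<i})\,\|\,M_i(y;v_{<i})\bigr)\le\eps(e^\eps-1)/2=:\mu$ by the sharpened bound. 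Put $W_i=Z_i-\Ex[Z_i\mid v_{<i}]$; this is a martingale difference sequence with respect to the filtration generated by $v_{\le i}$, and each $W_i$ is supported on an interval of length at most $2\eps$.

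By the conditional form of Hoeffding's lemma, $\Ex[e^{hW_i}\mid v_{<i}]\le e^{h^2\eps^2/2}$ for every $h$, so iterating over $i$ gives $\Ex[e^{h\sum_i W_i}]\le e^{kh^2\eps^2/2}$; a Chernoff bound optimized at $h=\sqrt{2\ln(1/\delta)/k}\,/\eps$ then yields $\Pr[\sum_i W_i>\sqrt{2k\ln(1/\delta)}\,\eps]\le\delta$, and since $\sum_i Z_i\le\sum_i W_i+k\mu$ pointwise we get $\Pr_{v\sim M(x)}[\sum_i Z_i>\eps^\ast]\le\delta$. Splitting an arbitrary event $S$ according to whether the realized privacy loss exceeds $\eps^\ast$ converts this into $(\eps^\ast,\delta)$-differential privacy of the surrogate composition (and symmetrically with $x$ and $y$ exchanged), which combined with the $k\delta'$ slack proves the theorem. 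The one genuinely substantive ingredient is the sharpened expectation bound $\DKL(P\|Q)\le\eps(e^\eps-1)/2$ whenever $\max(\Dmax(P\|Q),\Dmax(Q\|P))\le\eps$ --- this is where the factor of two is gained, and it is handled on its own in Section~\ref{sec: relationship to DP}; the only point to watch in the argument above is that the adversary's adaptivity does not break the martingale structure, since conditioning on $v_{<i}$ fixes $M_i$ and its two inputs before $Z_i$ is sampled.
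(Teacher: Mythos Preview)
Your proposal is correct and matches the paper's approach exactly: the paper states that the proof of Theorem~\ref{thm: advanced composition} is identical to the advanced composition proof of~\cite{DworkRV10}, with the single substitution of the sharpened expected-loss bound $\DKL(P\|Q)\le\eps(e^\eps-1)/2$ (Lemma~\ref{lemma:KL-tight}) for the older bound $\eps(e^\eps-1)$. Your writeup spells out precisely that structure---the reduction from $(\eps,\delta')$ to $(\eps,0)$ via the approximate-max-divergence characterization and a union bound over the $k$ steps, the privacy-loss martingale, Hoeffding's lemma on the centered increments $W_i$ in an interval of width $2\eps$, and the Chernoff/Azuma tail bound---so there is nothing to add.
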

As the theorem shows (recall that $\delta'$ is usually taken to be ``sub-polynomially small''), the pure and relaxed forms of differential privacy behave quite similarly under composition.  For the all-important class of {\em counting} queries (``How many people in the dataset satisfy property~$P$?'') Theorem~\ref{thm: advanced composition} leads to accuracy bounds that differ from the bounds imposed by (one instantiation of) the Fundamental Law by a factor of roughly $\sqrt{2\log(1/\delta)}$.\footnote{For example, for $O(n)$ counting queries errors on the order of $o(\sqrt{n})$ on all but a $0.239$ fraction of queries leads to blatant non-privacy~\cite{DworkMT07}, while noise drawn from a Laplace distribution with standard deviation $O(\sqrt{n \ln(1/\delta)}/\eps)$ yields $(\eps, \delta)$-differential privacy~\cite{DworkMNS06,DworkRV10}. \gnote{What is the constant in the little-oh?}} Recently, tight bounds on the composition of $(\eps, \delta)$-differentially private algorithms have been given in \cite{KairouzOV15,MurtaghV16}(see below).


\paragraph{A New Relaxation.}
In this work we introduce a different relaxation, {\em Concentrated Differential Privacy} (CDP), incomparable to $(\eps,\delta)$-differential privacy but again having the same behavior under composition.  \Cdp{} is closer to the ``every~$\delta$'' property in the statement of Theorem~\ref{thm: advanced composition}:  An algorithm offers $(\mu,\tau)$-concentrated differential privacy if the privacy loss random variable has mean $\mu$ and if, after subtracting off this mean, the resulting (centered) random variable, $\xi$, is subgaussian with standard $\tau$.\footnote{A random variable $X$ is {\em subgaussian with standard $\sub$} for a constant $\sub > 0$ if $\forall \lambda \in \Rt: \ex[e^{\lambda \cdot X}] \leq e^{\frac{\lambda^2 \cdot \sub^2}{2}}$.  See Section~\ref{sec:defs:subgaussians}.}  In consequence (see, \eg , \cite{BuldyginK00} Lemma~1.3), $\forall x > 0$:
\begin{eqnarray*}
\Pr[\xi \ge x]\le \exp\left( - \frac{x^2}{2\tau^2} \right) ~~\mbox{\rm and}~~
\Pr[\xi \le -x]\le \exp\left( - \frac{x^2}{2\tau^2} \right)
\end{eqnarray*}
Thus, \cdp{} ensures that the expected privacy loss is $\mu$ and the probability of loss exceeding its mean by $x=t\tau$ is bounded by $e^{-t^2/2}$, echoing the form of the guarantee offered by the Advanced Composition Theorem (Theorem~\ref{thm: advanced composition}).

Consider the case in which $\tau = \eps$.  On the one hand, $(\mu,\eps)$-\cdp{} is clearly weaker than $(\eps,\delta)$-differential privacy, because even if the expected loss $\mu$ is very small, the probability of privacy loss exceeding $\eps$ in the former can be constant in the former but is only $\delta$, which is tiny, in the latter.  On the other hand, in $(\eps,\delta)$-differential privacy there is no bound on the expected privacy loss, since with probability $\delta$ all bets are off and the loss can be infinite.

\Cdp{} enjoys two advantages over $(\eps,\delta)$-differential privacy.
\begin{itemize}
\item
{\bf Improved Accuracy.} \Cdp{} is tailored to the (realistic!) case of large numbers of computations.  Traditionally, to ensure small cumulative loss with high probability, the permitted loss for each individual query is set very low, say $\eps' = \eps\sqrt{(\log 1/\delta)/k}$, even though a privacy loss of, say, $\eps/10$ or even $\eps$ itself may not be of great concern for any single query.   \gnote{should discuss why not have bigger $\delta$ in individual analyses (or larger $\eps$).} This is precisely the flexibility we give in \cdp : much less concern about single-query loss, but high probability bounds for cumulative loss.  The composition of $k$ $(\mu,\tau)$-\cdp{} mechanisms is $(k\mu, \sqrt{k}\tau)$-\cdp{} (Theorem~\ref{thm: composition of CDP}).  Setting $\tau = \eps$ we get an expected privacy loss of $k\mu$ and, for all $t$ simultaneously, the probability of privacy loss exceeding its epectation by $t\sqrt{k}\eps$ falls exponentially in $t^2/2$, just as we obtained in the composition for the other variants of differential privacy in Theorem~\ref{thm: advanced composition} above.  However, we get better accuracy.  For example, to handle $n$ counting queries using the Gaussian mechanism, we can add independent random noise drawn from ${\cal N}(0,n/\eps^2)$ to each query, achieving $(\eps(e^\eps - 1)/2,\eps)$-\cdp{} (Theorem~\ref{thm:Gauss-CDP})\footnote{To achieve $(\eps,\delta)$-differential privacy one adds noise drawn from ${\cal N}(0,2 \ln(1/\delta))$, increasing the typical distortion by a factor of $\sqrt{\ln(1/\delta)}$.}.  When $\eps = \Theta(1)$ the noise is scaled to $O(\sqrt{n})$; the Fundamental Law says noise~$o(\sqrt{n})$ is disastrous.

\item
{\bf Group Privacy}  {\em Group privacy} bounds the privacy loss even for pairs of databases that differ in the data of a small group of individuals; for example, in a health survey one may wish not only to know that one's own health information remains private but also that the information of one's family as a whole is strongly protected.  Any $(\eps,0)$-differentially private algorithm automatically ensures $(s\eps,0)$-differential privacy for all groups of size~$s$~\cite{DworkMNS06}, with the {\em expected} privacy loss growing by a factor of about $s^2$.  The situation for $(\eps,\delta)$-differential privacy is not quite so elegant: the literature shows $(s\eps,se^{s-1}\delta)$-differential privacy for groups of size~$s$, a troubling exponential increase in the failure probability (the $se^{s-1}\delta$ term).  The situation for \cdp{} is much better: for all known natural mechanisms with \cdp{} we get tight bounds. For (hypothetical) arbitrary algorithms offering subgaussian privacy loss, Theorem~\ref{thm:group-CDP} shows bounds that are asymptotically nearly-tight (tight up to low-order terms). We suspect that the tight bounds should hold for arbitrary mechanisms, and it would be interesting to close the gap.
\begin{enumerate}
\label{item:good}
\item Under certain conditions, satisfied by all pure differential privacy mechanisms\footnote{Every $(\eps,0)$-differentially private mechanism yields $(\eps(e^\eps-1)/2,\eps)$-\cdp{} (Theorem~\ref{thm:DP-to-CDP}).},
and the addition of Gaussian noise,
any $(\mu,\tau)$-\cdp{} mechanism satisfies $(s^2 \cdot \mu, s \cdot \tau)$-\cdp{} for groups of size~$s$, which is optimal.
\item
\label{item:arbitrary}
Every $(\frac{\tau^2}{2},\tau)$-\cdp{} mechanism satisfies $(s^2 \cdot \frac{\tau^2}{2} \cdot (1+o(1)), s \cdot \tau \cdot (1+o(1))$-\cdp{} for groups of size~$s$. The bound holds so long as $s \cdot \tau$ is small enough (roughly, $(1/\tau)$ should remain quasi-linear in $s$). See Theorem~\ref{thm:group-CDP} for the formal statement. We also assume here that $\mu \leq \frac{\tau^2}{2}$.\footnote{Up to low-order terms, this relationship holds for all mechanisms we know. For other mechanisms, it is possible to derive a less elegant general bound, or $\tau$ can be ``artificially inflated'' until this condition holds.}
\end{enumerate}
\end{itemize}

\begin{remark}
\label{rem: commute}
Consider any mechanism built by composing a collection of ``good'' mechanisms, each satisfying the conditions in Item~\ref{item:good} above.  To prove that the composed mechanism has the good group privacy bounds we can first apply the group privacy bounds for the underlying ``good'' mechanisms, and then apply composition.  It is interesting that these two operations commute.
\cnote{Check the math!!!}
\cnote{We should try to find a useful primitive satisfying \cdp{} but not satisfying our nice condition.}
\end{remark}

\noindent
{\bf Tight Bounds on Expected Loss.} As noted above, we improve by a factor of two the known upper bound on expected privacy loss of any $(\eps,0)$-differentially private mechanism, closing a gap open since~2010~\cite{DworkRV10}.  This immediately translates to an improvement by a factor of $\sqrt{2}$ on the utility/privacy tradeoff in {\em any} application of the Advanced Composition Theorem.  The new bound, which is tight, is obtained by first proving the result for special pairs of probablity distributions that we call {\em \MD} (privacy loss for any outcome is in $\{-\eps,0,\eps\}$), and then showing a reduction, in which an arbitrary pair of distributions with max divergence bounded by $\eps$ -- such as the distributions on outcomes of a differentially private mechanism when run on databases $\dba$ and $\dbb$ differing in a single row -- can be ``replaced'' by a \MD{} pair with no change in max divergence and no decrease in KL-divergence. \gnote{verify tightness, should be true for randomized response}

\begin{remark}
If all $(\eps,0)$-differentially private algorithms enjoy \cdp , as well as the Gaussian mechanism, which $(\eps,\delta)$-differentially private algorithms are ruled out?  All $(\eps,\delta)$-differentially private algorithms in which there is some probability $\delta' \le \delta$ of infinite privacy loss.  This includes many (but not all!) algorithms in the ``Propose-Test-Release'' framework~\cite{DworkL09}, in which a differentially private test is first performed to check that the dataset satisfies some ``safety'' conditions and, if so, an operation is carried out that only ensures privacy if the conditions are met.  There could be a small probability of failure in the first step, meaning that the test reports that the safety conditions are met, but in fact they are not, in which case privacy could be compromised in the second step.
\cnote{Add good example of infinite privacy loss in second step} \gnote{consensus?} \gnote{perhaps add that we can also allow an error parameter in CDP to allow these types of analyses}
\end{remark}

\remove{ 
\section{Old Introduction}

Statistical analyses of individuals' sensitive data can yield valuable insights: for example, medical studies conducted on patient data can help detect, prevent, and cure disease. However, the outcomes of such analyses might also lead to privacy compromises and harm to the individuals participating in the study. In this work, we study {\em privacy-preserving} statistical analysis.

{\em Differential Privacy}, proposed by Dwork {\em et al.} \cite{DworkMNS06} and studied in a growing recent literature, provides rigorous and quantifiable protection to individuals participating in a statistical analysis or study. Differential privacy is unique in that it provides protection from linkage attacks and under composition of multiple analyses, while still allowing meaningful statistical accuracy. Differential Privacy is generally obtained by adding some noise or error to the aggregate statistics. In many settings, particularly when the dataset being studied is large, the error magnitude can be quite small compared to the scale of the aggregate statistic being computed. In other settings, however, especially when the dataset is small and the goal of the analysis is extracting many different statistics about the dataset, the error magnitude might be large and the ``price of privacy'' (in term of accuracy) becomes high.\footnote{We emphasize that, to the best of our knowledge, there is nothing {\em inherent} about these limitations, we only note that they arise in the current state-of-the-art algorithms and techniques.}

In this work, we introduce a new privacy notion: {\em Concentrated Differential Privacy} (CDP), which is inspired by Differential Privacy (DP). CDP is a {\em relaxation} of DP, geared towards applications that require smaller error or noise magnitude. In many settings, CDP algorithms can provide better accuracy guarantees than the best DP algorithms that are known. While CDP provides more relaxed privacy guarantees, it still protects individuals from linkage attacks and composition of multiple analyses. In fact, when we consider composition of multiple analyses, and compare CDP to DP, we show that CDP provides {\em almost-identical privacy protection, together with significantly improved accuracy}. Since we expect that in many settings individuals are involved in multiple statistical analyses over their lifetimes, we view CDP as providing mildly relaxed privacy protection, but significant improvements in accuracy.

We proceed to outline our contributions in more detail.

\paragraph{Differential Privacy.} Differential Privacy is formalized as follows: a pair of databases $\DB,\DB'$ is {\em adjacent} if they differ only in a single individual's data. For $\eps \geq 0$, a randomized algorithm $\M$ is $\eps$-DP if for any such pair of adjacent databases, and for any event $S$ (a subset of the algorithm's output space), the event's probabilities: $p$ when running $\M(\DB)$, and $p'$ when running $\M(\DB')$, differ by at most an $e^{\eps}$ multiplicative factor. See Definition \ref{def: DP} below for a full formalization. For small constant $\eps$, this ratio is close to 1 (the probabilities are only over the coins of $\M$). Thus the probabilities of any event (e.g. privacy compromise) when the algorithm is run on a database that includes individual $I$'s data, or on the adjacent database that makes no mention of individual $I$, are roughly equal. In particular, individual $I$ will not be harmed (much) by her participation in the data analysis (compared to an alternative world where she chose not to participate). Indeed, differential privacy is geared towards encouraging participation in statistical analyses. One relaxation of differential privacy that has been explored extensively is $(\eps,\delta)$-DP \cite{???}. This relaxation allows a negligible, cryptographically small, probability of failure: with probability at most $\delta$ the output of $\M$ might have wildly different probabilities $p,p'$. Otherwise, with probability $1-\delta$, the probabilities differ by at most an $e^{\eps}$ multiplicative factor (as in standard or ``pure'' DP). See Definition \ref{def:delta-DP}. As mentioned above, differential privacy is unique in protecting against linkage and composition attacks. For linkage attacks, the quantification is other {\em any} event that might occur: any information that an adversary might have about individual $I$ can be considered as part of the event $S$. As we discuss below, Differential Privacy also protects against composition attacks where individual $I$ is involved in multiple analyses, of the same database or of different databases, each of which is DP in isolation. Throughout this work, when we consider composition we will focus on adversarial and adaptive composition, where a potentially malicious adversary chooses the (differentially private) analyses and the databases in which individual $I$ is involved. See \cite{DworkRV10,DworkR13} for further discussion and formalization of composition of DP.

\paragraph{Privacy Loss as a Random Variable.} A particularly important aspect in the behavior of differentially private algorithms is commonly referred to as their {\em privacy loss}. Consider running algorithm $\M$ on a pair of databases $\DB,\DB'$. For an outcome $y$, the {\em privacy loss} on $y$ is the log-ratio of its probability when $\M$ is run on each database:
$$\Loss_{(\M(\DB)||\M(\DB'))}^{(y)} = \ln \frac{\Pr[\M(\DB)=y]}{\Pr[\M(\DB')=y]}$$
This provides a quantification of the ``potential consequences'' or harm when outcome $y$ is obtained (note that the privacy loss can be negative). For an $\eps$-DP algorithm, the loss for {\em any} outcome $y$ is (by definition) at most $\eps$. Similarly, for an $(\eps,\delta)$-DP algorithm, the loss on any outcome whose probability is greater than $\delta$ is at most $\eps$.

We further examine the {\em privacy loss random variable}: this real-valued random variable measures the privacy loss ensuing when algorithm $\M$ is run on $\DB$ (as opposed to $\DB'$). It is sampled by taking $y \sim \M(\DB)$ and outputting $\Loss_{(\M(\DB)||\M(\DB'))}^{(y)}$. This random variable can take positive or negative values. As above, for $\eps$-DP algorithms, its magnitude is always bounded by $\eps$. For $(\eps,\delta)$-DP algorithms, with all but $\delta$ probability the magnitude is bounded by $\eps$.

\paragraph{Concentrated Differential Privacy.} Examining the privacy loss random variable, we take a different approach. Rather than requiring that its magnitude is always, or with overwhelming probability, bounded, we instead require that its {\em expected} value is small, and that it is (very) tightly concentrated around its expectation. Namely, we require that the privacy loss random variable at least as concentrated as a Gaussian distribution with small expectation and standard deviation.

More formally, an algorithm $\M$ is $(\mu,\sub)$-CDP if for all adjacent databases $\DB,\DB'$, when we look at the (real valued) privacy loss random variable obtained by running $\M(\DB)$ to obtain an outcome $y$, and measuring outcome $y$'s privacy loss (as above), the distribution has expectation at most $\mu$, and is at least as concentrated as the Gaussian distribution $\N(\mu,\sub)$. The formalization of ``as least as concentrated as a Gaussian'' is obtained using the theory of subgaussian random varaiables. See Section \ref{sec:defs:subgaussians} below for an overview on subguassian random variables, and Definition \ref{def:CDP} in Section \ref{sec:CDP} for the formal definition of Concentrated Differential Privacy.

To relate DP and CDP, we show that any $\eps$-DP algorithm is also $(\eps^2/2,\eps)$-CDP, see Theorem \ref{thm:DP-to-CDP} below. We note that one of our technical contributions in this work (and a corollary of Theorem \ref{thm:DP-to-CDP}) is an improvement to the bounded on expected privacy loss of $\eps$-DP algorithms proved in \cite{DworkRV10}. \gnote{Not totally accurate: right now we have $\sub=2\eps$. Hoping to refine that theorem to get better $\sub$.}

Intuitively, CDP guarantees that on average privacy loss/harm is small. With small probability, there might be somewhat larger privacy loss/harm, but the probability of large privacy loss vanishes very quickly. We emphasize that CDP affords only relaxed protection, especially when we focus on a single analysis run in isolation: whereas an $\eps$-DP algorithm guarantees that the privacy loss is {\em always} bounded by $\eps$, and an $(\eps,\delta)$-DP algorithm guarantees that the loss is bounded by $\eps$ except with negligible probability $\delta$, an $(\eps^2/2,\eps)$-CDP analysis might, with probability roughly $0.02$, incur a privacy loss greater than $2\eps$. Note, however, that there are advantages to the CDP guarantee even for standalone analyses: in particular, it gives a more complete characterization of the algorithm's privacy loss behavior (we equate privacy loss with potential harm for individuals). Also, unlike $(\eps,\delta)$-DP, where with some $\delta$ probability all bets are off and privacy loss might be infinite (e.g. exposing an individual's name and medical records), in CDP there is no probability of infinite privacy loss: the larger the privacy loss, the smaller the probability that it occurs.

CDP protects against linkage attacks in a similar way to $(\eps,\delta)$-DP: for an event $S$, either its probabilities on $\DB,\DB'$ are close (and so individual $I$ need not worry about event $S$ when deciding on participation in the analysis), or the probability that event $S$ happens when running on $\DB$ is very small (so that, roughly speaking, individual $I$ need not worry about the event happening when she participates in the analysis). Moreover, CDP provides strong protection under composition---{\em it composes as well as standard Differential Privacy}, as we discuss below.

\paragraph{Composition.} One of the unique advantages of Differential Privacy is that it composes well: taken together, the outputs of $k$ algorithms, each of which is $\eps$-DP in isolation, preserve Differential Privacy. \cite{DworkMNS06} showed that the joint outputs are $(k \cdot \eps)$-DP. Composition is of central importance, both because it allows modular construction of DP algorithms from DP building blocks, and because in reality individuals are involved in multiple analyses throughout their lifetime, and an attacker may well be able to see (or even influence) more than one analysis. More recently, \cite{DworkRV10} showed that DP handles composition even more gracefully than had been known: for large $k$, the joint outputs in an adaptive and adversarial composition of $k$ $\eps$-DP algorithm are themselves roughly $(k \cdot \eps^2,\delta)$-DP (for a small $\delta$). In fact, looking more carefully at their analysis, the privacy loss random variable for the joint outputs converges to a Gaussian (as $k$ grows) with expectation $k \cdot \eps^2$ and standard deviation $\sqrt{k} \cdot \eps$.

We show that CDP promises {\em essentially identical} behavior under composition. Namely, adversarial and adaptive composition of $k$ algorithms, each of which is $(\mu,\sub)$-CDP, maintains $(k \cdot \mu, \sqrt{k} \cdot \sub)$-CDP. In particular, for $k$ algorithms, each $(\eps^2/2,\eps)$-CDP, their composed privacy loss is essentially equivalent to that of $k$ algorithms, each $\eps$-DP (this takes into account improved composition bounds for $\eps$-DP algorithms, which are a consequence of Theorem \ref{thm:DP-to-CDP}). When we expect composition to come into play, we can focus on the expected privacy loss of each individual analysis ($\mu$), as the composed sum of expected privacy losses will dominate the composed standard deviation---in composition, concentration comes about naturally and ``for free''. Thus, when composition comes into play, CDP offers improved accuracy {\em with essentially no additional risks to privacy}.

\paragraph{Case Study: Gaussian Mechanism Revisited.}

To illustrate the advantages of CDP, we re-examine the Guassian Mechanism \cite{???}. To answer a single real-valued query with sensitivity 1 with $(\eps,\delta)$-DP, we can add Gaussian noise of magnitude $(\sqrt{\log(1/\delta)}/\eps)$ \cite{???}. However, to achieve $(\eps^2/2,\eps)$-CDP, it suffices to add Gaussian noise of magnitude $(1/\eps)$! Note that since we usually require $\delta$ to be cryptographically small, this immediately gives an order of magnitude improvement in the noise magnitude. We note that this also offers improved accuracy over the Laplace mechanism, because the Gaussian noise is more tightly concentrated, and will thus give improved accuracy. Moreover, in many statistical settings, Gaussian or binomial noise is inherent to the experiment (due to sampling effects), and so the effect of additional Gaussian noise (added for privacy) is easier to analyze.

If we further relax to $(\eps,\sqrt{\eps})$-CDP, we can add even smaller noise of magnitude $(\sqrt{1/\eps})$. For example, this would make sense in settings where we expect further composition, and so we can focus on the expected privacy loss (bounding it by $\eps$) and allow more slackness in the standard deviation. For small $\eps$, this gives another order of magnitude improvement in the noise magnitude.

The above follow from the following theorem:
\begin{theorem}[Gaussian is CDP]
\label{thm:Gauss-CDP}
Let $f: \dba \rightarrow \Rt$ be any real-valued function with sensitivity $\Delta(f)$. Then the Gaussian mechanism with noise magnitude $\sigma$ is $(\sub^2/2,\sub)$-CDP, where $\sub=\Delta(f)/\sigma$.
\end{theorem}

In proving this theorem, we give a tight characterization of the Gaussian Mechanism's privacy loss random variable: the {\em privacy loss random variable} is itself a Gaussian, with bounded expectation and standard deviation as above.

The improved accuracy potential of CDP can also be seen when answering $k$ sensitivity-1 queries. It was shown \cite{DinurN03,????} that adding independent Gaussian noise of magnitude $O(\sqrt{\log(1/\delta) \cdot k} /\eps)$ to each query answer guarantees $(\eps,\delta)$-DP (by the composition theorem of \cite{DworkRV10}, Laplace noise also provides similar guarantees). Theorem \ref{thm:Gaus-CDP} and the composition of CDP mechanisms implies that adding independent Gaussian noise of magnitude $\sqrt{k/2\eps}$ to each query guarantees $(\eps,\sqrt{2\eps})$-CDP.

\paragraph{Discussion.}

\begin{remark}
For standalone analyses, CDP is vulnerable to attacks that DP resists. For example, consider a scenario where Alice is applying for medical insurance. The insurer sees the outcome $x$ of a medical study that Alice participated in, but is not certain whether or not Alice has medical condition $A$. Suppose further, that the insurer knows which algorithm was used in publishing the study results, and can compute the probabilities $p,p'$ of outcome $x$ in the cases where Alice has and does not have condition $A$ (respectively). If the ratio between $p$ and $p'$ is larger than $e$, then the insurer conjectures that Alice has condition $A$, and rejects her application.

If the study results were published using an $\eps=1$-DP algorithm, then Alice is guaranteed that the ratio between $p$ and $p'$ is always at most $e$, and her application will not be rejected due to the above policy. On the other hand, if the results were published using a $(\mu=1/2,\tau=1/2)$-CDP algorithm, then if Alice has the condition her application will be rejected with probability $?$.

We note, however, that in this scenario the harm incurred by Alice is not purely a result of her condition (and participation in the study): even if Alice doesn't have the condition, her application will be rejected with probability $?$. I.e.: the insurer's policy leads to many rejected applications that are false positives. Indeed, even and $\eps=1$-DP algorithm would be susceptible to a similar attach, if the threshold for the ratio between $p$ and $p'$ were lowered to $e - 0.01$.

\gnote{Not sure that the remark below is informative or helpful... is this attack really as interesting as we/I thought? It seems that the main distinction is that $\eps$-DP stops all attacks that look above a certain threshold (but is vulnerable to attacks under the threshold). CDP on the otherhand, doesn't stop at any particular threshold, but simply offers a smooth degradation of probability as the threshold grows. The advantage of $\eps$-DP becomes especially iffy under composition, where we don't know in advance what the world domination $\eps$ will be...}
\end{remark}
\paragraph{Applications.} \gnote{TBD}
} 

\subsection{Recent Developments}
\paragraph{Tightness.}
Optimality in privacy loss is a suprisingly subtle and difficult question under composition.  Results of Kairouz, Oh and Viswanath~\cite{KairouzOV15} obtain tight bounds under composition of arbitrary $(\epsilon,\delta)$-mechanisms, when all mechanisms share the same values of $\epsilon$ and $\delta$.  That is, they find the optimal $\epsilon',\delta'$ such that the composition of $k$ mechanisms, each of which is $(\epsilon_0,\delta_0)$-differentially private, is $(\epsilon',\delta')$-differentially private.  The nonhomogeneous case, in which the $i$th mechanism is $(\epsilon_i,\delta_i)$-differentially private, has been analyzed by Murtagh and Vadhan~\cite{MurtaghV16}, where it is shown that determining the bounds of the optimal composition is hard for $\# {\cal P}$.  Both these papers use an analysis similar to that found in our Lemma~\ref{lemma:KL-tight}, which we obtained prior to the publication of those works.
Optimal bounds on the composition of arbitrary mechanisms are very interesting, but we are also interested in bounds on the specific mechanisms that we have in hand and wish to use and analyze.  To this end, we obtain a complete characterization of the privacy loss of the Gaussian mechanism.  We show that the privacy loss of the composition of multiple application of the Gaussian mechanism, possibly with different individual parameters, is itself a Gaussian random variable, and we give exact bounds for its mean and variance. This characterization is not possible using the framework of $(\epsilon,\delta)$-differential privacy, the previous prevailing view of the Gaussian mechanism.

\paragraph{Subsequent Work.}
Motivated by our work, Bun and Steinke~\cite{BunS15} suggest a relaxation of concentrated differential privacy.  Instead of framing the privacy loss as a subgaussian random variable as we do here, they instead frame the question in terms of Renyi entropy, obtaining a relaxation of concentrated differential privacy that also supports a similar composition theorem. Their notion also provides privacy guarantees for groups. The bounds we get using concentrated differential privacy (Theorem \ref{thm:group-CDP}) are tighter; we do not know whether this is inherent in the definitions.

\section{Preliminaries}
\label{sec:defs}

\paragraph{Divergence.} We will need several different notions of divergence between distributions. We will also introduce a new notion, {\em subgaussian divergence} in Section \ref{sec:CDP}.

\begin{definition}[KL-Divergence]
\label{def:KL-divergence}

The KL-Divergence, or Relative entropy, between two random variables $Y$ and $Z$ is defined as:
$$\DKL(Y||Z) = \ex_{y \sim Y} \left[ \ln \frac{\Pr[Y=y]}{\Pr[Z=y]} \right],$$
where if the support of $Y$ is not equal to the support of $Z$, then $\DKL(Y||Z)$ is not defined.

\end{definition}

\begin{definition}[Max Divergence]
\label{def:Max-divergence}

The Max Divergence between two random variables $Y$ and $Z$ is defined to be:
$$\Dmax(Y||Z) = \max_{S \subseteq \Supp(Y)} \left[ \ln \frac{\Pr[Y \in S]}{\Pr[Z \in S]} \right],$$
where if the support of $Y$ is not equal to the support of $Z$, then $\Dmax(Y||Z)$ is not defined.

The $\delta$-approximate divergence between $Y$ and $Z$ is defined to be:
$$\Dapprox(Y||Z) = \max_{S \subseteq \Supp(Y): \Pr[Y \in S] \geq \delta} \left[ \ln \frac{\Pr[Y \in S]}{\Pr[Z \in S]} \right],$$
where if $\Pr[Y \in \Supp(Y) \setminus \Supp(Z)] > \delta$, then $\Dapprox(Y||Z)$ is not defined.
\end{definition}

\subsection{Differential Privacy}
\label{sec:defs:DP}

For a given database $\DB$, a
(randomized) non-interactive database access mechanism $\M$ computes
an output $\M(x)$ that can later be used to reconstruct information
about $\DB$. We will be concerned with mechanisms $\M$ that are {\em
private} according to various privacy notions described below.

We think of a database $\dba$ as a multiset of {\em rows}, each from
a data universe $U$. Intuitively, each row contains the data of a
single individual. We will often view a database of size $n$ as a
tuple $\dba\in U^n$ for some $n\in\N$ (the number of individuals
whose data is in the database). We treat $n$ as public information
throughout.

We say databases $\dba, \dbb$ are {\em adjacent} if they differ only
in one row, meaning that we can obtain one from the other by
deleting one row and adding another. I.e. databases are adjacent if
they are of the same size and their edit distance is 1.
To handle worst case pairs of databases, our probabilities
will be over the random choices made by the privacy mechanism.

\begin{definition} [$(\eps,0)$-Differential Privacy ($(\eps,0)$-DP)~\cite{DworkMNS06}]
\label{def: DP}

A randomized algorithm $\M$ is {\em $\eps$-differentially private}
if for all pairs of adjacent databases $\dba,\dbb$, and for all sets
$S \subseteq \Range(\M(\dba)) \cup \Range(\M(\dbb))$
$$\Pr[\M(\dba) \in S] \leq e^{\eps} \cdot Pr[\M(\dbb) \in S],$$
where the probabilities are over algorithm $\M$'s coins. Or alternatively: $$\Dmax(\M(\dba)||\M(\dbb)),\Dmax(\M(\dbb)||\M(\dba)) \leq \eps$$
\end{definition}

\begin{definition}[$(\eps, \delta)$-Differential Privacy ($(\eps,\delta)$-DP) \cite{DworkKMMN06}]
\label{def:delta-DP}

A randomized algorithm $\M$ gives
$(\eps,\delta)$-{\em differential privacy} if for all pairs of
adjacent databases $\dba$ and $\dbb$ and all $S \subseteq
\Range(\M)$
$$ \Pr[\M(\dba) \in S]  \le   e^\eps \cdot \Pr[\M(\dbb) \in S]
+ \delta, $$ where the probabilities are over the coin flips of the
algorithm $\M$. Or alternatively: $$\Dapprox(\M(\dba)||\M(\dbb)), \Dapprox(\M(\dbb)||\M(\dba)) \leq \eps$$
\end{definition}

\paragraph{Privacy Loss as a Random Variable.} 
Consider running an algorithm $\M$ on a pair of databases $x,y$. For an outcome $o$, the {\em privacy loss} on $o$ is the log-ratio of its probability when $\M$ is run on each database:
$$\Loss_{(\M(\dba)||\M(\dbb))}^{(o)} = \ln \frac{\Pr[\M(\dba)=o]}{\Pr[\M(\dbb)=o]} .$$

\Cdp{} delves more deeply into the privacy loss random variable: this real-valued random variable measures the privacy loss ensuing when algorithm $\M$ is run on $\dba$ (as opposed to $\dbb$). It is sampled by taking $y \sim \M(\dba)$ and outputting $\Loss_{(\M(\dba)||\M(\dbb))}^{(o)}$. This random variable can take positive or negative values. For $(\eps,0)$-differentially private algorithms, its magnitude is always bounded by $\eps$. For $(\eps,\delta)$-differentially private algorithms, with all but $\delta$ probability the magnitude is bounded by $\eps$.

\subsection{Subgassian Random Variables}
\label{sec:defs:subgaussians}

Subgaussian random variables were introduced by Kahane \cite{Kahane60}. A subgaussian random variable is one for which there is a positive real number $\sub > 0$ s.t. the moment generating function is always smaller than the moment generating function of a Gaussian with standard deviation $\sub$ and expectation~0.
In this section we briefly review the definition and basic lemmata from the literature.

\begin{definition}[Subgaussian Random Variable \cite{Kahane60}]
\label{def:subgaussian}

A random variable $X$ is {\em $\sub$-subgaussian} for a constant $\sub > 0$ if:
\begin{eqnarray*}
\forall \lambda \in \Rt: \ex[e^{\lambda \cdot X}] \leq e^{\frac{\lambda^2 \cdot \sub^2}{2}}
\end{eqnarray*}
We say that $X$ is subgaussian if there exists $\sub \geq 0$ s.t. $X$ is $\sub$-subgaussian.
For a subgaussian random variable $X$, the {\em subgaussian standard} of $X$ is:
\begin{eqnarray*}
\tau(X) = \mathit{inf} \{\sub \geq 0: \mbox{$X$ is $\sub$-subgaussian} \}
\end{eqnarray*}
\end{definition}
\paragraph{Remarks.}
An immediate consequence of Definition \ref{def:subgaussian} is that an $\sub$-subgaussian random variable has expectation 0, and variance bounded by $\sub^2$ (see Fact \ref{fact:subgauss-variance}). Note also that the gaussian distribution with expectation 0 and standard deviation $\sigma$ is $\sigma$-subgaussian. There are also known bounds on the higher moments of subgaussian random variables (see Fact \ref{fact:subgauss-moments}). See \cite{BuldyginK00} and \cite{Rivasplata12} for further discussion.

\begin{lemma}[Subgaussian Concentration]
\label{lemma:subgaussian-concentration}

If $X$ is $\sub$-subgaussian for $\sub > 0$, then:
\begin{eqnarray*}
\Pr[X \geq t \cdot \sub] \leq  e^{-t^2/2}, \hspace{1cm} \Pr[X \leq -t \cdot \sub]  \leq  e^{-t^2/2}
\end{eqnarray*}
\end{lemma}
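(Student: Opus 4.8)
The plan is to use the standard Chernoff--Cram\'er exponential-moment (Markov) argument, optimized over the free parameter $\lambda$ in the subgaussian bound. First I would fix $t > 0$ (the case $t = 0$ is trivial since the right-hand side is $1$) and an arbitrary $\lambda > 0$, and apply Markov's inequality to the nonnegative random variable $e^{\lambda X}$:
\begin{eqnarray*}
\Pr[X \geq t\sub] = \Pr[e^{\lambda X} \geq e^{\lambda t \sub}] \leq \frac{\ex[e^{\lambda X}]}{e^{\lambda t \sub}} \leq \exp\left(\frac{\lambda^2 \sub^2}{2} - \lambda t \sub\right),
\end{eqnarray*}
where the last inequality is exactly the defining property of a $\sub$-subgaussian random variable (Definition~\ref{def:subgaussian}), valid because $\lambda \in \Rt$.

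The next step is to choose $\lambda$ to minimize the exponent $g(\lambda) = \frac{\lambda^2 \sub^2}{2} - \lambda t \sub$. Since $\sub > 0$, this is a convex quadratic in $\lambda$ with minimizer $\lambda^\star = t/\sub > 0$, which lies in the valid range, and $g(\lambda^\star) = -t^2/2$. Substituting gives $\Pr[X \geq t\sub] \leq e^{-t^2/2}$, the first claimed bound. For the second bound, I would observe that if $X$ is $\sub$-subgaussian then so is $-X$ (immediate from the definition, replacing $\lambda$ by $-\lambda$ and using that the condition quantifies over all $\lambda \in \Rt$), so applying the first bound to $-X$ yields $\Pr[X \leq -t\sub] = \Pr[-X \geq t\sub] \leq e^{-t^2/2}$.

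There is no real obstacle here: this is a textbook computation, and indeed the statement is quoted from the literature (e.g.\ \cite{BuldyginK00}). The only points requiring a word of care are that the minimizing $\lambda^\star$ must be checked to be positive (so that the Markov step, which needs $\lambda > 0$, remains valid), and the symmetry observation that $-X$ inherits the $\sub$-subgaussian property; both are immediate from the two-sided quantifier over $\lambda$ in Definition~\ref{def:subgaussian}.
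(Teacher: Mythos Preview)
Your proposal is correct and follows exactly the same Chernoff--Markov argument as the paper's proof: apply Markov's inequality to $e^{\lambda X}$, invoke the subgaussian MGF bound, and optimize at $\lambda = t/\sub$. The only cosmetic difference is that you handle the lower tail by noting $-X$ is $\sub$-subgaussian and invoking the first bound, whereas the paper simply says ``the proof \ldots\ is similar''; these are the same.
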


\begin{proof}
For every $\lambda > 0$ and $t > 0$:
\begin{eqnarray*}
\Pr[X \geq t \cdot \sub] & = & \Pr[e^{\lambda \cdot X} \geq e^{\lambda \cdot t \cdot \sub}] \\
& \leq & e^{-\lambda \cdot t \cdot \sub} \cdot \ex[e^{\lambda \cdot X}] \\
& \leq & e^{\frac{\lambda^2 \cdot \sub^2}{2} - \lambda \cdot t \cdot \sub}
\end{eqnarray*}
where the first equality is obtained by raking an exponential of all arguments, the second (in)equality is by Markov, and the this is by the properties of the subgaussian random variable $X$. The right-hand side is minimized when $\lambda = t/\sub$, and thus we get that:
$$\Pr[X \geq t \cdot \sub] \leq e^{-t^2/2}$$
The proof that $\Pr[X \leq -t \cdot \sub] \leq e^{-t^2/2}$ is similar.
\end{proof}

\begin{fact}[Subgaussian Variance] \label{fact:subgauss-variance}
The variance of any $\tau$-subgaussian random variable $Y$ is bounded by $\Var(Y) \leq \tau^2$.
\end{fact}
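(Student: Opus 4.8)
The plan is to read the variance bound off the defining moment-generating-function inequality by comparing the low-order behavior of the two sides near $\lambda = 0$. First I would record the standard fact that a $\tau$-subgaussian random variable $X$ has a finite MGF $\phi(\lambda) := \ex[e^{\lambda X}]$ on all of $\Rt$; this guarantees that $X$ has moments of every order and that $\phi$ is smooth near the origin with the second-order expansion $\phi(\lambda) = 1 + \lambda \ex[X] + \frac{\lambda^2}{2}\ex[X^2] + O(\lambda^3)$ as $\lambda \to 0$.

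Next I would establish $\ex[X] = 0$ (also asserted in the remarks following Definition~\ref{def:subgaussian}). The quickest route is Jensen's inequality: $e^{\lambda \ex[X]} \le \ex[e^{\lambda X}] \le e^{\lambda^2 \tau^2 / 2}$ for every $\lambda$, hence $\lambda \ex[X] \le \frac{\lambda^2 \tau^2}{2}$; dividing by $\lambda > 0$ and letting $\lambda \to 0^+$ gives $\ex[X] \le 0$, while dividing by $\lambda < 0$ and letting $\lambda \to 0^-$ gives $\ex[X] \ge 0$.

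With $\ex[X] = 0$ in hand, I would substitute the Taylor expansion of $\phi$ into the subgaussian inequality: $1 + \frac{\lambda^2}{2}\ex[X^2] + O(\lambda^3) \le e^{\lambda^2 \tau^2 / 2} = 1 + \frac{\lambda^2 \tau^2}{2} + O(\lambda^4)$. Cancelling the constant terms, dividing by $\lambda^2/2 > 0$, and letting $\lambda \to 0$ yields $\ex[X^2] \le \tau^2$. Since $\Var(X) = \ex[X^2] - (\ex[X])^2 = \ex[X^2]$, the bound $\Var(Y) \le \tau^2$ follows.

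The only genuinely delicate point — and the one I would be most careful about — is the justification that $\phi$ has the claimed second-order Taylor expansion with coefficients $\ex[X]$ and $\ex[X^2]$; this is exactly where finiteness of the MGF on a neighborhood of $0$ is used, legitimizing differentiation under the expectation (equivalently, term-by-term expansion of the power series). Everything else is elementary. A variant that avoids even writing the expansion: $\phi$ is $C^2$ near $0$, and $g(\lambda) := \phi(\lambda) - e^{\lambda^2 \tau^2 / 2} \le 0$ attains its maximum value $g(0) = 0$ at $\lambda = 0$, so $g'(0) = 0$ (re-deriving $\ex[X] = 0$) and $g''(0) \le 0$, which unwinds precisely to $\ex[X^2] - \tau^2 \le 0$.
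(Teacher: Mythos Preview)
Your argument is correct and is in fact the standard one. Note, however, that the paper does not actually supply a proof of this statement: it is recorded as a \emph{Fact} (with the remark that it is ``an immediate consequence of Definition~\ref{def:subgaussian}'') and the reader is referred to \cite{BuldyginK00} and \cite{Rivasplata12} for details. So there is no ``paper's own proof'' to compare against; what you have written is exactly the kind of justification those references contain. Both your main route (Taylor-expand the MGF and match second-order coefficients) and your variant (the function $g(\lambda)=\phi(\lambda)-e^{\lambda^2\tau^2/2}$ is nonpositive with a maximum at $0$, forcing $g''(0)\le 0$) are clean and complete, and your flagging of the one genuine analytic point --- that finiteness of the MGF on a neighborhood of $0$ is what legitimizes the expansion/differentiation under the expectation --- is exactly right.
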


\begin{fact}[Subgaussian Moments] \label{fact:subgauss-moments}
For any $\tau$-subgaussian random variable $Y$, and integer $k$, the $k$-th moment is bounded by:
$$\Ex[Y^{k}] \leq \left( (\lceil k/2 \rceil !) \cdot 2^{\lceil k/2 \rceil + 1} \cdot \tau^{k} \right)$$
\end{fact}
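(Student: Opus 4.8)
The plan is to derive the moment bound from the subgaussian tail estimate of Lemma~\ref{lemma:subgaussian-concentration} via the ``layer cake'' identity
\[
\Ex[|Y|^{k}] = \int_{0}^{\infty} k\, t^{k-1}\, \Pr[|Y| \ge t]\, dt ,
\]
and then to rewrite the resulting Gamma integral in terms of factorials. (I assume $k \ge 1$; the case $k = 0$ is immediate.)

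First I would observe that Lemma~\ref{lemma:subgaussian-concentration} gives $\Pr[|Y| \ge t] \le \Pr[Y \ge t] + \Pr[Y \le -t] \le 2 e^{-t^{2}/(2\tau^{2})}$ for every $t > 0$. Substituting this into the identity above yields $\Ex[|Y|^{k}] \le 2k \int_{0}^{\infty} t^{k-1} e^{-t^{2}/(2\tau^{2})}\, dt$. The change of variables $t = \tau\sqrt{2u}$ turns the integral into a Gamma function, giving
\[
\Ex[|Y|^{k}] \le k\,\tau^{k}\, 2^{k/2}\, \Gamma(k/2),
\]
and the functional equation $k\,\Gamma(k/2) = 2\,\Gamma(k/2 + 1)$ rewrites the right-hand side as $\Gamma(k/2 + 1)\cdot 2^{k/2 + 1}\cdot \tau^{k}$.

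It then remains to check $\Gamma(k/2 + 1)\cdot 2^{k/2 + 1} \le \lceil k/2\rceil !\cdot 2^{\lceil k/2\rceil + 1}$. The power of two only increases, since $k/2 \le \lceil k/2\rceil$. For the Gamma factor: if $k$ is even then $\Gamma(k/2 + 1) = (k/2)! = \lceil k/2\rceil !$ exactly; if $k$ is odd then $k/2 + 1 \ge 3/2$ lies to the right of the minimizer of $\Gamma$ on the positive reals (which is $\approx 1.4616$), so $\Gamma$ is increasing there and $\Gamma(k/2 + 1) \le \Gamma(\lceil k/2\rceil + 1) = \lceil k/2\rceil !$. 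Finally $\Ex[Y^{k}] \le \Ex[|Y|^{k}]$ because $Y^{k} \le |Y|^{k}$ pointwise, which gives the claimed bound.

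The calculation is routine; the one place that deserves care is the last step for odd $k$, where one must recall that $\Gamma$ is monotone increasing past its (sub-$3/2$) minimum. An alternative that sidesteps this is to handle even moments $k = 2m$ directly from the elementary inequality $x^{2m} \le \tfrac{(2m)!}{2}(e^{x} + e^{-x})$ (valid since every Taylor coefficient of $\cosh$ is nonnegative): applying it to $\lambda Y$, taking expectations via the subgaussian moment generating function bound, and optimizing over $\lambda$ gives a bound of the same shape, after which odd moments follow from even ones by Cauchy--Schwarz. I expect the tail-integration route to produce the cleaner constants needed to match the stated form.
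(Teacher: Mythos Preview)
The paper does not actually prove Fact~\ref{fact:subgauss-moments}; it is stated without proof alongside Fact~\ref{fact:subgauss-variance}, with a pointer to \cite{BuldyginK00} and \cite{Rivasplata12} for background on subgaussian random variables. So there is no ``paper's own proof'' to compare against.

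Your argument is correct and is essentially the standard one. The layer-cake identity plus the tail bound from Lemma~\ref{lemma:subgaussian-concentration} gives $\Ex[|Y|^{k}] \le 2^{k/2+1}\,\Gamma(k/2+1)\,\tau^{k}$, and your rounding step to replace $\Gamma(k/2+1)\cdot 2^{k/2+1}$ by $\lceil k/2\rceil!\cdot 2^{\lceil k/2\rceil+1}$ is fine: for even $k$ it is an equality, and for odd $k$ both factors only increase (the Gamma factor because $k/2+1\ge 3/2$ already sits to the right of the minimizer of $\Gamma$ on $(0,\infty)$, so $\Gamma(k/2+1)\le\Gamma(\lceil k/2\rceil+1)=\lceil k/2\rceil!$). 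The final passage from $\Ex[|Y|^{k}]$ to $\Ex[Y^{k}]$ via $Y^{k}\le|Y|^{k}$ is immediate. Your alternative sketch via $\cosh$ and Cauchy--Schwarz would also work but, as you note, the tail-integration route lands directly on the constants in the stated form.
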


\begin{lemma}[Sum of Subgaussians]
\label{lemma:subgassian-composition}

Let $X_1,\ldots,X_k$ be (jointly distributed) real-valued random variables such that for every $i \in [k]$, and for every $(x_1,\ldots,x_{i-1}) \in \Supp(X_1,\ldots,X_{k-1})$, it holds that the random variable $(X_i|X_1=x_1,\ldots,X_{i-1}=x_{i-1})$ is $\sub_i$-subgaussian. Then the random variable $\sum_{i \in [k]} X_i$ is $\sub$-subgaussian, where $\sub = \sqrt{\sum_{i \in [k]} \sub_i^2}$.
\end{lemma}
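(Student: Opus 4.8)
The plan is to prove the statement by induction on $k$, controlling the moment generating function of the partial sums via the tower rule for conditional expectations. First I would fix $\lambda \in \Rt$ and write $S_k = \sum_{i\in[k]} X_i$. The goal is to show $\ex[e^{\lambda S_k}] \le e^{\lambda^2 \sub^2 / 2}$ with $\sub^2 = \sum_{i\in[k]} \sub_i^2$; the concentration bounds then follow, but they are not even needed for the statement, so I would stop once the MGF bound is established (the definition of $\sub$-subgaussian is exactly this MGF inequality).

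The base case $k=1$ is immediate from the hypothesis that $X_1$ is $\sub_1$-subgaussian. For the inductive step, condition on $X_1 = x_1, \ldots, X_{k-1} = x_{k-1}$ and peel off the last term:
\[
\ex\left[e^{\lambda S_k}\right] = \ex\left[ e^{\lambda S_{k-1}} \cdot \ex\left[ e^{\lambda X_k} \,\middle|\, X_1,\ldots,X_{k-1}\right] \right].
\]
By hypothesis, for every realization $(x_1,\ldots,x_{k-1})$ in the support, the conditional random variable $(X_k \mid X_1 = x_1, \ldots, X_{k-1} = x_{k-1})$ is $\sub_k$-subgaussian, so the inner conditional expectation is at most $e^{\lambda^2 \sub_k^2 / 2}$ \emph{pointwise} (this holds for the conditional distribution for every fixed $\lambda$). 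Pulling that deterministic bound out gives
\[
\ex\left[e^{\lambda S_k}\right] \le e^{\lambda^2 \sub_k^2/2} \cdot \ex\left[e^{\lambda S_{k-1}}\right] \le e^{\lambda^2 \sub_k^2/2} \cdot e^{\lambda^2 (\sum_{i < k} \sub_i^2)/2} = e^{\lambda^2 \sub^2/2},
\]
where the second inequality is the inductive hypothesis applied to $S_{k-1}$ (whose summands satisfy the same conditional-subgaussianity hypothesis, now for the first $k-1$ indices). Since $\lambda$ was arbitrary, $S_k$ is $\sub$-subgaussian with $\sub = \sqrt{\sum_{i\in[k]} \sub_i^2}$.

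The one point requiring care -- and the only place subtlety enters -- is the justification that the conditional MGF bound $\ex[e^{\lambda X_k}\mid X_1,\ldots,X_{k-1}] \le e^{\lambda^2\sub_k^2/2}$ may be applied inside the outer expectation and that the inductive hypothesis genuinely applies to $S_{k-1}$ with the \emph{same} conditioning structure restricted to indices $1,\ldots,k-1$. This is essentially a bookkeeping matter: the hypothesis is stated so that for each prefix the conditional distribution of the next variable is subgaussian, which is exactly what the tower-rule peeling consumes at each stage; there is no measurability obstacle since everything reduces to a pointwise inequality between nonnegative quantities followed by monotonicity of expectation. I do not anticipate any genuine obstacle beyond stating this carefully.
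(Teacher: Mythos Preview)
Your proposal is correct and follows essentially the same approach as the paper's proof: induction on $k$, peeling off $X_k$ via the tower rule, bounding the inner conditional MGF pointwise by $e^{\lambda^2\sub_k^2/2}$ using the conditional-subgaussian hypothesis, then applying the inductive hypothesis to $S_{k-1}$. The paper writes the conditioning out explicitly over prefixes $(x_1,\ldots,x_{k-1})$ rather than using the conditional-expectation notation, but the argument is identical.
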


\begin{proof}

The proof is by induction over $k$. The base case $k=1$ is immediate. For $k > 1$, for any $\lambda \in \Rt$, we have:
\begin{eqnarray*}
\ex[e^{\lambda \cdot  \sum_{i \in [k]} X_i} ]
& = & \ex_{(x_1,\ldots,x_{k-1}) \sim (X_1,\ldots,X_{k-1})) } \left[ \ex_{X_k} [e^{\lambda \cdot  \sum_{i \in [k]} X_i} | X_1=x_1,\ldots,X_{k-1}=x_{k-1}] \right] \\
& = & \ex_{(x_1,\ldots,x_{k-1}) \sim (X_1,\ldots,X_{k-1})) } \left[ e^{\lambda \cdot \sum_{i \in [k-1]} x_i } \cdot \ex_{X_k} [e^{\lambda \cdot  X_k} | X_1=x_1,\ldots,X_{k-1}=x_{k-1}] \right] \\
& \leq & \ex_{(x_1,\ldots,x_{k-1}) \sim (X_1,\ldots,X_{k-1})) } \left[ e^{\lambda \cdot \sum_{i \in [k-1]} x_i } \cdot e^{\frac{\lambda^2 \cdot \sub_k^2}{2}} \right] \\
& = & e^{\frac{\lambda^2 \cdot \sub_k^2}{2}} \cdot \ex[e^{\lambda \cdot  \sum_{i \in [k-1]} X_i} ] \\
& \leq &  e^{\frac{\lambda^2 \cdot \sub_k^2}{2}} \cdot  e^{\frac{\lambda^2 \cdot \sum_{i \in [k-1]} \sub_i^2}{2}} \\
& = & e^{\frac{\lambda^2 \cdot \sum_{i \in [k]} \sub_i^2}{2}}
\end{eqnarray*}
where the last inequality is by the induction hypothesis.
\end{proof}

The following technical Lemma about the products of (jointly distributed) random variables, one of which is exponential in a subgaussian, will be used extensively in proving group privacy:

\begin{lemma}[Expected Product with Exponential in Subgaussian] \label{lemma:subgauss-product}
Let $X$ and $Y$ be jointly distributed random variables, where $Y$ is $\sub$-Subgaussian for $\tau \leq 1/3$. Then:
\begin{align}
\Ex[X \cdot e^Y] &\leq \Ex[X] + \sqrt{\Ex \left[ X^2 \right]} \cdot (\tau + 3\tau^2)  \nonumber \\
&\leq \Ex[X] + (\sqrt{\Var(X)} + \Ex[X]) \cdot (\tau + 3\tau^2)
\end{align}
\end{lemma}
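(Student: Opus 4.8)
The plan is to bound $\Ex[X \cdot e^Y]$ by splitting the exponential into its linear part and a quadratic-and-higher remainder, controlling the linear part by Cauchy--Schwarz and the subgaussian moment bounds (Fact~\ref{fact:subgauss-moments}), and controlling the remainder by a geometric-series estimate that uses $\tau \le 1/3$. Concretely, write $e^Y = 1 + Y + \sum_{k\ge 2} Y^k/k!$, so that $\Ex[X e^Y] = \Ex[X] + \Ex[XY] + \sum_{k\ge 2}\Ex[XY^k]/k!$. For each term $\Ex[XY^k]$ I would apply Cauchy--Schwarz to get $\Ex[XY^k] \le \sqrt{\Ex[X^2]}\cdot\sqrt{\Ex[Y^{2k}]}$, and then use Fact~\ref{fact:subgauss-moments} to bound $\sqrt{\Ex[Y^{2k}]}$ by roughly $\sqrt{k!\, 2^{k+1}}\,\tau^k$ (taking $\lceil 2k/2\rceil = k$). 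The $k=1$ term gives the leading $\sqrt{\Ex[X^2]}\cdot\tau$ contribution.

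Next I would collect the tail $\sum_{k\ge 2}\Ex[XY^k]/k! \le \sqrt{\Ex[X^2]}\sum_{k\ge 2}\sqrt{2^{k+1}/k!}\,\tau^k$. The key observation is that $\sqrt{2^{k+1}/k!} \le 2^{(k+1)/2}$ is crude but, combined with $\tau \le 1/3$, makes $\sum_{k\ge 2}(\sqrt2\,\tau)^k \cdot\sqrt2$ a convergent geometric-type series whose sum is at most $C\tau^2$ for a modest constant $C$; the claim packages this as $\le 3\tau^2$ (after also absorbing the constant $\sqrt2$ from the $k=1$ moment bound into the linear term, since $\sqrt{\Ex[Y^2]}\le\sqrt{4}\tau$ would be too lossy — one should instead use the sharper $\Ex[Y^2]=\Var(Y)\le\tau^2$ from Fact~\ref{fact:subgauss-variance} for the $k=1$ term so that the linear coefficient is exactly $\tau$, not $2\tau$). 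So the bookkeeping is: $k=1$ term handled by the variance bound giving coefficient $\tau$; $k\ge 2$ terms handled by the moment bound and geometric summation giving an aggregate coefficient $\le 3\tau^2$. This yields the first inequality $\Ex[Xe^Y] \le \Ex[X] + \sqrt{\Ex[X^2]}(\tau + 3\tau^2)$.

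For the second inequality I would simply use $\Ex[X^2] = \Var(X) + \Ex[X]^2$, hence $\sqrt{\Ex[X^2]} \le \sqrt{\Var(X)} + |\Ex[X]| \le \sqrt{\Var(X)} + \Ex[X]$ (the last step assuming, as is the case in the intended applications where $X$ is a nonnegative likelihood-ratio-type quantity, that $\Ex[X]\ge 0$; more safely one writes $|\Ex[X]|$, but the lemma as stated presumes the sign). Substituting gives the stated bound.

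The main obstacle is getting the constants to line up so that the messy tail really is at most $3\tau^2$ under the hypothesis $\tau \le 1/3$: the moment bound in Fact~\ref{fact:subgauss-moments} carries factors $\lceil k/2\rceil!\,2^{\lceil k/2\rceil+1}$ which, after the square root needed for Cauchy--Schwarz on $Y^{2k}$, become $\sqrt{k!}\cdot 2^{(k+1)/2}$, and one must check that $\sum_{k\ge 2} 2^{(k+1)/2}\tau^k/\sqrt{k!}$ (note the $1/\sqrt{k!}$ rather than $1/k!$, since $\Ex[XY^k]/k!$ loses only one factorial but Cauchy--Schwarz brings in $\sqrt{(2k)!}\approx 2^k k!$... so actually the surviving factor is $\sqrt{(2k)!}/k! \approx 2^k\sqrt{k!}/k!$-ish) stays small. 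I would handle this by crudely bounding the ratio $\sqrt{\Ex[Y^{2k}]}/k!$ using the form of the fact directly (with $\lceil 2k/2\rceil = k$ it is $\sqrt{k!\,2^{k+1}}/k! = \sqrt{2^{k+1}/k!}$), so the tail is $\sqrt{\Ex[X^2]}\sum_{k\ge2}\sqrt{2^{k+1}/k!}\,\tau^k$, and then bound $\sqrt{2^{k+1}/k!}\le 2\cdot(2/k!)^{1/2}\cdot 2^{(k-1)/2}$, verifying numerically that at $\tau=1/3$ the sum is comfortably below $3$. This is routine but is the one place where care is needed; everything else is a direct expansion plus Cauchy--Schwarz.
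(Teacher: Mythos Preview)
Your approach is essentially identical to the paper's: Taylor-expand $e^Y$, use Cauchy--Schwarz termwise, handle the $k=1$ term via the variance bound $\Ex[Y^2]\le\tau^2$ (Fact~\ref{fact:subgauss-variance}) to get the coefficient $\tau$, and bound the $k\ge 2$ tail via the moment bound (Fact~\ref{fact:subgauss-moments}) to get $\sqrt{\Ex[X^2]}\sum_{k\ge 2}\sqrt{2^{k+1}/k!}\,\tau^k$, then sum geometrically and use $\sqrt{a+b}\le\sqrt a+\sqrt b$ for the second inequality. The paper cleans up your tail estimate slightly by observing $\sqrt{2^{k+1}/k!}\le 2$ for all $k\ge 2$, so the tail is at most $2\tau^2\sum_{k\ge 0}\tau^k = \frac{2\tau^2}{1-\tau}\le 3\tau^2$ when $\tau\le 1/3$; this replaces your ``verify numerically'' step with a one-line bound.
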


\begin{proof}
Taking the Taylor expansion of $e^Y$ we have:
\begin{align}
\Ex[X \cdot e^Y] &= \Ex \left[X \cdot \left( 1 + Y + \sum_{k=2}^{\infty} \frac{Y^k}{k!} \right) \right] \nonumber \\
&= \Ex[X] + \Ex[X \cdot Y] + \Ex \left[X \cdot \left(\sum_{k=2}^{\infty} \frac{Y^k}{k!} \right) \right] \label{eq:subgauss-product-1}
\end{align}
By the Cauchy-Schwartz inequality:

\begin{align}
\Ex[X \cdot Y] &\leq \sqrt{\Ex[X^2]} \cdot \sqrt{\Ex[Y^2]} \nonumber \\
&\leq  \sqrt{\Ex[X^2]} \cdot \tau  \label{eq:subgauss-product-2}
\end{align}
where the last inequality uses the fact that for a $\sub$-subgaussian RV $Y$, $\Var(Y) \leq \tau^2$ (Fact \ref{fact:subgauss-variance}), and that $\sqrt{a + b} \leq \sqrt{a} + \sqrt{b}$ (for any $a,b \geq 0$). To bound the last summand in Inequality \ref{eq:subgauss-product-1}, we use linearity of the expectation and Cauchy-Schwartz:

\begin{align}
\Ex \left[X \cdot \left(\sum_{k=2}^{\infty} \frac{Y^k}{k!} \right) \right] &= \sum_{k=2}^{\infty} \left( \Ex \left[X \cdot \frac{Y^k}{k!} \right] \right) \nonumber \\
&\leq \sum_{k=2}^{\infty} \sqrt{\Ex \left[ X^2 \right]} \cdot \sqrt{\Ex \left[ \frac{Y^{2k}}{(k!)^2} \right]} \nonumber \\
&= \sqrt{\Ex \left[ X^2 \right]} \cdot \sum_{k=2}^{\infty} \sqrt{\Ex \left[ \frac{Y^{2k}}{(k!)^2} \right]} \nonumber
\end{align}
Using the fact that for any $\tau$-subgaussian distribution $Y$, the $2k$-th moment $\Ex[Y^{2k}]$ is bounded by $\left( (k!) \cdot 2^{k+1} \cdot \tau^{2k} \right)$  (see Fact \ref{fact:subgauss-moments}), we conclude from the above that for $\tau < 1$:
\begin{align}
\Ex \left[X \cdot \left(\sum_{k=2}^{\infty} \frac{Y^k}{k!} \right) \right] &\leq \sqrt{\Ex \left[ X^2 \right]} \cdot \sum_{k=2}^{\infty} \sqrt{\frac{(k!) \cdot 2^{k+1} \cdot \tau^{2k}}{(k!)^2}} \nonumber \\
&= \sqrt{\Ex \left[ X^2 \right]} \cdot \sum_{k=2}^{\infty} \sqrt{ \frac{(k!) \cdot 2^{k+1} \cdot \tau^{2k}}{(k!)^2} } \nonumber \\
&=  \sqrt{\Ex \left[ X^2 \right]} \cdot  \sum_{k=2}^{\infty} \sqrt{ \frac{2^{k+1} \cdot \tau^{2k}}{(k!)} } \nonumber \\
&\leq \sqrt{\Ex \left[ X^2 \right]} \cdot \sum_{k=2}^{\infty} \sqrt{4\tau^{2k}} \nonumber \\
&= \sqrt{\Ex \left[ X^2 \right]} \cdot 2\tau^2 \cdot \sum_{k=0}^{\infty} \tau^k \nonumber \\
&= \sqrt{\Ex \left[ X^2 \right]} \cdot \frac{2\tau^2}{1- \tau} \label{eq:subgauss-product-3}
\end{align}
Putting together Inequalities \eqref{eq:subgauss-product-1}, \eqref{eq:subgauss-product-2}, \eqref{eq:subgauss-product-3},  we conclude that for $\tau \leq 1/2$:
\begin{align}
\Ex[X \cdot e^Y] &\leq  \Ex[X] + \left( \sqrt{\Ex \left[ X^2 \right]} \cdot \tau \right) +\left( \sqrt{\Ex \left[ X^2 \right]} \cdot \frac{2\tau^2}{1- \tau} \right) \nonumber \\
&= \Ex[X] + \sqrt{\Ex \left[ X^2 \right]} \cdot \left( \tau + \frac{2\tau^2}{1-\tau} \right) \nonumber \\
&\leq  \Ex[X] + \sqrt{\Ex \left[ X^2 \right]} \cdot (\tau + 3\tau^2)  \nonumber \\
&\leq  \Ex[X] + (\sqrt{\Var(X)} + \Ex[X]) \cdot (\tau + 3\tau^2)  \nonumber
\end{align}
where the next-to-last inequality holds whenever $\tau \leq 1/3$, and the last inequality is because $\sqrt{\Ex[X^2]} \leq \sqrt{\Var(X)} + \Ex[X]$ (because $\sqrt{a+b} \leq \sqrt{a} + \sqrt{b}$ for any $a,b>0$). \end{proof}

\section{Concentrated Differential Privacy: Definition and Properties}
\label{sec:CDP}

\begin{definition}[Privacy Loss Random Variable $\Loss_{(Y||Z)}$]
\label{def:privacy-loss}

For two discrete random variables $Y$ and $Z$, the {\em privacy loss random variable} $\Loss_{(Y||Z)}$, whose range is $\Rt$, is distributed by drawing $y \sim Y$, and outputting $\ln ({\Pr[Y=y]}/{\Pr[Z=y]})$. In particular, the expectation of $\Loss_{Y||Z}$ is equal to $\DKL(Y||Z)$.
If the supports of $Y$ and $Z$ aren't equal, then the privacy loss random variable is not defined.
\end{definition}

We study the privacy loss random variable, focusing on the case where this random variable is tightly concentrated around its expectation. In particular, we will be interested in the case where the privacy loss (shifted by its expectation) is subgaussian.

\begin{definition}[Subgaussian Divergence and Indistinguishability]
\label{def:SubGaus-divergence}

For two random variables $Y$ and $Z$, we say that $\DsubG(Y||Z) \preceq (\mu,\sub)$ if and only if:

\begin{enumerate}

\item
$\ex[\Loss_{(Y||Z)}] \leq \mu$

\item
The (centered) distribution $(\Loss_{(Y||Z)} - \ex[\Loss_{(Y||Z)}])$ is defined and subgaussian, and its subgaussian parameter is at most $\sub$.

\end{enumerate}

If we have both $\DsubG(Y||Z) \preceq (\mu,\sub)$ and $\DsubG(Z||Y) \preceq (\mu,\sub)$, then we say that the pair of random variables $X$ and $Y$ are {\em $(\mu,\sub)$-subgaussian-indistinguishable}.
\end{definition}

\begin{definition}[$(\mu,\sub)$-Concentrated Differential Privacy ($(\mu,\sub)$-CDP)]
\label{def:CDP}

A randomized algorithm $\M$ is {\em $(\mu,\sub)$-concentrated differentially private}
if for all pairs of adjacent databases $\dba,\dbb$, we have $\DsubG(\M(\dba)||\M(\dbb)) \preceq (\mu,\sub)$.
\end{definition}

\gnote{Eventually: add also $(\mu,\sub,\delta)$-CDP}

\begin{corollary}[Concentrated Privacy Loss]
\label{cor:CDP-concentration}

For every $(\mu,\sub)$-CDP algorithm $\M$, for all pairs of adjacent databases $\dba,\dbb$, taking $Y$ to be the distribution of $\M(\dba)$, and $Z$ to be the distribution of $\M(\dbb)$:
$$\Pr[\Loss_{(Y||Z)}  \geq \mu + (t \cdot \sub)] \leq \exp(-\frac{t^2}{2})$$

\end{corollary}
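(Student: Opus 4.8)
The plan is to reduce Corollary~\ref{cor:CDP-concentration} to the subgaussian concentration bound already established in Lemma~\ref{lemma:subgaussian-concentration}. By Definition~\ref{def:CDP}, the hypothesis that $\M$ is $(\mu,\sub)$-CDP means that for the adjacent databases $\dba,\dbb$ at hand we have $\DsubG(\M(\dba)\|\M(\dbb)) \preceq (\mu,\sub)$, so by Definition~\ref{def:SubGaus-divergence} the privacy loss random variable $\Loss_{(Y\|Z)}$ has $\ex[\Loss_{(Y\|Z)}] \le \mu$, and the centered variable $\xi := \Loss_{(Y\|Z)} - \ex[\Loss_{(Y\|Z)}]$ is subgaussian with parameter at most $\sub$.

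First I would fix $t > 0$ and write $\ex[\Loss_{(Y\|Z)}] = \mu_0 \le \mu$. Then the event $\{\Loss_{(Y\|Z)} \ge \mu + t\sub\}$ is contained in the event $\{\xi \ge (\mu - \mu_0) + t\sub\} \subseteq \{\xi \ge t\sub\}$, since $\mu - \mu_0 \ge 0$. Applying Lemma~\ref{lemma:subgaussian-concentration} to $\xi$ (which is $\sub'$-subgaussian for some $\sub' \le \sub$, hence also $\sub$-subgaussian, since the subgaussian condition only becomes easier to satisfy as the parameter grows) gives $\Pr[\xi \ge t\sub] \le e^{-t^2/2}$, and monotonicity of probability finishes the bound. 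The case $t = 0$ is trivial since probabilities are at most $1 = e^{0}$.

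The only mild subtlety — hardly an obstacle — is handling the slack between the true mean $\mu_0$ and the upper bound $\mu$, and the slack between the true subgaussian parameter and $\sub$; both go in the favorable direction, so the containment argument above absorbs them. One should also note that Lemma~\ref{lemma:subgaussian-concentration} as stated requires $\sub > 0$; if $\sub = 0$ then $\xi \equiv 0$ almost surely and the claim reads $\Pr[\Loss_{(Y\|Z)} \ge \mu] \le 1$, which holds trivially, so this degenerate case can be dispatched separately. Overall the proof is a two-line invocation of the preliminaries, and I expect the write-up to be correspondingly short.
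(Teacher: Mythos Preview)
Your proposal is correct and follows exactly the paper's own approach: invoke Definition~\ref{def:CDP} (via Definition~\ref{def:SubGaus-divergence}) to obtain that the centered privacy loss is $\sub$-subgaussian with mean at most $\mu$, then apply Lemma~\ref{lemma:subgaussian-concentration}. Your extra care with the slack $\mu-\mu_0\ge 0$, the monotonicity of the subgaussian parameter, and the degenerate cases $t=0$, $\sub=0$ is more than the paper spells out, but the argument is the same.
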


\begin{proof} Follows from Definition \ref{def:CDP} and the concentration properties of subgaussian random variables (Lemma \ref{lemma:subgaussian-concentration}). \end{proof}

\gnote{wrote as one-sided guarantee (seems fine). Can we improve by $1/t$?}

\paragraph{Guassian Mechanism Revisited.} We revisit the Gaussian noise mechanism of \cite{DworkMNS06}, giving a tight characterization of the privacy loss random variable.

\begin{theorem}[Gaussian is CDP]
\label{thm:Gauss-CDP}
Let $f: \dba \rightarrow \Rt$ be any real-valued function with sensitivity $\Delta(f)$. Then the Gaussian mechanism with noise magnitude $\sigma$ is $(\sub^2/2,\sub)$-CDP, where $\sub=\Delta(f)/\sigma$.
\end{theorem}

Later, we will prove (Theorem \ref{thm:DP-to-CDP}) that {\em every} pure differentially private mechanism also enjoys concentrated differential privacy.  The Gaussian mechanism is different, as it only ensures $(\eps,\delta)$-differential privacy for $\delta > 0$.

\begin{proof}[of Theorem~\ref{thm:Gauss-CDP}]
Let $\M$ be the Gaussian mechanism with noise magnitude $\sigma$. Let $\DB,\DB'$ be adjacent databases, and suppose w.l.o.g that $f(\DB) = f(\DB')+\Delta f$. We examine the privacy loss random variable obtained by drawing a noise magnitude $x \sim {\cal N}(0, \sigma^2)$ and outputting:
\begin{eqnarray*}
\ln \frac{\Pr[\M(\DB) = (f(\DB) + x)]}{\Pr[\M(\DB')=(f(\DB) + x)]} & = & \ln \frac{e^{(-1/2\sigma^2) \cdot x^2} }{e^{(-1/2\sigma^2) \cdot (x + { \Delta f})^2}} \\
&=& \ln e^{(-1/2\sigma^2) \cdot \left[ x^2 - (x + \Delta f)^2 \right]}\\
&=& - \frac{1}{2\sigma^2} \cdot \left[ x^2 - (x^2 + 2x \cdot \Delta f + (\Delta f)^2) \right] \\
&=& - \frac{1}{2\sigma^2} \cdot \left[ -2x \cdot \Delta f - (\Delta f)^2 \right] \\
&=& \left( \frac{\Delta f}{\sigma} \cdot \frac{x}{\sigma} \right) + \frac{1}{2} \left( \frac{\Delta f}{\sigma} \right)^2
\end{eqnarray*}

Since $x \sim {\cal N}(0, \sigma^2)$, we conclude that the distribution of the privacy loss random variable $\Loss_{(U||V)}$ is also Gaussian, with expectation $\left( {\Delta f}/{\sigma} \right)^2 / 2$, and standard deviation ${\Delta f}/{\sigma}$. Taking $\sub = {\Delta f}/{\sigma}$, we get that $\DsubG(\M(\DB)||\M(\DB')) \preceq (\sub^2/2, \sub)$.
\end{proof}

As noted in the Introduction, it is a consequence of Theorem~\ref{thm:Gauss-CDP} that we can achieve $(\eps(e^\eps - 1)/2,\eps)$-\cdp{} by adding independent random noise drawn from ${\cal N}(0,n/\eps^2)$ to each query.  If we further relax to $(\eps,\sqrt{\eps})$-cdp, we can add even smaller noise, of magnitude $(\sqrt{1/\eps})$. This would make sense in settings where we expect further composition, and so we can focus on the expected privacy loss (bounding it by $\eps$) and allow more slackness in the standard deviation. For small $\eps$, this gives another order of magnitude improvement in the amount of distortion introduced to protect privacy.

Finally, we observe that the bounds for group \cdp{} of the Gaussian mechanism follow immediately from from Theorem \ref{thm:Gauss-CDP}, noting that for a group of size~$s$ the group sensitivity of a function~$f$ is at most $s \cdot \Delta f$.

\begin{corollary}[Group CDP for the Gaussian Mechanism]
The Gaussian mechanism with noise magnitude $\sigma$ satisfies $((s\Delta f/\sigma)^2/2,s\Delta f)$-\cdp .
\end{corollary}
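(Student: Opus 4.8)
The plan is to deduce the corollary directly from Theorem~\ref{thm:Gauss-CDP} via the standard observation that, for a real-valued $f$, the ``group sensitivity'' over a group of size~$s$ is at most $s\cdot\Delta f$. First I would fix an arbitrary pair of databases $\dba,\dbb$ differing in at most $s$ rows and exhibit a chain $\dba = \dbc_0,\dbc_1,\dots,\dbc_s = \dbb$ in which consecutive databases are adjacent (allowing repeated entries so the chain has length exactly $s$ even when the edit distance is strictly smaller). By the triangle inequality together with the single-row sensitivity bound applied to each step, $\Delta' \eqdef |f(\dba)-f(\dbb)| \leq \sum_{i=1}^{s}|f(\dbc_{i-1})-f(\dbc_i)| \leq s\cdot\Delta f$.

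Next I would rerun the calculation in the proof of Theorem~\ref{thm:Gauss-CDP} verbatim, but with the single-row gap $\Delta f$ replaced by $\Delta'$: as there, $\Loss_{(\M(\dba)\|\M(\dbb))}$ is exactly Gaussian with mean $(\Delta'/\sigma)^2/2$ and standard deviation $\Delta'/\sigma$, and by symmetry the same holds for the reversed pair, so $\DsubG(\M(\dba)\|\M(\dbb)) \preceq ((\Delta'/\sigma)^2/2,\ \Delta'/\sigma)$ and likewise for $\DsubG(\M(\dbb)\|\M(\dba))$. Then I would invoke monotonicity to absorb the slack $\Delta'\leq s\Delta f$: the mean $(\Delta'/\sigma)^2/2$ is nondecreasing in $\Delta'$, and a centered Gaussian of smaller variance is subgaussian with a smaller standard, so with $\sub \eqdef s\Delta f/\sigma$ we obtain $\DsubG(\M(\dba)\|\M(\dbb)) \preceq (\sub^2/2,\ \sub)$. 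Since $\dba,\dbb$ were an arbitrary pair at distance $\leq s$, this is precisely the asserted $(\sub^2/2,\sub)$-\cdp{} bound for groups of size~$s$ (with $\sub = s\Delta f/\sigma$, matching the statement up to writing the standard as $s\Delta f/\sigma$).

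I do not anticipate a real obstacle here; the content is entirely a reduction to Theorem~\ref{thm:Gauss-CDP}. The only points that require a little care are (i) the padding argument, so the adjacency chain has exactly length $s$ regardless of the true edit distance, and (ii) making the monotonicity step explicit, so that the inequality $\Delta'\leq s\Delta f$ is genuinely absorbed into worst-case parameters rather than silently assuming $\Delta' = s\Delta f$.
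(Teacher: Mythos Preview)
Your proposal is correct and is exactly the paper's approach: the paper simply remarks that the corollary ``follows immediately from Theorem~\ref{thm:Gauss-CDP}, noting that for a group of size~$s$ the group sensitivity of a function~$f$ is at most $s\cdot\Delta f$,'' and you have just spelled out the details of that one-line argument (the chain/triangle-inequality bound on $|f(\dba)-f(\dbb)|$ and the monotonicity in $\Delta'$). Your parenthetical about the standard being $s\Delta f/\sigma$ rather than $s\Delta f$ is also right; that is a typo in the corollary's statement.
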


\subsection{Composition}
Concentrated Differential Privacy composes ``as well as'' standard differential privacy. Indeed, a primary advantage of CDP is that it permits greater accuracy and smaller noise, with essentially no loss in privacy under composition. In this section we prove these composition properties. We follow the formalization in \cite{DworkRV10} \gnote{missing reference: DworkJ12} in modeling composition. Composition covers both repeated use of (various) CDP algorithms on the same database, which allows modular construction of CDP algorithms, and repeated use of (various) CDP algorithms on different databases that might contain information pertaining to the same individual. In both of these scenarios, the improved accuracy of CDP algorithms can provide greater utility for the same ``privacy budget''.

Composition of $k$ CDP mechanisms (over the same database, or different databases) is formalized by a sequence of pairs of random variables $(U,V)= ((U^{(1)},V^{(1)}),\ldots,(U^{(k)},V^{(k)}))$. The random variables are the outcomes of adversarially and adaptively chosen CDP mechanisms $\M_1,\ldots,\M_k$. In the $U$ sequence (reality), the random variable $U^{(i)}$ is sampled by running mechanism $\M_i$ on a database (of the adversary's choice) containing an individual's, say Bob's, data. In the $V$ sequence (alternative reality), the random variable $V^{(i)}$ is sampled by running mechanism $\M_i$ on the same database, but where Bob's data are replaced by (adversarially chosen) data belonging to a different individual, Alice. The requirement is that even for adaptively and adversarially chosen mechanisms and database-pairs, the outcome of $U$ (Bob-reality) and $V$ (Alice-reality) are ``very close'', and in particular the privacy loss $\Loss_{(U||V)}$ is subgaussian.

In more detail, and following~\cite{DworkRV10}, we define a game in which a dealer flips a fair coin to choose between symbols $U$ and $V$, and an adversary adaptively chooses a sequence of pairs of adjacent databases $(x_i^U,x_i^V)$ and a mechanism $\M_i$ enjoying $(\mu_i,\sub_i)$-CDP and that will operate on either the left element (if the dealer chose~$U$) or the right element (if the dealer chose~$V$) of the pair, and return the output, for $1 \le i \le k$.  The adversary's choices are completely adaptive and hence may depend not only on arbitrary external knowledge but also on what has been observed in steps $1, \dots, i-1$.  The goal of the adversary is to maximize privacy loss.  It is framed as a game because large privacy loss is associated with an increased ability to determine which of $(U,V)$ was selected by the dealer, and we imagine this to be the motivation of the adversary.


\begin{theorem}[Composition of CDP]
\label{thm: composition of CDP}
For every integer $k \in \Nt$, every $\mu_1, \dots, \mu_k, \sub_1, \dots, \sub_k \geq 0$, and $$(U,V) = ((U^{(1)},V^{(1)}),\ldots,(U^{(k)},V^{(k)}))$$ constructed as in the game described above, we have that $\DsubG(U||V) \preceq (\sum_{i=i}^k \mu_i, (\sum_{i=1}^k \sub_i^2)^{1/2})$.
\end{theorem}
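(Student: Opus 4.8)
The plan is to decompose the privacy loss of the composed experiment into a sum of per-round privacy losses, verify that each summand is subgaussian with the right parameter \emph{conditioned on the history}, and then invoke Lemma~\ref{lemma:subgassian-composition} (Sum of Subgaussians). The key observation is that if $v = (v^{(1)}, \dots, v^{(k)})$ is a particular transcript, then by the chain rule for probabilities,
\[
\Loss_{(U \| V)}^{(v)} = \ln \frac{\Pr[U = v]}{\Pr[V = v]} = \sum_{i=1}^k \ln \frac{\Pr[U^{(i)} = v^{(i)} \mid U^{(1)} = v^{(1)}, \dots, U^{(i-1)} = v^{(i-1)}]}{\Pr[V^{(i)} = v^{(i)} \mid V^{(1)} = v^{(1)}, \dots, V^{(i-1)} = v^{(i-1)}]}.
\]
So if we define the random variable $X_i$ by drawing the whole transcript $v \sim U$ and outputting the $i$-th log-ratio above evaluated at $v$, then $\Loss_{(U\|V)} = \sum_{i=1}^k X_i$ as jointly distributed random variables.

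First I would set up this decomposition carefully, being explicit about the adaptive structure: once the prefix $v^{(1)}, \dots, v^{(i-1)}$ is fixed, the adversary's choice of mechanism $\M_i$ and database pair $(x_i^U, x_i^V)$ is determined, and $\M_i$ is $(\mu_i, \sub_i)$-CDP. Hence, conditioned on $U^{(1)} = v^{(1)}, \dots, U^{(i-1)} = v^{(i-1)}$, the random variable $X_i$ is exactly the privacy loss random variable $\Loss_{(\M_i(x_i^U) \| \M_i(x_i^V))}$ — here it is important that in the $U$-experiment the $i$-th outcome is drawn from $\M_i(x_i^U)$, matching the numerator, so $X_i$ conditioned on the prefix is distributed as the privacy-loss variable sampled ``the right way round.'' By the definition of CDP (Definition~\ref{def:CDP}) and subgaussian divergence (Definition~\ref{def:SubGaus-divergence}), this conditional distribution has expectation at most $\mu_i$ and, after centering, is $\sub_i$-subgaussian.

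Next I would separate the mean from the fluctuation. Write $X_i = \mu_i(v^{(<i)}) + \xi_i$, where $\mu_i(v^{(<i)}) = \ex[X_i \mid U^{(<i)} = v^{(<i)}] \le \mu_i$ and $\xi_i$ is the centered part, which conditioned on any prefix is $\sub_i$-subgaussian. Then $\Loss_{(U\|V)} = \sum_i \mu_i(U^{(<i)}) + \sum_i \xi_i$. For the expectation: $\ex[\Loss_{(U\|V)}] = \sum_i \ex[\mu_i(U^{(<i)})] \le \sum_i \mu_i$, giving the first coordinate. For the subgaussian bound on the centered loss, I would note that $\Loss_{(U\|V)} - \ex[\Loss_{(U\|V)}] = \sum_i \xi_i + \sum_i (\mu_i(U^{(<i)}) - \ex[\mu_i(U^{(<i)})])$; the cleanest route is to instead apply Lemma~\ref{lemma:subgassian-composition} directly to the sequence $(\xi_1, \dots, \xi_k)$ (which satisfies its hypothesis: each $\xi_i$ is $\sub_i$-subgaussian conditioned on any value of $\xi_1, \dots, \xi_{i-1}$, since that is implied by conditioning on the full prefix) to conclude $\sum_i \xi_i$ is $(\sum_i \sub_i^2)^{1/2}$-subgaussian. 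One then has to reconcile this with the fact that $\Loss_{(U\|V)} - \ex[\Loss_{(U\|V)}]$ differs from $\sum_i \xi_i$ by the mean-fluctuation term — but in fact the slickest fix is to observe that since each $X_i$ conditioned on the prefix is $\sub_i$-subgaussian \emph{after subtracting its conditional mean}, we may re-center using the \emph{conditional} means, and Lemma~\ref{lemma:subgassian-composition}'s hypothesis is about conditional-on-prefix distributions, so it applies to $\{\xi_i\}$ verbatim, yielding that $\sum_i \xi_i$ is $(\sum \sub_i^2)^{1/2}$-subgaussian; then since $\Loss_{(U\|V)} = \sum_i \mu_i(U^{(<i)}) + \sum_i \xi_i$ and $\sum_i \mu_i(U^{(<i)})$ is a bounded-mean quantity, one needs a small additional argument (or a direct MGF computation mirroring Lemma~\ref{lemma:subgassian-composition}'s induction but carrying the conditional means along) to get the centered \emph{total} loss subgaussian with the same parameter.

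\textbf{The main obstacle} I anticipate is precisely this last point: the conditional means $\mu_i(U^{(<i)})$ are themselves random (they depend on the adaptively-chosen prefix), so the decomposition ``centered total loss $=$ sum of centered per-round losses'' is not exact. The clean resolution is to run the induction of Lemma~\ref{lemma:subgassian-composition} directly on the MGF of $\Loss_{(U\|V)} = \sum X_i$: condition on the prefix, peel off $X_k$ using $\ex[e^{\lambda X_k} \mid \text{prefix}] \le e^{\lambda \mu_k(\text{prefix})} e^{\lambda^2 \sub_k^2/2} \le e^{\lambda \mu_k} e^{\lambda^2 \sub_k^2/2}$ for $\lambda \ge 0$ (and the analogous bound for $\lambda < 0$ using $\mu_k(\text{prefix}) \le \mu_k$ still holds in the direction needed once we track signs carefully, or one simply works with the centered variables throughout and absorbs the mean separately), and induct. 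This shows $\ex[e^{\lambda(\Loss_{(U\|V)})}] \le e^{\lambda \sum \mu_i} e^{\lambda^2 (\sum \sub_i^2)/2}$, i.e. $\Loss_{(U\|V)} - \sum \mu_i$ has a subgaussian MGF with parameter $(\sum \sub_i^2)^{1/2}$; combined with $\ex[\Loss_{(U\|V)}] \le \sum \mu_i$, a short argument (shifting by the true mean only decreases the variance proxy) gives exactly $\DsubG(U\|V) \preceq (\sum_i \mu_i, (\sum_i \sub_i^2)^{1/2})$, which is the claim.
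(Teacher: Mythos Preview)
Your approach is the same as the paper's: decompose the privacy loss via the chain rule into per-round terms $C_i$, observe that each $C_i$ conditioned on the prefix is the privacy-loss variable of the $(\mu_i,\tau_i)$-CDP mechanism $\M_i$ on the chosen adjacent pair, bound the mean by linearity, and invoke Lemma~\ref{lemma:subgassian-composition} for the subgaussian standard. The paper does exactly this, and in fact rather tersely: it simply writes $\ex[C_i]=\mu_i$ and applies the lemma to $\sum_i (C_i-\ex[C_i])$.

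The subtlety you flag---that the conditional mean $\mu_i(v^{(<i)})$ is itself random, so $C_i-\ex[C_i]$ is \emph{not} mean-zero conditioned on the prefix and hence does not literally satisfy the hypothesis of Lemma~\ref{lemma:subgassian-composition}---is real, and the paper's proof glosses over it. You are being more careful than the paper here. Your proposed fix (run the MGF induction directly, peeling off one round at a time and using $\ex[e^{\lambda X_i}\mid\text{prefix}]\le e^{\lambda\mu_i(\text{prefix})+\lambda^2\tau_i^2/2}$) is the right idea and gives the clean bound $\ex[e^{\lambda\Loss_{(U\|V)}}]\le e^{\lambda\sum_i\mu_i+\lambda^2\sum_i\tau_i^2/2}$ for $\lambda\ge 0$. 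For $\lambda<0$ the step $e^{\lambda\mu_i(\text{prefix})}\le e^{\lambda\mu_i}$ reverses; one instead uses $\mu_i(\text{prefix})\ge 0$ (nonnegativity of KL divergence) to get $\ex[e^{\lambda\Loss}]\le e^{\lambda^2\sum\tau_i^2/2}$. Your final remark that ``shifting by the true mean only decreases the variance proxy'' then still requires justification: from the two one-sided MGF bounds you get subgaussianity of $\Loss-\sum\mu_i$ only for $\lambda\ge 0$ and of $\Loss$ only for $\lambda\le 0$, and the true mean $\ex[\Loss]$ sits somewhere in $[0,\sum\mu_i]$. Closing this last gap needs an extra observation (for instance, exploiting that genuine privacy-loss variables satisfy $\ex[e^{-\Loss}]=1$, which forces each $\mu_i(\text{prefix})\le\tau_i^2/2$ and makes the mean fluctuation second-order). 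The paper does not address this either; you have gone further than it does.
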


\begin{proof}
Consider the random variables $U$ and $V$ defined above, and the privacy loss random variable $\Loss_{(U||V)}$. This random variable is obtained by picking $\by \sim U$ and outputting $\ln \frac{\Pr[U=\by]}{\Pr[V=\by]}$.

The mechanism and datasets chosen by the adversary at step~$i$ depend on the adversary's view at that time.  The adversary's view comprises its randomness and the outcomes it has observed thus far.  Letting $R_U$ and $R_V$ denote the randomness in the $U$-world and $V$-world respectively, we have, for any $\by = (y_1,\ldots,y_k) \in \Supp(U)$ and random string $r$ 
\begin{eqnarray*}
\ln \frac{\Pr[U=\by]}{\Pr[V=\by]} & = & \ln \left(
\frac{\Pr[R_U = r]}{\Pr[R_V = r]} \cdot
 \frac{\prod_{i \in [k]} \Pr[U^{(i)}=y_i | U^{(i-1)}=y_{i-1},\ldots, U^{(1)}=y_1]}{\prod_{i \in [k]} \Pr[V^{(i)}=y_i | V^{(i-1)}=y_{i-1},\ldots, V^{(1)}=y_1]} \right) \\
& = & \sum_{i \in [k]} \ln \frac{\Pr[U^{(i)}=y_i|U^{(i-1)}=y_{i-1},\ldots, U^{(1)}=y_1]}{\Pr[V^{(i)}=y_i|V^{(i-1)}=y_{i-1},\ldots, V^{(1)}=y_1]} \\
 & \triangleq &  \sum_{i \in [k]} c_i(r,y_1,\ldots,y_i) \, .
\end{eqnarray*}

Now for every prefix $(r,y_1,\ldots,y_{i-1})$ we condition on $R_U=r,U_1=y_1,\ldots,U_{i-1}=y_{i-1}$, and analyze the the random variable $c_i(R_U,U_1,\ldots,U_i)=c_i(r,y_1,\ldots,y_{i-1},U^{(i)}$. Once the prefix is fixed, the next pair of databases $x_i^U$ and $x_i^V$ and the mechanism $\M_i$ 
output by the advesary are also determined. Thus $U_i$ is distributed according to $\M_i(x_i^U)$ and for any value $y_i$, we have $$c_i (r,y_1,\ldots,y_{i-1},y_i)
= \ln\left(\frac{\Pr[\M_i(x_i^U)=y_i]}{\Pr[\M_i(x_i^V)=y_i]}\right)$$ which is simply the privacy loss when $\M_i(x_i^U)=y_i$.  By the CDP properties of $\M_i$, $\Loss_{(\M_i(x_i^U) || \M_i(x_i^V))}$ is $(\mu_i,\sub_i)$ subgaussian.

By the subgaussian properties of the random variables $C_i = c_i(r,U^{(1)},\ldots,U^{(i)})$, we have that $\Loss_{(U||V)} = \sum_{i \in [k]} C_i$, i.e. the privacy loss random variable equals the sum of the $C_i$ random variables.  By linearity of expectation, we conclude that:
\begin{equation*}
\ex[\Loss_{(U||V)}] = \ex[\sum_{i \in [k]} C_i] = \sum_{i \in [k]} \ex[C_i]  = \sum_{i in [k]} \mu_i \,
\end{equation*}
and by Lemma \ref{lemma:subgassian-composition}, we have that the random variable:
\begin{equation*}
(\Loss_{(U||V)} - \ex[\Loss_{(U||V)}]) = \sum_{i \in [k]} (C_i - \ex[C_i])
\end{equation*}
is $\left(\sum_{i \in [k]} \tau_i^2 \right)^{1/2}$-subgaussian.
\end{proof}

\remove{ 
A view of the adversary $A$ consists of a tuple of the form
$v=(r,y_1,\ldots,y_k)$, where $r$ is the coin tosses of $A$ and
$y_1,\ldots,y_k$ are the outputs of the mechanisms
$\M_1,\ldots,\M_k$. Let $$B = \{ v : \Pr[V^0=v] > e^{\eps'}\cdot
\Pr[V^1=v]\}.$$ We will show that $\Pr[V^0\in B] \leq \delta$, and
hence for every set $S$, we have
$$\Pr[V^0\in S] \leq \Pr[V^0\in B] + \Pr[V^0\in (S\setminus B)] \leq \delta+e^{\eps'}\cdot \Pr[V^1\in S].$$
This is equivalent to saying that $D^\delta_\infty(V^0||V^1)\leq
\eps'$.

It remains to show $\Pr[V^0\in B]\leq \delta.$ Let random variable
$V^0=(R^0,Y^0_1,\ldots,Y^0_k)$ denote the view of $A$ in Experiment
0 and $V^1=(R^1,Y^1_1,\ldots,Y^1_k)$ the view of $A$ in Experiment
1.  Then for a fixed view $v=(r,y_1,\ldots,y_k)$, we have
\begin{eqnarray*}
\ln\left(\frac{\Pr[V^0=v]}{\Pr[V^1=v]}\right) \\
~~~~&=& \ln
\left(\frac{\Pr[R^0=r]}{\Pr[R^1=r]}\cdot \prod_{i=1}^k
\frac{\Pr[Y^0_i=y_i | R^0=r,Y^0_1=y_1,\ldots,Y^0_{i-1}=y_{i-1}]}{\Pr[Y^1_i=y_i | R^1=r,Y^1_1=y_1,\ldots,Y^1_{i-1}=y_{i-1}]}\right)\\
&=& \sum_{i=1}^k \ln \left(\frac{\Pr[Y^0_i=y_i | R^0=r,Y^0_1=y_1,\ldots,Y^0_{i-1}=y_{i-1}]}{\Pr[Y^1_i=y_i | R^1=r,Y^1_1=y_1,\ldots,Y^1_{i-1}=y_{i-1}]}\right)\\
&\eqdef& \sum_{i=1}^k c_i(r,y_1,\ldots,y_i).
\end{eqnarray*}
Now for every prefix $(r,y_1,\ldots,y_{i-1})$ we condition on
$R^0=r,Y^0_1=y_1,\ldots,Y^0_{i-1}=y_{i-1}$, and analyze the
expectation and maximum possible value of the random variable
$c_i(R^0,Y_1^0,\ldots,Y_i^0)=c_i(r,y_1,\ldots,y_{i-1},Y_i^0)$.
Once the prefix is fixed, the next pair of databases $x_i^0$ and
$x_i^1$, the mechanism $\M_i$, and parameter $w_i$ output by $A$ are also determined (in
both Experiment 0 and 1). Thus $Y_i^0$ is distributed according to
$\M_i(w_i,x_i^0)$.  Moreover for any value $y_i$, we have
$$c_i (r,y_1,\ldots,y_{i-1},y_i)
= \ln\left(\frac{\Pr[\M_i(w_i,x_i^0)=y_i]}{\Pr[\M_i(w_i,x_i^1)=y_i]}\right).$$
By $\eps$-differential privacy this is bounded by $\eps$. We can
also reason as follows:
\begin{eqnarray*}\left|c_i(r,y_1,\ldots,y_{i-1},y_i)\right| &\leq&
\max\left\{D_\infty(\M_i(w_i,x_i^0)||\M_i(w_i,x_i^1)),D_\infty(\M_i(w_i,x_i^1)||\M_i(w_i,x_i^0))\right\}\\
&=& \eps.
\end{eqnarray*}
By Lemma~\ref{lemma:expected-confidence}, we have:
\begin{eqnarray*}
\ex\left[c_i(R^0,Y_1^0,\ldots,Y_i^0)|R^0=r,Y^0_1=y_1,\ldots,Y^0_{i-1}=y_{i-1}\right]
&=& D(\M_i(w_i,x_i^0)||\M_i(w_i,x_i^1))\\ &\leq& \eps(e^\eps -1).
\end{eqnarray*}
Thus we can apply Azuma's Inequality to the random variables $C_i=c_i(R^0,Y_1^0,\ldots,Y_i^0)$ with $\alpha=\eps$, $\beta=\eps\cdot \eps_0$, and $z=\sqrt{2\ln(1/\delta)}$, to deduce that $$\Pr[V^0\in B]  = \Pr\left[\sum_i C_i > \eps'\right] < e^{-z^2/2} = \delta,$$
as desired.
} 

\subsection{Relationship to DP}
\label{sec: relationship to DP}

In this section, we explore the relationship between differential privacy and concentrated differential. We show that {\em any} differentially private algorithm is also concentrated differentially private. Our main contribution here is a refined upper bound on the {\em expected} privacy loss of differentially private algorithms: we show that if $\M$ is $\eps$-DP, then its expected privacy loss is only (roughly) $\eps^2/2$ (for small enough $\eps$). We also show that the privacy loss random variable for {\em any} $\eps$-DP algorithm is subgaussian, with parameter $\sub=O(\eps)$:

\begin{theorem}[DP $\Rightarrow$ CDP]
\label{thm:DP-to-CDP}
Let $\M$ be any $\eps$-DP algorithm. Then $\M$ is $(\eps \cdot (e^{\eps}-1)/2 , \eps)$-CDP.
\end{theorem}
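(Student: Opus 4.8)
Unwinding Definitions~\ref{def:CDP} and~\ref{def:SubGaus-divergence}, the statement amounts to this: for every pair of adjacent databases $x,y$, writing $Y=\M(x)$, $Z=\M(y)$ and $L=\Loss_{(Y\|Z)}$ for the privacy loss random variable, we must establish (i) $\ex[L]=\DKL(Y\|Z)\le \eps(e^{\eps}-1)/2$, and (ii) the centered variable $\xi=L-\ex[L]$ is $\eps$-subgaussian. The one structural fact used throughout is that $\eps$-DP, which holds in both directions since adjacency is symmetric, forces $\Supp(Y)=\Supp(Z)$ and, for every outcome $o$ in that common support, $e^{-\eps}\le \Pr[Y=o]/\Pr[Z=o]\le e^{\eps}$; hence $L\in[-\eps,\eps]$ with probability $1$.

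For part (i) I would follow the route sketched in the Introduction: reduce an arbitrary pair with $\Dmax\le\eps$ in both directions to an \MD{} pair --- one whose privacy loss on every outcome lies in $\{-\eps,0,\eps\}$ --- without decreasing $\DKL$ and without changing $\Dmax$, and then bound $\DKL$ for \MD{} pairs by direct computation. For the reduction, split each outcome $o$ on which $L$ equals some $\ell\in(0,\eps)$ into two outcomes: one carrying privacy loss exactly $\eps$ and $Y$-mass $a=(\Pr[Y=o]-\Pr[Z=o])/(1-e^{-\eps})$, the other carrying loss $0$, its $Y$- and $Z$-masses being equal by the choice of $a$; handle $\ell\in(-\eps,0)$ symmetrically and leave $\ell\in\{-\eps,0,\eps\}$ untouched. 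One checks that $a$ is feasible precisely because $0<\ell<\eps$, that this leaves $\Dmax$ equal to $\eps$ in both directions, and that it does not decrease $\DKL$ because $t\mapsto(1-e^{-t})/t$ is nonincreasing on $(0,\infty)$, so the new contribution $a\eps$ dominates the old contribution $\Pr[Y=o]\,\ell$. For an \MD{} pair, let $p$ be the $Y$-mass on the loss-$\eps$ outcomes; requiring $Z$ to be a probability distribution forces the $Y$-mass on the loss-$(-\eps)$ outcomes to be exactly $pe^{-\eps}$, whence $\DKL(Y\|Z)=\eps\,p\,(1-e^{-\eps})$, and the feasibility constraint $p(1+e^{-\eps})\le 1$ gives $\DKL(Y\|Z)\le \eps(e^{\eps}-1)/(e^{\eps}+1)\le\eps(e^{\eps}-1)/2$. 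A slicker but less tight alternative: since the supports agree, $\ex_{o\sim Y}[e^{-L}]=\sum_{o}\Pr[Z=o]=1$, so $\ex[L]=\ex[L+e^{-L}-1]\le e^{\eps}-\eps-1$ because $t\mapsto t+e^{-t}-1$ is convex and larger at $-\eps$ than at $\eps$ on $[-\eps,\eps]$, and $e^{\eps}-\eps-1\le\eps(e^{\eps}-1)/2$ for all $\eps\ge0$.

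Part (ii) is then immediate from boundedness: since $L\in[-\eps,\eps]$, the mean-zero variable $\xi$ is supported on an interval of length $2\eps$, so Hoeffding's Lemma gives $\ex[e^{\lambda\xi}]\le e^{\lambda^{2}(2\eps)^{2}/8}=e^{\lambda^{2}\eps^{2}/2}$ for every $\lambda\in\Rt$, which is exactly $\eps$-subgaussianity (this bound can also be reproved directly via convexity of $\cosh$). Combining (i) and (ii) yields $\DsubG(\M(x)\|\M(y))\preceq(\eps(e^{\eps}-1)/2,\eps)$ for all adjacent $x,y$, i.e., $\M$ is $(\eps(e^{\eps}-1)/2,\eps)$-CDP. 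I expect the main obstacle to be the reduction step in part (i) --- constructing the \MD{} pair and checking that redistributing probability mass toward the extreme losses $\pm\eps$ leaves the max divergence unchanged while only increasing the KL divergence; the rest is unwinding definitions and invoking a standard concentration inequality.
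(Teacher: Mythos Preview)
Your proof is correct and follows the same overall architecture as the paper: Hoeffding's Lemma for the subgaussian standard, and a reduction to \MD{} (antipodal) pairs for the expectation bound. The differences are in how each step is executed.

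For the reduction to antipodal pairs, the paper carries out the same mass-splitting you describe, but verifies that $\DKL$ does not decrease via an explicit case analysis on the sign of $\alpha$ (their Proposition~3.2), using calculus claims about $\alpha(-1+e^{\pm\eps})$ versus $\pm(e^{\mp\alpha\eps}-1)$. Your monotonicity argument --- that $t\mapsto(1-e^{-t})/t$ is nonincreasing, hence pushing loss from $\ell$ to $\eps$ can only increase the contribution --- is cleaner and accomplishes the same thing in one line per sign.

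For the antipodal bound itself, the paper takes a more indirect route: it proves (their Lemma~3.7) that antipodal pairs satisfy $\DKL(M\|M')=\DKL(M'\|M)$, and then combines this symmetry with the older DRV inequality $\DKL(M\|M')+\DKL(M'\|M)\le\eps(e^\eps-1)$ to halve the bound. Your direct optimization --- observing that the $Y$-mass on the loss-$(-\eps)$ atoms must equal $pe^{-\eps}$, computing $\DKL=\eps p(1-e^{-\eps})$, and maximizing subject to $p(1+e^{-\eps})\le1$ --- is shorter and in fact yields the slightly sharper $\eps(e^\eps-1)/(e^\eps+1)$ before relaxing to $\eps(e^\eps-1)/2$. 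Your alternative ``slicker'' argument via $\ex[e^{-L}]=1$ and convexity of $t+e^{-t}-1$ is also valid and does not appear in the paper.
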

\gnote{eventually, add theorem for $(\eps,\delta)$-DP}
\cnote{What will it say?}

\begin{proof} 
  Since $\M$ is $(\eps,0)$-differentially private, we know that the privacy loss random variable is always bounded in magnitude by $\eps$.  The random variable obtained by subtracting off the expected privacy loss, call it $\mu$, therefore has mean zero and lies in the interval $[-\eps-\mu,\eps-\mu]$.  It follows from Hoeffding's Lemma, stated next, that such a bounded, centered, random variable is $(\eps-\mu - (-\eps-\mu))/2 = \eps$-subgaussian.
\begin{lemma}[Hoeffding's Lemma]
\label{lem: Hoeffding}
Let $X$ be a zero-mean random variable such that $\Pr[X \in [a,b]] = 1$.
Then $\E[e^{\lambda X}] \le e^{(1/8)\lambda^2(b-a)^2}$.
\end{lemma}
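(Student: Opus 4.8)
The plan is to prove the moment generating function bound by controlling the \emph{cumulant generating function} $\psi(\lambda) = \ln \E[e^{\lambda X}]$ and showing it never exceeds the quadratic $\lambda^2(b-a)^2/8$. First I would record the two values $\psi(0) = \ln \E[1] = 0$ and $\psi'(0) = \E[X] = 0$, the latter using the zero-mean hypothesis; these kill the constant and linear terms of any Taylor expansion of $\psi$ about the origin. It then suffices to bound the second derivative $\psi''(\lambda)$ uniformly by $(b-a)^2/4$, since Taylor's theorem with Lagrange remainder gives $\psi(\lambda) = \psi(0) + \lambda\psi'(0) + \tfrac{\lambda^2}{2}\psi''(\xi)$ for some $\xi$ between $0$ and $\lambda$, whence $\psi(\lambda) \le \tfrac{\lambda^2}{2}\cdot\tfrac{(b-a)^2}{4} = \tfrac{\lambda^2(b-a)^2}{8}$; exponentiating recovers the claim.

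The heart of the argument is the second-derivative bound. Writing $M(\lambda) = \E[e^{\lambda X}]$, direct differentiation gives $\psi'(\lambda) = M'(\lambda)/M(\lambda) = \E[X e^{\lambda X}]/\E[e^{\lambda X}]$ and $\psi''(\lambda) = \E[X^2 e^{\lambda X}]/\E[e^{\lambda X}] - (\E[X e^{\lambda X}]/\E[e^{\lambda X}])^2$. I would then observe that this is precisely the variance of $X$ under the exponentially tilted probability measure with density $e^{\lambda x}/M(\lambda)$ relative to the law of $X$. Since the original $X$ is supported on $[a,b]$, so is the tilted law, and it remains only to invoke the elementary fact that any random variable $Y$ supported on $[a,b]$ satisfies $\Var(Y) \le (b-a)^2/4$.

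That variance bound (Popoviciu's inequality) is the one substantive sub-step, and I would dispatch it in a line: for the midpoint $c = (a+b)/2$ we have $|Y - c| \le (b-a)/2$ pointwise, so $\Var(Y) = \E[(Y-\E[Y])^2] \le \E[(Y-c)^2] \le ((b-a)/2)^2 = (b-a)^2/4$, where the first inequality holds because the mean minimizes mean-square deviation.

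The main obstacle is more a matter of care than of difficulty: justifying the differentiation under the expectation and the tilting interpretation cleanly. Because $X$ is bounded, $M(\lambda)$ is finite and smooth in $\lambda$ and differentiation under the integral is unproblematic, so no integrability subtleties arise; the only thing to get right is the bookkeeping that identifies $\psi''$ with a genuine variance so that Popoviciu applies. As an alternative that sidesteps derivatives of $\psi$ entirely, I could instead use convexity of $t \mapsto e^{\lambda t}$ to write, for each realized $X \in [a,b]$, $e^{\lambda X} \le \tfrac{b-X}{b-a}e^{\lambda a} + \tfrac{X-a}{b-a}e^{\lambda b}$, take expectations using $\E[X]=0$, and then bound the resulting function $\ln\!\big(\tfrac{b}{b-a}e^{\lambda a} - \tfrac{a}{b-a}e^{\lambda b}\big)$ by $\lambda^2(b-a)^2/8$ via the same second-order estimate applied to this single function of $\lambda$; this is the route I would fall back on to avoid invoking the tilted measure.
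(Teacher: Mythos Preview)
Your proposal is correct; this is one of the standard proofs of Hoeffding's Lemma (the cumulant-generating-function route with the tilted-measure variance interpretation and Popoviciu's inequality), and the convexity fallback you sketch is the other standard route. The paper, however, does not prove this lemma at all: it merely states it by name as a known result and invokes it to conclude that a bounded, centered random variable is $\eps$-subgaussian, so there is no paper proof to compare against.
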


The main challenge is therefore to bound the expectation, namely the quantity $\DKL(D||D')$, where $D$ is the distribution of $\M(\dba)$ and $D'$ is the distribution of $\M(\dbb)$, and $\dba,\dbb$ are adjacent databases. In \cite{DworkRV10} it was shown that:
\begin{lemma}[\cite{DworkRV10}]
\label{lemma:DRV}
For any two distributions $D$ and $D'$ such that $\Dmax(D||D'),\Dmax(D'||D) \leq \eps$,
$$\DKL(D||D') \leq \DKL(D||D') + \DKL(D'||D) \leq \eps \cdot (e^{\eps} - 1)$$
\end{lemma}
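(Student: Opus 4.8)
The plan is to prove Lemma~\ref{lemma:DRV} by reducing the symmetrized KL-divergence to a one-dimensional scalar inequality. First I would dispose of the left-hand inequality $\DKL(D\|D') \le \DKL(D\|D') + \DKL(D'\|D)$: this is immediate from nonnegativity of KL-divergence (Gibbs' inequality, i.e.\ Jensen applied to the convex function $-\ln$), so all the substance is in the right-hand bound.

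For the right-hand bound, the key step is the identity
\[
\DKL(D\|D') + \DKL(D'\|D) \;=\; \sum_y \big(D(y) - D'(y)\big)\,\ln\frac{D(y)}{D'(y)} \;=\; \Ex_{y \sim D'}\!\left[(e^{t(y)} - 1)\,t(y)\right],
\]
where $t(y) = \ln\big(D(y)/D'(y)\big)$ and I have used $D(y) - D'(y) = D'(y)(e^{t(y)} - 1)$. The hypothesis that both $\Dmax(D\|D')$ and $\Dmax(D'\|D)$ are at most $\eps$ is, after checking that the max over events is attained on singletons (a mediant-inequality argument), exactly the statement that $|t(y)| \le \eps$ for every $y$ in the common support; it also guarantees the supports coincide so that all the divergences are defined.

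It then remains to prove the elementary inequality $(e^t - 1)\,t \le \eps(e^\eps - 1)$ for all $t \in [-\eps, \eps]$. I would split on the sign of $t$: for $t \ge 0$ we have $0 \le e^t - 1 \le e^\eps - 1$ and $0 \le t \le \eps$, so the product is at most $\eps(e^\eps - 1)$; for $t < 0$ the product equals $(1 - e^t)\,|t| \ge 0$, and since $e^t \ge 1 + t$ gives $1 - e^t \le |t|$, it is bounded by $t^2 \le \eps^2 \le \eps(e^\eps - 1)$ (using $e^\eps - 1 \ge \eps$). Taking expectations over $y \sim D'$ then yields the claim.

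There is no deep obstacle here --- the whole argument is a clean identity followed by a scalar bound --- so the only thing to be careful about is bookkeeping: making sure the pair of one-sided max-divergence bounds really does give the two-sided pointwise bound on $t(y)$, that support-equality holds so the KL-divergences are defined, and that the scalar inequality is verified for all relevant $\eps$ rather than only small $\eps$. (The real novelty of this section is not this lemma, which is quoted from \cite{DworkRV10}, but the subsequent sharpening of the bound to $\eps(e^\eps-1)/2$ via the reduction to antipodal pairs.)
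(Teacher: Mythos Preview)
Your proof is correct. The paper itself does not supply a proof of Lemma~\ref{lemma:DRV}; it is quoted from \cite{DworkRV10} and used as a black box in the proof of the sharpened Lemma~\ref{lemma:KL-tight}. Your argument---the symmetrized-KL identity $\DKL(D\|D')+\DKL(D'\|D)=\Ex_{y\sim D'}[(e^{t(y)}-1)t(y)]$ followed by the pointwise scalar bound $(e^t-1)t\le\eps(e^\eps-1)$ on $[-\eps,\eps]$---is exactly the standard proof from \cite{DworkRV10}, and you have correctly flagged that the new content in this section is the factor-of-two refinement via the reduction to \MD{} pairs.
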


We improve this bound, obtaining the following refinement:
\begin{lemma}
\label{lemma:KL-tight}
For any two distributions $D$ and $D'$ such that $\Delta_{\infty}(D,D') = \eps$,
$$\DKL(D||D') \leq \eps \cdot (e^{\eps} - 1) / 2$$
\end{lemma}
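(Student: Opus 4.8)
The plan is to follow the two-step strategy announced in the introduction: first replace an arbitrary pair by an \emph{antipodal} pair (privacy loss in $\{-\eps,0,\eps\}$) without increasing the max divergence or decreasing the KL-divergence, and then bound the KL-divergence of an antipodal pair by direct computation. I begin by restating the hypothesis outcome-by-outcome: since $\Pr[D\in S]\le e^{\eps}\Pr[D'\in S]$ for every $S$ is equivalent to the same inequality for every singleton, the condition $\Dmax(D||D'),\Dmax(D'||D)\le\eps$ says exactly that the privacy loss $\ell(y)=\ln(\Pr[D=y]/\Pr[D'=y])$ lies in $[-\eps,\eps]$ for every outcome $y$; in particular $D$ and $D'$ have the same support and $\DKL(D||D')$ is well defined.

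The core of the argument is an outcome-\emph{splitting} operation. Fix an outcome $y$ with $p=\Pr[D=y]$, $q=\Pr[D'=y]$, $t=p/q\in[e^{-\eps},e^{\eps}]$, and replace $y$ by (at most) two outcomes $y_{+},y_{-}$ carrying $D$-mass $p_{+}=\frac{pe^{\eps}-q}{e^{\eps}-e^{-\eps}}$ and $p_{-}=\frac{q-pe^{-\eps}}{e^{\eps}-e^{-\eps}}$ and $D'$-mass $p_{+}e^{-\eps}$ and $p_{-}e^{\eps}$ respectively (omitting an outcome whose $D$-mass is $0$). One checks that $p_{+},p_{-}\ge0$ — this is precisely $-\eps\le\ell(y)\le\eps$ — that $p_{+}+p_{-}=p$ and $p_{+}e^{-\eps}+p_{-}e^{\eps}=q$, so the marginals of $D$ and of $D'$ are unchanged and the two new outcomes have privacy loss exactly $\pm\eps$. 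The contribution of $y$ to $\DKL(D||D')$ changes from $p\ln t=q\,t\ln t$ to $\eps(p_{+}-p_{-})=q\cdot\eps\,\frac{t(e^{\eps}+e^{-\eps})-2}{e^{\eps}-e^{-\eps}}$, so the split does not decrease the KL-divergence provided
$$
\eps\,\frac{t(e^{\eps}+e^{-\eps})-2}{e^{\eps}-e^{-\eps}} \;\ge\; t\ln t \qquad\text{for all } t\in[e^{-\eps},e^{\eps}].
$$
The left-hand side is affine in $t$ and (a one-line computation using $e^{2\eps}-1=e^{\eps}(e^{\eps}-e^{-\eps})$) equals $t\ln t$ at the two endpoints $t=e^{\eps}$ and $t=e^{-\eps}$; hence it is exactly the chord of the convex function $t\mapsto t\ln t$ on $[e^{-\eps},e^{\eps}]$, and convexity gives the inequality. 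This chord/convexity step is the only real content; everything else is bookkeeping.

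Applying the split to every outcome yields an antipodal pair $(\tilde D,\tilde D')$ with $\DKL(\tilde D||\tilde D')\ge\DKL(D||D')$ and all privacy losses in $\{-\eps,0,\eps\}$, so $\Dmax(\tilde D||\tilde D'),\Dmax(\tilde D'||\tilde D)\le\eps$. To finish, let $a=\Pr[\tilde D\in S_{+}]$ and $c=\Pr[\tilde D\in S_{-}]$, where $S_{+}$ (resp.\ $S_{-}$) is the set of outcomes with loss $+\eps$ (resp.\ $-\eps$); then $\Pr[\tilde D'\in S_{+}]=ae^{-\eps}$, $\Pr[\tilde D'\in S_{-}]=ce^{\eps}$, and normalization of $\tilde D'$ forces $c=ae^{-\eps}$, while $a+c\le1$ gives $a\le e^{\eps}/(e^{\eps}+1)$. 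Therefore
$$
\DKL(\tilde D||\tilde D') \;=\; \eps a-\eps c \;=\; \eps a\,(1-e^{-\eps}) \;\le\; \frac{\eps\,(e^{\eps}-1)}{e^{\eps}+1} \;\le\; \frac{\eps\,(e^{\eps}-1)}{2},
$$
the last inequality since $e^{\eps}+1\ge2$ for $\eps\ge0$; chaining with $\DKL(D||D')\le\DKL(\tilde D||\tilde D')$ proves the lemma. (The middle bound $\eps(e^{\eps}-1)/(e^{\eps}+1)=\eps\tanh(\eps/2)$ is in fact tight, attained already by the two-point pair $D=(\tfrac12+\delta,\tfrac12-\delta)$, $D'=(\tfrac12-\delta,\tfrac12+\delta)$ with $\delta=\frac{e^{\eps}-1}{2(e^{\eps}+1)}$.)
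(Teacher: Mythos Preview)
Your proof is correct and follows the same two-step strategy announced in the paper (reduce to an antipodal pair, then bound directly), but the execution differs in several meaningful ways.

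\textbf{The split.} The paper splits each outcome $x$ with loss $\alpha\eps$ into a \emph{zero}-loss piece $s_x$ and a piece with loss $\mathrm{sign}(\alpha)\cdot\eps$; you instead split every outcome into a $+\eps$ piece and a $-\eps$ piece. Both reductions land in the class of antipodal pairs, but yours produces a pair with no zero-loss mass, which makes the final computation cleaner.

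\textbf{Monotonicity of KL under the split.} The paper proves this via a somewhat laborious case analysis (Proposition~\ref{prop:KL-grows} and its two auxiliary claims, argued by comparing derivatives). Your argument is more conceptual: you observe that the affine function $t\mapsto \eps\,\frac{t(e^{\eps}+e^{-\eps})-2}{e^{\eps}-e^{-\eps}}$ agrees with $t\ln t$ at the endpoints $e^{\pm\eps}$, hence is the chord, hence dominates by convexity. This is both shorter and more transparent.

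\textbf{The final bound.} The paper finishes by invoking the symmetry $\DKL(M\|M')=\DKL(M'\|M)$ for antipodal pairs (Lemma~\ref{lemma:KL-diff}) together with the DRV bound $\DKL+\DKL'\le\eps(e^{\eps}-1)$ (Lemma~\ref{lemma:DRV}). You instead use the normalization constraint to get $c=ae^{-\eps}$ and $a\le e^{\eps}/(e^{\eps}+1)$ directly, which avoids citing the DRV lemma altogether and yields the sharper inequality $\DKL(D\|D')\le \eps\,\frac{e^{\eps}-1}{e^{\eps}+1}=\eps\tanh(\eps/2)$ before relaxing to $\eps(e^{\eps}-1)/2$. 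Your remark that this sharper bound is tight (attained by the symmetric two-point pair) is a nice bonus not present in the paper.

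In short: same architecture, but your realization is more elementary, shorter, and strictly stronger in the constant.
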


The proof of Theorem \ref{thm:DP-to-CDP} follows from Lemma \ref{lemma:KL-tight}. To prove Lemma \ref{lemma:KL-tight}, we introduce the notion of \MD{} distributions:

\begin{definition}[\MDcaps {} Distributions]
\label{def:max-divergent}

Let $D$ and $D'$ be two distributions with support $\X$, such that $\Delta_{\infty}(D,D') \leq \eps$ for some $\eps > 0$. We say that $D$ and $D'$ are {\em \MD} if $\forall x \in \X, \ln \frac{D[x]}{D'[x]} \in \{-\eps,0,\eps\}$.
\end{definition}

We then use the following two lemmas about maximally divergent distributions (the proofs follow below) to prove Lemma \ref{lemma:KL-tight}:

\begin{lemma}
\label{lemma:max-divergent}

For any two distributions $D$ and $D'$, there exist {\em \MD} distributions $M$ and $M'$ such that $\Delta_{\infty}(M,M') = \Delta_{\infty}(D,D')$ and $\DKL(D,D') \leq \DKL(M,M')$. Note that the support of $D,D'$ may differ from the support of $M,M'$.
\end{lemma}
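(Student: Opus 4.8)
The plan is to start from an arbitrary pair $(D,D')$ with $\Delta_\infty(D,D') = \eps$ and transform it into a \MD{} pair without decreasing KL-divergence or changing the max divergence. First I would partition the support $\X$ according to the sign of the per-point log-ratio $r(x) = \ln(D[x]/D'[x]) \in [-\eps,\eps]$: let $\X_+ = \{x : r(x) > 0\}$, $\X_0 = \{x : r(x) = 0\}$, and $\X_- = \{x: r(x) < 0\}$. The idea is that within each of these three groups we may ``polarize'' the points, pushing each log-ratio to the extreme of its sign ($+\eps$ on $\X_+$, $0$ on $\X_0$, $-\eps$ on $\X_-$) while preserving the total mass of $D$ and of $D'$ on each group. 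Since the map $(p,q) \mapsto p\ln(p/q)$ behaves convexly in a way that is maximized at the extreme ratios subject to fixed marginals, this rebalancing can only increase $\DKL(D\|D')$. Concretely, on $\X_+$, replace the points by two atoms: one carrying ratio exactly $e^\eps$ and one carrying ratio $1$ (if needed, to soak up leftover mass), and similarly for $\X_-$ with ratios $e^{-\eps}$ and $1$; fold the ``ratio $1$'' leftovers into $\X_0$.

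The key computation I would carry out is the following two-point / three-point optimization. Fix the mass $a = D[\X_+]$ and $b = D'[\X_+]$ with $b \le a \le e^\eps b$ (which holds because every ratio on $\X_+$ lies in $(1,e^\eps]$). Among all distributions supported on finitely many atoms with these marginals and all ratios in $[1,e^\eps]$, the quantity $\sum_x D[x]\ln(D[x]/D'[x])$ is maximized by putting all the ``spread'' at the two endpoints $\{1, e^\eps\}$: write $D$-mass $\alpha$ at ratio $e^\eps$ and $a-\alpha$ at ratio $1$; then $D'$-mass is $\alpha e^{-\eps}$ and $a-\alpha$, and matching $D'[\X_+] = b$ forces $\alpha = (b-a)/(e^{-\eps}-1) = (a-b)e^\eps/(e^\eps-1)$. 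This is a standard ``extreme points of a moment body'' argument: the function $t \mapsto t\ln t$ is convex, so subject to a fixed mean the relevant sum is maximized at the boundary of the feasible ratio interval. The same works on $\X_-$ by the symmetric statement with ratios in $[e^{-\eps},1]$. After this step every point has log-ratio in $\{-\eps,0,\eps\}$, i.e.\ the pair is \MD, the max divergence is still exactly $\eps$ (we kept a point at ratio $e^\eps$, assuming $\X_+ \ne \emptyset$; if the original max is attained on $\X_-$ side one notes $\Delta_\infty$ is the max over both directions, and the construction symmetrically preserves it), and $\DKL$ has not decreased.

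The step I expect to be the main obstacle is making the ``polarization increases KL'' claim fully rigorous for a general (possibly continuous or countably infinite) support, rather than just for two atoms. The clean way is to reduce to the finite two-atom case: group all of $\X_+$ into a single pair of numbers $(a,b)$ as above, observe by the log-sum inequality (or convexity of $t\mapsto t\ln t$) that collapsing $\X_+$ to a single atom with ratio $a/b$ does not increase $\DKL$ — wait, that goes the wrong way, so instead I would argue directly: $\sum_{x\in\X_+} D[x]\ln\frac{D[x]}{D'[x]} \le \alpha\ln e^\eps + (a-\alpha)\ln 1 = \alpha\eps$ with $\alpha$ as above, by checking that the linear function $x \mapsto (\text{appropriate slope})$ dominates $p\ln(p/q)$ along the constraint set; equivalently, that $p\ln(p/q) \le \eps \cdot \frac{p - q}{e^\eps - 1} \cdot \frac{e^\eps}{1}$-type bound holds pointwise for ratios in $[1,e^\eps]$ and sums up. Verifying that pointwise inequality (a one-variable calculus fact about $u \mapsto u\ln u$ versus the chord on $[1,e^\eps]$) is the real content; once it is in hand, summing over $\X_+$, $\X_0$, $\X_-$ and reassembling gives the \MD{} pair $(M,M')$ with the stated properties. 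I would also handle degenerate cases (mass of $D$ on $\Supp(D)\setminus\Supp(D')$, which here is zero since $\Delta_\infty$ finite forces equal supports; and the case $\X_+=\emptyset$) separately but briefly.
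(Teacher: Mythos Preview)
Your proposal is correct and, once you settle on the pointwise argument, it is essentially the paper's construction: for each $x$ with $D[x]=e^{\alpha\eps}D'[x]$, split the mass of $x$ between a new point carrying ratio exactly $e^{\sgn(\alpha)\eps}$ and a ``neutral'' point carrying ratio~$1$, with the split determined by conserving both $D$- and $D'$-mass. The paper then proves the pointwise inequality (its Proposition following the lemma) by an explicit algebraic computation with a case split on $\sgn(\alpha)$ and two derivative claims. Your observation that this inequality is nothing more than ``$u\ln u$ lies below its chord on $[1,e^{\eps}]$ (resp.\ $[e^{-\eps},1]$)'' --- i.e.\ convexity of $u\mapsto u\ln u$ --- is a cleaner route to exactly the same conclusion and would let you replace the paper's case analysis by a one-line convexity argument.

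Two small remarks. First, the detour through aggregating all of $\X_+$ and then invoking log-sum is unnecessary and, as you noticed, points the wrong way; you can drop that paragraph and go straight to the pointwise chord inequality, then sum (or integrate) over $x$. Second, for $\Delta_\infty(M,M')=\eps$ exactly: once any $x$ has $r(x)\in(0,\eps]$, your split places strictly positive mass at ratio $e^{\eps}$, so equality is automatic whenever $\X_+\cup\X_-\ne\emptyset$; the degenerate case $D=D'$ is trivial.
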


\begin{lemma}
\label{lemma:KL-diff}

For any \MD{} distributions $M$ and $M'$, as in Definition \ref{def:max-divergent}, $\DKL(M,M')=\DKL(M',M)$.
\end{lemma}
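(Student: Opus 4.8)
The plan is to partition the common support $\X$ of $M$ and $M'$ according to the three values the log-ratio can take. Write
$$A = \{x \in \X : \ln(M[x]/M'[x]) = \eps\}, \quad Z = \{x \in \X : M[x] = M'[x]\}, \quad C = \{x \in \X : \ln(M[x]/M'[x]) = -\eps\},$$
so that by the definition of \MD{} distributions $A, Z, C$ form a partition of $\X$. For a set $S$ write $M[S] = \sum_{x \in S} M[x]$. On $A$ we have $M[x] = e^{\eps} M'[x]$ and on $C$ we have $M[x] = e^{-\eps} M'[x]$, so summing over each piece gives $M[A] = e^{\eps} M'[A]$ and $M[C] = e^{-\eps} M'[C]$.

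Next I would expand both divergences directly. Since the integrand is constant on each piece of the partition ($+\eps$ on $A$, $0$ on $Z$, $-\eps$ on $C$ for $\ln(M[x]/M'[x])$, and with the signs on $A$ and $C$ flipped for $\ln(M'[x]/M[x])$),
$$\DKL(M, M') = \eps \cdot M[A] - \eps \cdot M[C], \qquad \DKL(M', M) = \eps \cdot M'[C] - \eps \cdot M'[A].$$
The one remaining ingredient is a relation among the masses: since $M$ and $M'$ are both probability distributions and agree on $Z$, we get $M[A] + M[C] = 1 - M[Z] = 1 - M'[Z] = M'[A] + M'[C]$. Substituting $M[A] = e^{\eps} M'[A]$ and $M[C] = e^{-\eps} M'[C]$ and simplifying yields $M'[A](e^{\eps} - 1) = M'[C](1 - e^{-\eps})$, hence $M'[C] = e^{\eps} M'[A]$ (and symmetrically $M[A] = e^{\eps} M[C]$).

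Plugging these identities back into the two expansions, both $\DKL(M, M')$ and $\DKL(M', M)$ reduce to $\eps (e^{\eps} - 1) M'[A]$, which gives the claimed equality. This is essentially a two-line computation, so there is no real obstacle; the only points requiring a little care are the degenerate case $\eps = 0$ (where both sides are trivially $0$) and making sure that $A, Z, C$ is a genuine partition so that the mass bookkeeping and the normalization identity are exact.
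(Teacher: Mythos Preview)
Your proof is correct and follows essentially the same approach as the paper: both reduce (explicitly or ``without loss of generality'') to the three mass-classes $A$, $Z$, $C$, derive the relation $M'[C] = e^{\eps} M'[A]$ from the normalization constraint, and then verify equality of the two divergences. Your algebra is in fact a bit cleaner than the paper's, since you exhibit the common value $\eps(e^{\eps}-1)M'[A]$ directly rather than expanding the difference $\DKL(M\|M') - \DKL(M'\|M)$ and simplifying it to zero.
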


\begin{proof}[Proof of Lemma \ref{lemma:KL-tight}]
By Lemma \ref{lemma:max-divergent} there exist \MD{} distributions $M$ and $M'$ s.t. $\Delta_{\infty}(M,M') \leq \eps$ and $\DKL(D,D') \leq \DKL(M,M')$. By lemma \ref{lemma:DRV}, $\DKL(M||M') + \DKL(M'||M) \leq \eps \cdot (e^{\eps} - 1)$. By Lemma \ref{lemma:KL-diff}, $\DKL(M||M') = \DKL(M'||M)$, and so $\DKL(M,M') \leq \eps \cdot (e^{\eps} - 1)/2$. Putting these together:
$$\DKL(D,D') \leq \DKL(M,M') \leq \eps \cdot (e^{\eps} - 1)/2$$
\end{proof}

\begin{proof}[Proof of Lemma \ref{lemma:max-divergent}]

Let $\eps = \Delta_{\infty}(D,D')$. We construct $M$ and $M'$ iteratively from $D$ and $D'$ by enumerating over each $x \in \X$. For each such $x$, we add a new element $s_x$ to the support. The idea is that the mass of $x$ in $D$ and $D'$ will be ``split'' between $x$ and $s_x$ in $M$ and $M'$ (respectively). This split will ensure that the probabilities of $s_x$ in $M$ and in $M'$ are identical, and the probabilities of $x$ in $M$ and $M'$ are ``maximally divergent''. We will show that the ``contribution'' of $x$ and $s_x$ to the KL divergence from $M$ to $M'$ is at least as large as the contribution of $x$ to the KL divergence from $D$ to $D'$. The lemma then follows.

We proceed with the full specification of this ``split'' and then formalize the above intuition. For $x \in \X$, take $p_x = D'[x]$. Since $\Delta_{\infty}(D,D') = e^{\eps}$, there must exist $\alpha \in [-1,1]$ s.t. $D[x] = e^{\alpha \cdot \eps} \cdot p_x$. We introduce a new item $s_x$ into the support, and set the mass of $M$ and $M'$ on $x$ and $s_x$ as follows:
\begin{eqnarray*}
M'[x] & = & p_x \cdot \frac{e^{\alpha \cdot \eps}-1}{e^{\signa \cdot \eps} - 1} \\
M[x] & = & e^{\signa \cdot \eps} \cdot M[x] \\
    & = & e^{\signa \cdot \eps} \cdot p_x \cdot \frac{e^{\alpha \cdot \eps}-1}{e^{\signa \cdot \eps} - 1} \\
M'[s_x] & = & D'[x] - M'[x] \\
 & = & p_x \cdot (1 - \frac{e^{\alpha \cdot \eps}-1}{e^{\signa \cdot \eps} - 1}) \\
M[s_x] & = & D[x] - M[x] \\
 & = & p_x \cdot (e^{\alpha \cdot \eps} - e^{\signa \cdot \eps} \cdot \frac{e^{\alpha \cdot \eps}-1}{e^{\signa \cdot \eps} - 1}) \\
 & = & p_x \cdot ( e^{\alpha \cdot \eps} - (e^{\signa \cdot \eps} - 1 + 1) \cdot \frac{e^{\alpha \cdot \eps}-1}{e^{\signa \cdot \eps} - 1}) \\
 & = & p_x \cdot (e^{\alpha \cdot \eps} - (e^{\signa \cdot \eps} - 1) \cdot \frac{e^{\alpha \cdot \eps}-1}{e^{\signa \cdot \eps} - 1} - \cdot \frac{e^{\alpha \cdot \eps}-1}{e^{\signa \cdot \eps} - 1}) \\
 & = & p_x \cdot (e^{\alpha \cdot \eps} - (e^{\alpha \cdot \eps}-1) - \frac{e^{\alpha \cdot \eps}-1}{e^{\signa \cdot \eps} - 1}) \\
 & = & p_x \cdot (1 - \frac{e^{\alpha \cdot \eps}-1}{e^{\signa \cdot \eps} - 1}) \\
 & = & M'[s_x]
\end{eqnarray*}
Observe that all probabilities are at least zero and sum to 1 (for $M$ and $M'$ respectively). Moreover, $M$ and $M'$ are \MD , $\Delta_{\infty}(M,M')={\eps}=\Delta_{\infty}(D,D')$. Thus, the distributions $M$ and $M'$ satisfy the conditions of the lemma. Finally, we emphasize that by the above we have $\forall x \in X, M[s_x]=M'[s_x]$.

We now compare the KL divergence from $D$ to $D'$ with the divergence from $M$ to $M'$.
\begin{eqnarray*}
\DKL(M,M') - \DKL(D,D') & = &
\sum_{x \in \X} \big[ \big( M[x] \cdot \ln \frac{M[x]}{M'[x]} - D[x] \cdot \ln \frac{D[x]}{D'[x]} \big) + \big( M[s_x] \cdot \ln \frac{M[s_x]}{M'[s_x]} \big) \big] \\
& = & \sum_{x \in \X} \big( M[x] \cdot \ln \frac{M[x]}{M'[x]} - D[x] \cdot \ln \frac{D[x]}{D'[x]} \big)
\end{eqnarray*}

Proposition \ref{prop:KL-grows} below, shows that for every $x \in \X$, the summand of $x$ in the above sum is non-negative. The lemma follows. \end{proof}

\begin{proposition}
\label{prop:KL-grows}
For every $x \in \X$:
$$\big( M[x] \cdot \ln \frac{M[x]}{M'[x]} - D[x] \cdot \ln \frac{D[x]}{D'[x]} \big) \geq 0$$
\end{proposition}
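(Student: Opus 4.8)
The plan is to recognize that Proposition~\ref{prop:KL-grows} is exactly an instance of the log-sum inequality, applied to the way the construction in the proof of Lemma~\ref{lemma:max-divergent} splits the mass on $x$ between $x$ and the fresh symbol $s_x$. Recall from that construction that, for the fixed $x \in \X$,
\[
D[x] = M[x] + M[s_x], \qquad D'[x] = M'[x] + M'[s_x],
\]
all four quantities $M[x], M[s_x], M'[x], M'[s_x]$ are nonnegative, and --- the crucial property of the split --- $M[s_x] = M'[s_x]$. First I would dispose of the degenerate case $D[x] = D'[x]$ (equivalently $\alpha = 0$): then $D[x]\ln\frac{D[x]}{D'[x]} = 0$, while $M[x] = M'[x] = 0$ and all the mass of $x$ lands on $s_x$, so $M[x]\ln\frac{M[x]}{M'[x]} = 0$ as well (under the convention $0\ln\frac{0}{0}=0$), and the inequality holds with equality. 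So I may assume $D[x] \neq D'[x]$, whence $\signa \in \{-1,+1\}$ and $M[x], M'[x] > 0$.

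The main step is then to invoke the log-sum inequality: for nonnegative reals $a_1, a_2, b_1, b_2$ (with the standard conventions $0\ln\frac{0}{c}=0$, $0\ln\frac{0}{0}=0$, and $c\ln\frac{c}{0}=+\infty$ for $c>0$),
\[
a_1\ln\frac{a_1}{b_1} + a_2\ln\frac{a_2}{b_2} \;\ge\; (a_1+a_2)\ln\frac{a_1+a_2}{b_1+b_2},
\]
which is the usual consequence of convexity of $t\mapsto t\ln t$. Applying this with $(a_1,a_2) = (M[x], M[s_x])$ and $(b_1,b_2) = (M'[x], M'[s_x])$: the left-hand side equals $M[x]\ln\frac{M[x]}{M'[x]}$, because $M[s_x] = M'[s_x]$ makes the second term $M[s_x]\ln 1 = 0$ (and it is $0$ even when $M[s_x]=0$); and the right-hand side equals $D[x]\ln\frac{D[x]}{D'[x]}$ by the two merge identities above. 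Rearranging gives $M[x]\ln\frac{M[x]}{M'[x]} - D[x]\ln\frac{D[x]}{D'[x]} \ge 0$, which is precisely the claim.

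I do not expect a genuine obstacle here; the only mildly delicate point is the boundary bookkeeping --- $\alpha=0$, where $M[x]=M'[x]=0$, and $|\alpha|=1$, where $M[s_x]=M'[s_x]=0$ --- both of which are absorbed by the $0\ln 0 = 0$ convention and simply make the inequality an equality. If one preferred to avoid quoting the log-sum inequality, the same conclusion follows from a short direct computation: since $\ln\frac{M[x]}{M'[x]} = \signa\cdot\eps$ and $\ln\frac{D[x]}{D'[x]} = \alpha\eps$, the $x$-summand equals $\eps\, D'[x]\bigl(\signa\, e^{\signa\eps}\,\frac{e^{\alpha\eps}-1}{e^{\signa\eps}-1} - \alpha\, e^{\alpha\eps}\bigr)$, and checking that the bracket is nonnegative in the two cases $\signa = +1$ and $\signa = -1$ reduces, after clearing the (appropriately signed) denominator, to the elementary bound $e^{t\eps} \le t\,e^{\eps} + (1-t)$ for $t\in[0,1]$, i.e.\ convexity of the exponential. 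The log-sum route is shorter, so that is the one I would write up.
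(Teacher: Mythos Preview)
Your proof is correct and is considerably cleaner than the paper's. The paper proceeds by a direct algebraic computation: it expands the difference to
\[
\frac{p_x \cdot \eps \cdot e^{\signa\eps}}{e^{\signa\eps}-1}\Bigl(e^{\alpha\eps}\bigl(\signa + \alpha(-1+e^{-\signa\eps})\bigr) - \signa\Bigr),
\]
and then does a case split on $\signa \in \{+1,-1\}$, in each case proving an auxiliary inequality (e.g.\ $\alpha(-1+e^{-\eps}) \ge e^{-\alpha\eps}-1$ for $\alpha\in[0,1]$) by comparing derivatives in $\eps$. This is exactly the ``direct computation'' route you sketch at the end, carried out in full.

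Your primary argument via the log-sum inequality bypasses all of that. The key structural facts from the construction --- that $D[x]=M[x]+M[s_x]$, $D'[x]=M'[x]+M'[s_x]$, and $M[s_x]=M'[s_x]$ --- are precisely what is needed to make the $s_x$-term in the log-sum inequality vanish, collapsing the inequality to the desired statement in one line. This makes transparent that the proposition is nothing more than convexity of $t\mapsto t\ln t$, whereas the paper's case analysis and derivative comparisons somewhat obscure that underlying reason. Your handling of the boundary cases ($\alpha=0$ and $|\alpha|=1$) is also correct. The only thing the paper's approach buys is self-containedness for a reader unfamiliar with the log-sum inequality, but that lemma is standard and your route is the one to prefer.
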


\begin{proof}
We use the following equality, and the complete the proof using a case analysis, depending on whether $\alpha \geq 0$ or $\alpha < 0$.
\begin{eqnarray*}
M[x] \cdot \ln \frac{M[x]}{M'[x]} - D[x] \cdot \ln \frac{D[x]}{D'[x]} & = &  (p_x \cdot e^{\signa \cdot \eps} \cdot \frac{e^{\alpha \cdot \eps}-1}{e^{\signa \cdot \eps} - 1} \cdot \signa \cdot \eps) - (p_x \cdot e^{\alpha \cdot \eps} \cdot (\alpha \cdot \eps)) \\
& = & p_x \cdot \eps \cdot \big(\signa \cdot  e^{\signa \cdot \eps} \cdot \frac{e^{\alpha \cdot \eps}-1}{e^{\signa \cdot \eps} - 1} - e^{\alpha \cdot \eps} \cdot \alpha \big) \\
& = & \frac{p_x \cdot \eps}{e^{\signa \cdot \eps}-1} \cdot \big((\signa \cdot  e^{\signa \cdot \eps} \cdot (e^{\alpha \cdot \eps} - 1) - ((e^{\signa \cdot \eps} - 1) \cdot e^{\alpha \cdot \eps} \cdot \alpha) \big) \\
& = & \frac{p_x \cdot \eps \cdot e^{\signa \cdot  \eps}}{e^{\signa \cdot  \eps}-1} \cdot \big( \signa \cdot  (e^{\alpha \cdot \eps} - 1)  -  ( e^{\alpha \cdot \eps} \cdot \alpha)  + (e^{-\signa \cdot \eps} \cdot e^{\alpha \cdot \eps} \cdot \alpha ) \big) \\
& = & \frac{p_x \cdot \eps \cdot e^{\signa \cdot  \eps} }{e^{\signa \cdot  \eps}-1} \cdot \big( e^{\alpha \cdot \eps} \cdot (\signa + \alpha \cdot (-1 + e^{-\signa \cdot  \eps}) ) - \signa \big)
\end{eqnarray*}
We proceed with a case analysis:

\paragraph{Case I: $\alpha \geq 0$.} Here $\sign(\alpha)=1$. We use the following inequality:
\begin{claim}
For $\eps \geq 0, \alpha \in [0,1]$:
\begin{eqnarray*}
\alpha \cdot (-1 + e^{-\eps}) \geq e^{-\alpha \cdot \eps} - 1
\end{eqnarray*}
\end{claim}

\begin{proof}
Observe that, for any fixed $\alpha \in [0,1]$, we have equality when $\eps=0$ (both the left-hand and right-hand sides of the above inequality equal 0). Taking derivatives (by $\eps$) for both sides, on the left-hand side the derivative is $-\alpha \cdot e^{-\eps}$, whereas on the right-hand side it is $-\alpha \cdot e^{-\alpha \cdot \eps}$. We conclude that $\forall \eps > 0,\alpha \in [0,1]$, the derivative left-hand side is at least the derivative on the right-hand side (because in this range $-\alpha \cdot e^{-\eps} \geq -\alpha \cdot e^{-\alpha \cdot \eps}$).\end{proof}

Now, by the above we have:
\begin{eqnarray*}
M[x] \cdot \ln \frac{M[x]}{M'[x]} - D[x] \cdot \ln \frac{D[x]}{D'[x]} & = & \frac{p_x \cdot \eps \cdot e^{\eps} }{e^{\eps}-1} \cdot \big( e^{\alpha \cdot \eps} \cdot (1 + \alpha \cdot (-1 + e^{-\eps}) ) - 1 \big) \\
& \geq & \frac{p_x \cdot \eps \cdot e^{\eps} }{e^{\eps}-1} \cdot \big( e^{\alpha \cdot \eps} \cdot (1 + (e^{-\alpha \cdot \eps} - 1)) - 1 \big) \\
& = & 0
\end{eqnarray*}

\paragraph{Case II: $\alpha < 0$.} Here $\sign(\alpha)=-1$. We use the following inequality:

\begin{claim}
\label{claim:negative-case}
For $\eps \geq 0, \alpha \in [-1,0)$:
\begin{eqnarray*}
\alpha \cdot (-1 + e^{\eps}) & \leq & -e^{-\alpha \cdot \eps} + 1
\end{eqnarray*}
\end{claim}

\begin{proof}
Observe that, for any fixed $\alpha \in [-1,0)$, we have equality when $\eps=0$ (both the left-hand and right-hand sides of the above inequality equal 0). Taking derivatives (by $\eps$) for both sides, on the left-hand side the derivative is $\alpha \cdot e^{\eps}$, whereas on the right-hand side it is $\alpha \cdot e^{-\alpha \cdot \eps}$. We conclude that $\forall \eps \geq 0,\alpha \in [-1,0)$, the derivative on the left-hand side is always at most the derivative on the right-hand side (because in this range $\alpha \cdot e^{\eps} \leq \alpha \cdot e^{-\alpha \cdot \eps}$).\end{proof}

Recall from above that:
\begin{eqnarray*}
M[x] \cdot \ln \frac{M[x]}{M'[x]} - D[x] \cdot \ln \frac{D[x]}{D'[x]} & = &
\frac{p_x \cdot \eps \cdot e^{-\eps} }{e^{-\eps}-1} \cdot \big( e^{\alpha \cdot \eps} \cdot (-1 + \alpha \cdot (-1 + e^{\eps}) ) +1 \big)
\end{eqnarray*}
The right-hand side of this equation is a product of two terms. The first is:
\begin{eqnarray*}
\frac{p_x \cdot \eps \cdot e^{-\eps} }{e^{-\eps}-1} \leq 0
\end{eqnarray*}
(because the numerator is positive, and the denominator is negative).

For the second term in the product, we use Claim \ref{claim:negative-case}, and get:
\begin{eqnarray*}
( e^{\alpha \cdot \eps} \cdot (-1 + \alpha \cdot (-1 + e^{\eps}) ) +1) & \leq & ( e^{\alpha \cdot \eps} \cdot (-1 + -e^{-\alpha \cdot \eps} + 1 ) +1) \\
& = & 0
\end{eqnarray*}
We conclude that in this case ($\alpha < 0$), the difference in $x$'s contribution to the KL divergences equals the product of two non-positive terms, and so it must be non-negative.
\end{proof}

\begin{proof}[Proof of Lemma \ref{lemma:KL-diff}]

Let $\eps = \Delta_{\infty}(M,M')$. Since $M$ and $M'$ are \MD , w.l.o.g we can study their KL divergence for the special case where the support is over 3 items: $\{x,y,s\}$, and where
$$M[x]=p, M[y]=q, M[s]=r$$
$$M'[x]=p'=p \cdot e^{-\eps}, M'[y] =q'=q \cdot e^{\eps}, D'[s]=r'=r$$

We analyze the expected privacy losses, or KL divergences, from $M$ to $M'$ and from $M'$ to $M$. First, observe that $1 = (p + q + r) = (p' + q' + r')$. We conclude that:
\begin{eqnarray*}
p+q = p' +q' & \Rightarrow & p+q = p \cdot e^{-\eps} + q \cdot e^{\eps} \\
& \Rightarrow & p \cdot (1 - e^{-\eps}) = q \cdot (e^{\eps} - 1) \\
& \Rightarrow & p= q \cdot \frac{e^{\eps} - 1}{1 - e^{-\eps}}
\end{eqnarray*}
Examining the KL divergences, we have:
\begin{eqnarray*}
\DKL(M||M') & = & p_x \cdot \ln(p/p') + q \cdot \ln (q/q') + s \cdot \ln (s/s') \\
& = & p_x \cdot \eps - q \cdot \eps \\
& = & \eps \cdot (p_x- q)\\
\DKL(M'||M) & = & p' \cdot \ln(p'/p) + q' \cdot \ln (q'/q) + s' \cdot \ln (s'/s) \\
& = & - p_x \cdot e^{-\eps} \cdot \eps + q \cdot e^{\eps} \cdot \eps \\
& = & \eps \cdot (q \cdot e^{\eps} - p_x \cdot e^{-\eps})
\end{eqnarray*}

We bound the difference as follows:
\begin{eqnarray*}
\DKL(M||M') - \DKL(M'||M) & = & \eps \cdot (p- q) - \eps \cdot (q \cdot e^{\eps} - p \cdot e^{-\eps}) \\
& = & \eps \cdot \big( p \cdot (e^{-\eps} + 1) - q \cdot (e^{\eps} + 1) \big) \\
& = & \eps \cdot q \cdot \big( \frac{e^{\eps} - 1}{1 - e^{-\eps}} \cdot (e^{-\eps} + 1) -  (e^{\eps} + 1) \big) \\
& = & \eps \cdot q \cdot \big( \frac{e^{\eps} - 1}{1 - e^{-\eps}} \cdot (e^{-\eps} - 1 + 2) -  (e^{\eps} + 1) \big) \\
& = & \eps \cdot q \cdot \big( \frac{(e^{\eps} - 1) \cdot (e^{-\eps} - 1)}{1 - e^{-\eps}} + 2 \cdot \frac{e^{\eps} - 1}{1 - e^{-\eps}} -  (e^{\eps} + 1) \big) \\
& = & \eps \cdot q \cdot \big( -(e^{\eps} - 1) + 2 \cdot \frac{e^{\eps} - 1}{1 - e^{-\eps}} -  (e^{\eps} + 1)) \\
& = & 2 \cdot \eps \cdot q \cdot \big( \frac{e^{\eps} - 1}{1 - e^{-\eps}} - e^{\eps} \big) \\
& = & 2 \cdot \eps \cdot q \cdot \big( \frac{(e^{\eps} - 1) - e^{\eps} \cdot (1 - e^{-\eps})}{1 - e^{-\eps}} \big) \\
& = & 2 \cdot \eps \cdot q \cdot \big( \frac{(e^{\eps} - 1) - e^{\eps} + 1}{1 - e^{-\eps}} \big)\\
& = & 0
\end{eqnarray*}
\end{proof}
\end{proof}

\section{Group Privacy}
\label{sec:group-CDP}

We show that arbitrary mechanisms that guarantee \cdp{} also provide \cdp{} for groups. This is stated in Theorem \ref{thm:group-CDP} below. The bounds are asymptotically nearly-tight, up to low-order terms. It would be interesting to tighten these bounds to match the tight group privacy guarantees of (all) known \cdp{} mechanisms, such as the Gaussian Mechanism or pure-$\eps$ Differentially private mechanisms. See the discussion in the introduction.

\begin{theorem}[Group CDP] \label{thm:group-CDP}
Let $\M$ be a {\em $(\mu,\sub)$-concentrated differentially private} mechanism. Let $\dba,\dbb$  be a pair of databases that differ on exactly $s$ rows. Suppose that $(\sub \cdot s \cdot \log^3 s)$ is bounded from above by a sufficiently small constant and $\mu \leq \sub^2/2$. Then:
\begin{align}
\DsubG(\M(\dba)||\M(\dbb)) & \preceq \left( \frac{(s \cdot \tau)^2}{2} + \tilde{O}((s \cdot \tau)^{2.5}) , (s \cdot \tau) + \tilde{O}((s \cdot \tau)^{1.5}) \right) \nonumber
\end{align}
(Note that since $(s \cdot \tau) < 1$, this implies that the privacy loss random variable has expectation roughly $\frac{(s \cdot \tau)^2}{2}$, and the subgaussian standard is roughly $(s \cdot \tau)$, all up to the low-order terms).
\end{theorem}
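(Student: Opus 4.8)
The plan is a hybrid argument that reduces group privacy to the single-row guarantee. Fix a chain $\dba = \dbc_0, \dbc_1, \ldots, \dbc_s = \dbb$ of databases in which consecutive ones are adjacent, write $D_i$ for the distribution of $\M(\dbc_i)$, and let $L_i(o) = \ln(D_i[o]/D_{i+1}[o])$ be the single-step loss. Privacy losses telescope, so the group-privacy loss function is $R_s(o) := \ln(D_0[o]/D_s[o]) = \sum_{i=0}^{s-1} L_i(o)$, and we must analyze it as a random variable with $o \sim \M(\dba) = D_0$. The $(\mu,\tau)$-CDP hypothesis gives that each centered $L_i - \Ex_{D_i}[L_i]$ is $\tau$-subgaussian, but --- and this is the crux --- under its own distribution $D_i$, not under $D_0$. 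So Lemma~\ref{lemma:subgassian-composition} does not apply, and the whole difficulty is to control the discrepancy between $D_0$ and $D_i$; this mismatch is exactly why group privacy degrades quadratically, not linearly, in $s$.

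I would run two intertwined inductions on $j$: one bounding the \emph{mean} $\DKL(D_0||D_j) = \sum_{i<j}\Ex_{D_0}[L_i]$, and one bounding the moment generating function $F_j(\lambda)$ of the centered partial loss $R_j = \ln(D_0/D_j)$ under $D_0$. The engine for the inductive step is the same in both cases: change measure from $D_0$ to $D_{j-1}$, which costs a factor $e^{R_{j-1}}$, so that the quantity of interest becomes an expectation under $D_{j-1}$ of a product of $e^{\lambda L_{j-1}}$ (or $L_{j-1}$, for the mean) with an exponential in the partial loss $R_{j-1}$; then invoke Lemma~\ref{lemma:subgauss-product} with $Y$ the \emph{centered} single-step loss $\lambda(L_{j-1} - \Ex_{D_{j-1}}[L_{j-1}])$, which is $|\lambda|\tau$-subgaussian under $D_{j-1}$ (so the hypothesis $|\lambda|\tau \le 1/3$ must be maintained), and with $X$ an exponential in the inductively controlled $R_{j-1}$. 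Using the lemma in its sharper ``centered'' form --- apply it to $X - \Ex[X]$ to get $\Ex[Xe^Y] \le \Ex[X]\Ex[e^Y] + \sqrt{\Var(X)}(|\lambda|\tau + 3\lambda^2\tau^2)$ --- is essential: it replaces $\sqrt{\Ex[X^2]} = O(1)$ by $\sqrt{\Var(X)} = O(s\tau)\cdot\mathrm{poly}(\lambda)$, which is what makes the per-step error genuinely lower order. After rewriting $\Ex_{D_{j-1}}[X]$ and $\Ex_{D_{j-1}}[X^2]$ back in terms of $F_{j-1}$ at $\lambda$ and at a shifted/scaled argument (the change of measure contributes the shift, squaring contributes the scaling), one gets a recursion of roughly the form $F_j(\lambda) \lesssim e^{\lambda^2\tau^2/2}F_{j-1}(\lambda) + \sqrt{\Var}\,(|\lambda|\tau + 3\lambda^2\tau^2)$, with an analogous but simpler recursion for $\DKL(D_0||D_j)$.

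The main work, and the main obstacle, is solving these recursions with enough precision. Iterating only the ``main'' factor $e^{\lambda^2\tau^2/2}$ over all $s$ steps reproduces merely the \emph{composition}-scale parameters $\sum\mu_i \le s\tau^2/2$ and $s\tau^2$; the extra growth up to $\tfrac{(s\tau)^2}{2}$ and $(s\tau)^2$ --- the signature of group privacy --- is contributed entirely by the $\sqrt{\Var(\cdot)}$ error terms, whose size is, by the inductive form, of order $(j\tau)\cdot\mathrm{poly}(\lambda)$. One therefore has to track the $\lambda$-dependence of these errors carefully through all $s$ iterations: their part linear in $\lambda$ accumulates to the claimed mean and their quadratic part to the claimed subgaussian parameter, while everything of order $(s\tau)^{2.5}$ (resp.\ $(s\tau)^{1.5}$) and below is swept into the $\tilde O(\cdot)$ terms --- the $(s\tau)^{2.5}$ in the mean, in particular, arising from feeding the already-slack subgaussian parameter $(i\tau)(1+\tilde O(\sqrt{i\tau}))$ into the $i$th error term and summing. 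The hypothesis $\mu \le \tau^2/2$ ensures $\sum\mu_i$ is a genuinely lower-order contribution to the mean, and the hypothesis that $\tau s\log^3 s$ is bounded by a small constant is what lets us simultaneously push $|\lambda|$ out to roughly $\mathrm{polylog}(s)/(s\tau) \gg 1$ while keeping $|\lambda|\tau \le 1/3$, and make the per-step relative error $o(1/s)$ so that it compounds to only a $1+o(1)$ factor over the $s$ steps. I expect the delicate point to be the large-$|\lambda|$ regime, where the shifted-argument term genuinely dominates and the clean quadratic form degrades, and this is where the $\log^3 s$ slack is really consumed. Taking $j = s$ and reading off the two bounds as items~1 and~2 of Definition~\ref{def:SubGaus-divergence} gives the theorem.
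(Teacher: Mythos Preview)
Your hybrid uses a \emph{linear} chain $D_0,\ldots,D_s$ and inducts on the step index $j$, peeling off one adjacency at a time. The paper takes a different route: it assumes $s=2^{\ell}$ and inducts on $\ell$, at each step choosing a \emph{midpoint} database $z$ so that $D=\M(x)$, $D'=\M(z)$, $D''=\M(y)$ are equidistant ($2^{m}$ rows apart). Two one-shot lemmas (Lemmas~\ref{lemma:group-expectation} and~\ref{lemma:group-standard}) then bound $\DKL(D||D'')$ and the subgaussian standard of $\Loss_{(D||D'')}$ in terms of the \emph{symmetric} inputs $(\mu_m,\tau_m)$, yielding recursions $\tau_{m+1}\le 2\tau_m+34\tau_m^{1.5}$ and $\mu_{m+1}\le 2\mu_m+\tau_m^2+O(\tau_m^3)$ which are solved over only $\log s$ levels.

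The substantive difficulty with your plan is the large-$|\lambda|$ regime, and you flag it but do not resolve it. Your engine is Lemma~\ref{lemma:subgauss-product} with $Y=\lambda\,\tilde L_{j-1}$, which forces $|\lambda|\tau\le 1/3$; and your recursion feeds $F_{j-1}$ back at the shifted argument $2\lambda+1$, so even before $|\lambda|$ reaches $1/(3\tau)$ the inductive bound you rely on is being evaluated outside its range of validity. Definition~\ref{def:subgaussian} demands the MGF bound for \emph{all} $\lambda\in\R$, so something else is needed once $|\lambda|\tau$ is of constant order. The paper's answer for that range (Case~I of Lemma~\ref{lemma:group-standard}) is a bare Cauchy--Schwarz under the midpoint measure $D'$, giving exponent $(\lambda+1)^2\tau_1^2+\lambda^2\tau_2^2$; for $\tau_1=\tau_2=\tau_m$ this reads $\approx 2\lambda^2\tau_m^2=\tfrac{\lambda^2}{2}(2\tau_m)^2$, i.e.\ exactly the desired doubling $\tau_{m+1}\approx 2\tau_m$. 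That trick is \emph{not} transferable to your asymmetric step: with $\tau_1=\sigma_{j-1}$ and $\tau_2=\tau$ the same inequality yields $\sigma_j^2\approx 2(\sigma_{j-1}^2+\tau^2)$, which blows up like $2^{j}\tau^2$ rather than $j^2\tau^2$. In other words, the binary splitting is not a cosmetic choice; it is what makes the Cauchy--Schwarz loss tight at the symmetric point and thereby closes the $|\lambda|\gtrsim 1/\sqrt{\tau_m}$ window that Lemma~\ref{lemma:subgauss-product} cannot reach. Absent a new idea for $|\lambda|>1/(3\tau)$ in the one-step-at-a-time setting, your induction does not establish subgaussianity in the sense required by Definition~\ref{def:SubGaus-divergence}.
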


\begin{proof}

We assume for convenience that $s$ is a power of 2. The proof will be by induction over the value of $\log s$. For $s=1$ the claim follows immediately. For the induction step, suppose that the claim is true for databases differing on $2^{m}$ rows. We will show that it is true for $\dba,\dbb$ that differ on $2^{m+1}$ rows.  We take $\mu_m$ and $\tau_m$ to be bounds on the expectation and the standard of the (centered) privacy loss distribution for databases that differ on at most $2^m$ rows, so $\mu_0 = \mu$ and $\tau_0 = \tau$. We maintain the invariant that $\mu_m \leq \tau_m^2/2$ (for the base case $m=0$ this holds by the lemma conditions).

For the induction step, let $\dbc$ be an ``midpoint'' database that differs from both $\dba$ and $\dbb$ on exactly $2^{m}$ rows. Define the mechanism's output distributions on these databases by $D = \M(\dba), D'=\M(\dbc), D''=\M(\dbb)$. By the induction hypothesis, we conclude that:
\begin{align*}
\DsubG(D||D'),\DsubG(D'||D) & \preceq (\mu_{m},\tau_{m}) \\
\DsubG(D''||D'),\DsubG(D'||D'')  & \preceq (\mu_{m},\tau_{m})
\end{align*}
We use Lemmas \ref{lemma:group-expectation} and \ref{lemma:group-standard}, stated and proved in Sections \ref{subsec:group-expectation} and \ref{subsec:group-standard} below, to bound $\mu_{m+1}$ and $\tau_{m+1}$.

\paragraph{Bounding $\tau_{m+1}$.} By Lemma \ref{lemma:group-standard}, we have that
\begin{align}
\tau_{m+1} &\leq 2 \tau_{m} + 34\tau_m^{1.5}  \label{eq:group-recursive}
\end{align}
We prove that this recursive relation implies that:
\begin{align}
\tau_{m+1} &\leq (2^{m+1} \cdot \tau) + \alpha \cdot (2^{m+1} \cdot (m+1)^3 \cdot \tau)^{1.5} \label{eq:group-bound}
\end{align}
where $\alpha > 0$ is a sufficiently large universal constant specified below. This implies the claimed bound on $\tau_{\log s}$.

To prove the bound in Inequality \eqref{eq:group-bound}, consider $\tau_m$ in light of the recurrence relation of Equation \eqref{eq:group-recursive}. We bound $\tau_m$ by proving a bound of the following form:
\begin{align}
\tau_m &\leq 2^m \cdot \tau + \sum_{i=1}^{m} c_{m,i} \cdot \tau^{1.5^i} \label{eq:tau-c} \\
&= c_{m,0} \cdot \tau + \sum_{i=1}^{m} c_{m,i} \cdot \tau^{1.5^i} \nonumber \\
&= \sum_{i=0}^{m} c_{m,i} \cdot \tau^{1.5^i} \nonumber
\end{align}
Where the term $c_{m,0}$ is (by definition) equal to $2^m$, and we bound the terms $\{c_{m,i}\}_{m \in [1,\log s], i \in [1,m]}$ as follows. For $m \in [1,\log s],i \in [1,m]$, we show that:
\begin{align}
c_{m,i} &\leq 2 \cdot (2^m)^{1.5^i} \cdot 34^{2\cdot(1.5^{i+1} - 1.5)} \cdot {m^{2\cdot(1.5^{i+1} - 1.5)}} \label{eq:c-bound}
\end{align}
We conclude that:
\begin{align}
c_{m,i} &< (2^m)^{1.5^i} \cdot 34^{2\cdot(1.5^{i+1})} \cdot {m^{2\cdot(1.5^{i+1})}} \nonumber \\
&= 2^{m \cdot 1.5^i} \cdot 34^{3\cdot1.5^{i}} \cdot {m^{3\cdot1.5^{i}}} \nonumber \\
&= (2^m \cdot 34^3 \cdot m^3)^{1.5^i} \nonumber
\end{align}
These bounds are shown below. Plugging the bounds into equation \eqref{eq:tau-c} we get:
\begin{align}
\tau_{m+1} &\leq  ( 2^{m+1} \cdot \tau) + \sum_{i=1}^{m+1} (2^{m+1} \cdot 34^3 \cdot (m+1)^3)^{1.5^i} \cdot \tau^{1.5^i} \nonumber \\
&= (2^{m+1} \cdot \tau) + \sum_{i=1}^{m+1} (2^{m+1} \cdot 34^3 \cdot (m+1)^3 \cdot \tau)^{1.5^i} \nonumber \\
&\leq (2^{m+1} \cdot \tau) + 2(2^{m+1} \cdot 34^3 \cdot (m+1)^3 \cdot \tau)^{1.5} \nonumber \\
&= (2^{m+1} \cdot \tau) + \alpha \cdot (2^{m+1} \cdot (m+1)^3 \cdot \tau)^{1.5} \nonumber
\end{align}
where the next-to-last inequality assumes that $(\tau \cdot s \cdot \log^3 s \cdot 34^3) \leq 1/2$, and the last inequality holds for a sufficiently large universal constant $\alpha = 2 \cdot 34^{4.5}$. This proves the bound claimed in Equation \eqref{eq:group-bound}.

We prove Inequalities \eqref{eq:tau-c} and \eqref{eq:c-bound} by induction on $m$. The base case is for $m=0$, where $c_{0,0}=1$ and for $i \geq 1$ we have $c_{0,i}=0$. Assuming the bounds are true for $m$, and using Inequality \eqref{eq:group-recursive} (derived from Lemma \ref{lemma:group-standard}), we have:
\begin{align}
\tau_{m+1} &\leq 2 \tau_{m} + 34\tau_m^{1.5} \nonumber \\
&\leq 2 \left( \sum_{i=0}^m c_{m,i} \cdot \tau^{1.5^i} \right) + 34 \cdot \left( \sum_{i=0}^m c_{m,i} \cdot \tau^{1.5^i} \right)^{1.5} \nonumber \\
&\leq  2 \left( \sum_{i=0}^m c_{m,i} \cdot \tau^{1.5^i} \right) + 34 \cdot \sqrt{m+1} \cdot \left( \sum_{i=0}^m c_{m,i}^{1.5} \cdot \tau^{1.5^{i+1}} \right) \label{eq:group-std-jensen} \\
& = 2 c_{m,0} \cdot \tau + \sum_{i=1}^{m} \left( 2 c_{m,i} + (34 \sqrt{m+1} \cdot c_{m,i-1}^{1.5}) \right) \cdot \tau^{1.5} \label{eq:group-std-terms}
\end{align}
where Inequality \eqref{eq:group-std-jensen} uses the fact that for positive terms $a_i$ we have $(\sum_{i=0}^{m} a_i)^{1.5} \leq \sqrt{m+1} \cdot ( \sum_{i=0}^{m} a_i^{1.5})$, which follows from Jensen's Inequality (since $x^{1.5}$ is a convex function).

Using Inequality \eqref{eq:group-std-terms}, we can define the terms $c_{m+1,i}$ as follows:
\begin{align*}
c_{m+1,0} &= 2 c_{m,0} = 2^{m+1} \cdot \tau \\
\forall i \in [1,m], \mbox { }  c_{m+1,i} &= 2c_{m,i} + 34 \cdot \sqrt{m+1} \cdot c_{m,i-1}^{1.5} \nonumber
\end{align*}
It remains to prove Inequality \eqref{eq:c-bound} for $i \in [1,m]$. The proof is by induction. The base case for $m=0$ follows by definition. Assume for a fixed value $m$, the inequality holds $\forall i \in [1,m]$. I.e.:
\begin{align*}
c_{m,i} &\leq (2^m)^{1.5^i} \cdot 34^{2\cdot(1.5^{i+1} - 1.5)} \cdot {m^{2\cdot(1.5^{i+1} - 1.5)}}
\end{align*}
Then we get that $\forall i \in [1,m+1]$:
\begin{align*}
c_{m+1,i} &= 2c_{m,i} + 34 \cdot \sqrt{m+1} \cdot c_{m,i-1}^{1.5} \\
&\leq 2\left( (2^m)^{1.5^i} \cdot 34^{2\cdot(1.5^{i+1} - 1.5)} \cdot {m^{2\cdot(1.5^{i+1} - 1.5)}} \right) +  34 \sqrt{m+1} \cdot \left( (2^m)^{1.5^{i-1}} \cdot 34^{2\cdot(1.5^{i} - 1.5)} \cdot {m^{2\cdot(1.5^{i} - 1.5)}} \right)^{1.5} \\
&= 2\left( 2^{m \cdot 1.5^i} \cdot 34^{2\cdot(1.5^{i+1} - 1.5)} \cdot {m^{2\cdot(1.5^{i+1} - 1.5)}} \right) + 34 \sqrt{m+1} \cdot \left( 2^{m \cdot 1.5^i} \cdot 34^{2\cdot(1.5^{i+1} - 1.5^2)} \cdot {m^{2\cdot(1.5^{i+1} - 1.5^2)}} \right) \\
&< 2\left( 2^{m \cdot 1.5^i} \cdot 34^{2\cdot(1.5^{i+1} - 1.5)} \cdot {m^{2\cdot(1.5^{i+1} - 1.5)}} \right) + \left( 2^{m \cdot 1.5^i} \cdot 34^{2\cdot(1.5^{i+1} - 1.5^2) + 1} \cdot {(m+1)^{2\cdot(1.5^{i+1} - 1.5^2)+1/2}} \right) \\
&= 2\left( 2^{m \cdot 1.5^i} \cdot 34^{2\cdot(1.5^{i+1} - 1.5)} \cdot {m^{2\cdot(1.5^{i+1} - 1.5)}} \right) + \left( 2^{m \cdot 1.5^i} \cdot 34^{2\cdot(1.5^{i+1} - 1.5) - 0.5} \cdot {(m+1)^{2\cdot(1.5^{i+1} - 1.5) - 1}} \right) \\
&= 2\left( 2^{m \cdot 1.5^i} \cdot 34^{2\cdot(1.5^{i+1} - 1.5)} \cdot {m^{2\cdot(1.5^{i+1} - 1.5)}} \right) + \frac{ \left( 2^{m \cdot 1.5^i} \cdot 34^{2\cdot(1.5^{i+1} - 1.5)} \cdot {(m+1)^{2\cdot(1.5^{i+1} - 1.5)}} \right)} { 34^{0.5} \cdot (m+1) } \\
&< 2.5 \cdot 2^{m \cdot 1.5^i} \cdot 34^{2\cdot(1.5^{i+1} - 1)} \cdot {(m+1)^{2\cdot(1.5^{i+1} - 1/2)}} \\
&= 2.5 \cdot 2^{((m+1) \cdot 1.5^i) -1.5^i} \cdot 34^{2\cdot(1.5^{i+1} - 1.5)} \cdot {(m+1)^{2\cdot(1.5^{i+1} - 1.5)}} \\
&= \frac{2.5}{2^{1.5^i}} \cdot 2^{(m+1) \cdot 1.5^i } \cdot 34^{2\cdot(1.5^{i+1} - 1.5)} \cdot {(m+1)^{2\cdot(1.5^{i+1} - 1.5)}} \\
&< 2^{(m+1) \cdot 1.5^i } \cdot 34^{2\cdot(1.5^{i+1} - 1.5)} \cdot {(m+1)^{2\cdot(1.5^{i+1} - 1.5)}}
\end{align*}

\paragraph{Bounding $\mu_{m+1}$.}
We prove that:
\begin{align}
\mu_{m+1} &\leq \frac{(2^{m+1} \cdot \tau)^2}{2} +  \alpha \cdot (2^{m+1} \cdot \tau)^{2.5} \cdot m^{4.5} \label{eq:group-exp-bound}.
\end{align}
(where $\alpha$ is the universal constant specified above). The proof will be by induction over $m$. For the base case $m=0$, we know that $\mu \leq \tau^2/2$ by the lemma conditions. For the induction step, by Lemma \ref{lemma:group-expectation} and using $\mu_m \leq \tau_m^2/2$:
\begin{align}
\mu_{m+1} &\leq 2\mu_{m} + \tau_{m}^2 + 3.5\tau_m^{3} + 1.5\tau_m^4. \label{eq:group-exp-recursive}
\end{align}
Using the induction hypothesis and the bound on the subgaussian standard $\tau_m$ shown above (Inequality \eqref{eq:group-bound}), we conclude (so long as $\tau_m$ is a sufficiently small constant) that:
\begin{align*}
\mu_{m+1} &\leq 2\mu_{m} + \left( \tau_{m}^2 + 3.5\tau_m^{3} + 1.5\tau_m^4 \right) \\
&\leq 2\mu_{m} + \left( (2^{m} \cdot \tau)^2 + 3\alpha \cdot ((2^{m} \cdot \tau)^{2.5} \cdot m^{4.5}) \right)\\
&\leq \left((2^{m} \cdot \tau)^2 + 2\alpha \cdot (2^{m} \cdot \tau)^{2.5} \cdot m^{4.5} \right)+ \left( (2^{m} \cdot \tau)^2 + 3\alpha \cdot ((2^{m} \cdot \tau)^{2.5} \cdot m^{4.5}) \right) \\
&= 2(2^{m} \cdot \tau)^2 + 5\alpha \cdot ((2^{m} \cdot \tau)^{2.5} \cdot m^{4.5}) \\
& < \frac{(2^{m+1} \cdot \tau)^2}{2} + \alpha \cdot (2^{m+1} \cdot \tau)^{2.5} \cdot m^{4.5}.
\end{align*}
This implies the claimed bound on $\mu_{\log s}$ (since $(s \cdot \polylog s \cdot \tau )< 1$).

\paragraph{Relationship between $\mu_{m+1}$ and $\tau_{m+1}$.} Finally, we show that the above bounds maintain that $\mu_{m+1} \leq \tau_{m+1}^2/2$. To see this:
\begin{align*}
\tau_{m+1}^2 &=  \left((2^{m+1} \cdot \tau) + \alpha \cdot (2^{m+1} \cdot (m+1)^3 \cdot \tau)^{1.5} \right) \\
&\geq (2^{m+1} \cdot \tau)^2 + 2\alpha \cdot (2^{m+1} \cdot \tau) \cdot (2^{m+1} \cdot (m+1)^3 \cdot \tau)^{1.5} \\
&= (2^{m+1} \cdot \tau)^2 +  2\alpha \cdot (2^{m+1} \cdot \tau)^{2.5} \cdot m^{4.5} \\
&\geq 2\mu_{m+1}
\end{align*}

\end{proof}

\subsection{Group Privacy: Bounding the Expected Privacy Loss}
\label{subsec:group-expectation}

In this section we bound the {\em expected} privacy loss for groups, using the following Lemma:
\begin{lemma} \label{lemma:group-expectation}
Let $D,D',D''$ be distributions over domain $\X$, such that
$\DsubG(D||D'),  \DsubG(D'||D) \preceq (\mu_1,\sub_1)$ and that $\DsubG(D'||D''),  \DsubG(D''||D') \preceq (\mu_2,\sub_2)$. Suppose moreover that $\sub_1 \leq 1/3$. Then it is also the case that:
$$\DKL(D||D'') \leq \mu_1 + \mu_2 + \tau_1 \cdot \tau_2 + \left( (2\tau_1^2 \cdot \tau_2) + ((\tau_1 + 3\tau_1^2) \cdot \mu_2) \right)$$
\end{lemma}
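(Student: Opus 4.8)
\emph{Proof plan.} The plan is to route the privacy loss of $D$ against $D''$ through the ``midpoint'' $D'$, bounding one piece by the KL bound built into $\DsubG$ and the other via Lemma~\ref{lemma:subgauss-product}. Since $\DsubG(D||D')$ and $\DsubG(D'||D'')$ are assumed defined, we have $\Supp(D)=\Supp(D')=\Supp(D'')$, so $\DKL(D||D'')$ is defined. Pointwise $\ln\frac{D[x]}{D''[x]}=\ln\frac{D[x]}{D'[x]}+\ln\frac{D'[x]}{D''[x]}$, so taking expectations over $x\sim D$,
\[
\DKL(D||D'') = \DKL(D||D') + \Ex_{x\sim D}\!\left[\ln\frac{D'[x]}{D''[x]}\right].
\]
The first term is at most $\mu_1$: by Definition~\ref{def:privacy-loss}, $\DKL(D||D')=\Ex[\Loss_{(D||D')}]$, and this is $\le\mu_1$ by the first condition of $\DsubG(D||D')\preceq(\mu_1,\tau_1)$.

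For the second (``cross'') term I would change measure to $D'$:
\[
\Ex_{x\sim D}\!\left[\ln\frac{D'[x]}{D''[x]}\right] = \Ex_{x\sim D'}\!\left[\frac{D[x]}{D'[x]}\cdot\ln\frac{D'[x]}{D''[x]}\right].
\]
Under $x\sim D'$ set $W=\ln\frac{D[x]}{D'[x]}$ and $L=\ln\frac{D'[x]}{D''[x]}$; these are jointly distributed with $W=-\Loss_{(D'||D)}$ and $L=\Loss_{(D'||D'')}$. From $\DsubG(D'||D)\preceq(\mu_1,\tau_1)$ we get $\Ex[W]=-\DKL(D'||D)\in[-\mu_1,0]$ and $\xi_1:=W-\Ex[W]$ is $\tau_1$-subgaussian (negation preserves the subgaussian property). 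From $\DsubG(D'||D'')\preceq(\mu_2,\tau_2)$ we get $\Ex[L]=\DKL(D'||D'')\in[0,\mu_2]$ and, by Fact~\ref{fact:subgauss-variance}, $\Var(L)\le\tau_2^2$. Writing $\frac{D[x]}{D'[x]}=e^{W}=e^{\Ex[W]}\cdot e^{\xi_1}$ with $e^{\Ex[W]}\le1$, the cross term equals $e^{\Ex[W]}\cdot\Ex\!\left[L\,e^{\xi_1}\right]$.

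I would then apply Lemma~\ref{lemma:subgauss-product} with $X=L$ and $Y=\xi_1$ --- legitimate since $\xi_1$ is $\tau_1$-subgaussian and $\tau_1\le1/3$ by hypothesis, and since $\Ex[L]\ge0$ so that the second bound of the lemma applies --- obtaining $\Ex[L\,e^{\xi_1}]\le\Ex[L]+(\sqrt{\Var(L)}+\Ex[L])(\tau_1+3\tau_1^2)\le\mu_2+(\tau_2+\mu_2)(\tau_1+3\tau_1^2)$. To finish, dispatch the sign of $\Ex[L\,e^{\xi_1}]$: if it is nonnegative, multiplying by $e^{\Ex[W]}\in(0,1]$ only decreases it; if it is negative, the cross term is negative and so below the (nonnegative) right-hand side. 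Either way the cross term is at most $\mu_2+(\tau_2+\mu_2)(\tau_1+3\tau_1^2)$, and adding the $\mu_1$ from the first term and expanding the product gives a bound of exactly the advertised shape. Matching the precise constants --- in particular getting the coefficient $2$ rather than $3$ on the $\tau_1^2\tau_2$ term --- is the one place that needs care: one keeps the quadratic Taylor term $\Ex[L\cdot\xi_1^2/2]$ of $e^{\xi_1}$ explicit and bounds it by $\tfrac12\sqrt{\Ex[L^2]}\sqrt{\Ex[\xi_1^4]}$ using the fourth-moment estimate of Fact~\ref{fact:subgauss-moments} (which yields $\le 2\tau_1^2(\tau_2+\mu_2)$), the $k\ge3$ Taylor tail contributing only lower-order corrections that get absorbed.

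I expect the main obstacle to be organizational rather than conceptual: the core is just the change-of-measure identity together with a single invocation of Lemma~\ref{lemma:subgauss-product}, but one must (i) handle the sign of $\Ex[L\,e^{\xi_1}]$ cleanly via the dichotomy above, using $e^{\Ex[W]}\le1$, and (ii) track the several correction terms --- $\tau_1\tau_2$, $\tau_1^2\tau_2$, $\tau_1\mu_2$, $\tau_1^2\mu_2$ --- carefully enough to present the bound in the stated clean form. The repeated appeals to $\tau_1\le1/3$ (to turn $\frac{2\tau_1^2}{1-\tau_1}$-type quantities into polynomial bounds) are where the explicit constants are pinned down.
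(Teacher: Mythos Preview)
Your approach is essentially identical to the paper's: decompose $\DKL(D\|D'')=\DKL(D\|D')+\Ex_{x\sim D}[\ln\frac{D'[x]}{D''[x]}]$, change measure to $D'$ in the second term, factor out $e^{-\DKL(D'\|D)}\le 1$, and apply Lemma~\ref{lemma:subgauss-product} with $X=\ln\frac{D'[x]}{D''[x]}$ and $Y$ the centered $-S(x)$. Your sign dichotomy for $\Ex[L\,e^{\xi_1}]$ is in fact more careful than the paper, which silently drops the $e^{-\DKL(D'\|D)}$ factor without checking the sign of the remaining expectation.

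One point where you are doing unnecessary work: the paper's own proof, after applying Lemma~\ref{lemma:subgauss-product} and expanding $(\tau_2+\mu_2)(\tau_1+3\tau_1^2)$, arrives at the bound
\[
\mu_1+\mu_2+\tau_1\tau_2+\bigl(3\tau_1^2\tau_2+(\tau_1+3\tau_1^2)\mu_2\bigr),
\]
with coefficient $3$, not $2$, on the $\tau_1^2\tau_2$ term. The ``$2$'' in the lemma statement appears to be a typo; the paper makes no attempt at the finer Taylor-term analysis you sketch, and the downstream use of this lemma (in the proof of Theorem~\ref{thm:group-CDP}) is insensitive to this constant. So you can drop that refinement and simply invoke Lemma~\ref{lemma:subgauss-product} once, exactly as you describe in the first part of your plan.
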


\begin{proof}
For $x \in \X$, we define $S(x)$ to be the {\em centered} value $S(x) = \ln \frac{D'[x]}{D[x]} - \DKL(D'||D)$, and $S''(x)$ to be the centered value
$S''(x) = \ln \frac{D'[x]}{D''[x]} - \DKL(D'||D'')$. When $x$ is drawn by $D'[x]$, both $S[x]$ and $S''[x]$ are (centered) subgaussian random variables. We use $\Var(S),\Var(S'')$ to denote the variances of these random variables (which are bounded by $\sub_1^2,\sub_2^2$ respectively). We have that:
\begin{align}
\DKL(D||D'') &= \sum_{x \in \X} \left( D[x] \cdot \ln \frac{D[x]}{D''[x]} \right)  \nonumber \\
&= \sum_{x \in \X} \left( D[x] \cdot \big( \ln \frac{D[x]}{D'[x]} +  \ln \frac{D'[x]}{D''[x]} \big) \right) \nonumber \\
&=  \sum_{x \in \X} \left( D[x] \cdot \ln \frac{D[x]}{D'[x]} \right)  + \sum_{x \in \X} \left( D[x] \cdot \ln \frac{D'[x]}{D''[x]} \right) \label{eq:group-Exp-1}
\end{align}
The first of these summands is $\DKL(D||D')=\mu_1$. We bound the second summand:
\begin{align}
\sum_{x \in \X} \left( D[x] \cdot \ln \frac{D'[x]}{D''[x]} \right) &= \sum_{x \in \X} \left( D'[x] \cdot \frac{D[x]}{D'[x]} \cdot \ln \frac{D'[x]}{D''[x]} \right) \nonumber \\
&= \sum_{x \in \X} \left( D'[x] \cdot e^{-ln \frac{D'[x]}{D[x]}} \cdot \ln \frac{D'[x]}{D''[x]} \right) \nonumber \\
&= \sum_{x \in \X} \left( D'[x] \cdot e^{- (ln \frac{D'[x]}{D[x]} - \DKL(D'||D) + \DKL(D'||D))} \cdot \ln \frac{D'[x]}{D''[x]} \right) \nonumber \\
&= e^{-\DKL(D'||D)} \cdot \sum_{x \in \X} \left( D'[x] \cdot e^{-S(x)} \cdot \ln \frac{D'[x]}{D''[x]} \right) \nonumber \\
&\leq \sum_{x \in \X} \left( D'[x] \cdot e^{-S(x)} \cdot \ln \frac{D'[x]}{D''[x]} \right) \nonumber \\
&= \Ex_{x \sim D'} \left[ \ln \frac{D'[x]}{D''[x]} \cdot e^{-S(x)} \right] \nonumber
\end{align}
where the next-to-last inequality is by non-negativity of KL-divergence. Recall that $S(x)$ denotes the {\em centered} log-ratio of probabilities by $D'$ and by $D$ (and is $\tau_1$-Subgaussian). By Lemma \ref{lemma:subgauss-product} we conclude that:
\begin{align}
\sum_{x \in \X} \left( D[x] \cdot \ln \frac{D'[x]}{D''[x]} \right) &\leq \Ex_{x \sim D'} \left[ \ln \frac{D'[x]}{D''[x]} \cdot e^{-S(x)} \right] \nonumber \nonumber \\
&\leq \DKL(D'||D'') + \left( \sqrt{\Var(S'')} + \DKL(D'||D'') \right) \cdot (\tau_1 + 3\tau_1^2) \nonumber \\
&\leq \mu_2 + \left( \sqrt{\tau_2^2} + \mu_2 \right) \cdot (\tau_1 + 3\tau_1^2) \nonumber \\
&= \mu_2 + \tau_1 \cdot \tau_2 + \left( (3\tau_1^2 \cdot \tau_2) + ((\tau_1 + 3\tau_1^2) \cdot \mu_2) \right) \label{eq:group-Exp-2}
\end{align}

Putting together Equations \eqref{eq:group-Exp-1} and \eqref{eq:group-Exp-2} we get that:
\begin{align}
\DKL(D||D'') &\leq \mu_1 + \mu_2 + \tau_1 \cdot \tau_2 + \left( (3\tau_1^2 \cdot \tau_2) + ((\tau_1 + 3\tau_1^2) \cdot \mu_2) \right) \nonumber
\end{align}
\end{proof}

\subsection{Group Privacy: Bounding the Subgaussian Standard}
\label{subsec:group-standard}

\begin{lemma} \label{lemma:group-standard}
Let $D,D',D''$ be distributions over domain $\X$, such that
$\DsubG(D||D'),  \DsubG(D'||D) \preceq (\mu_1,\sub_1)$ and that $\DsubG(D'||D''),  \DsubG(D''||D') \preceq (\mu_2,\sub_2)$. Suppose moreover that $\sub_1,\sub_2 \leq \sub \leq 1/4$ and that $\mu_1,\mu_2 \leq \sub^2/2$. Then for any real $\lambda$:
\begin{align}
\Ex_{x \sim D} \left[ e^{\lambda \cdot (\ln \frac{D[x]}{D''[x]} - \DKL(D||D''))} \right] &\leq e^{\frac{\lambda^2}{2} \cdot (2\tau + 34\tau^{1.5})^2 } \label{eq:group-std-2}
\end{align}
i.e. the (centered) privacy-loss random variable from $D$ to $D''$ is subgaussian, and its standard is bounded by $2\tau + O(\tau^{1.5})$.
\end{lemma}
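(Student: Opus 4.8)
The plan is to split the two-hop privacy loss, $\ln\frac{D[x]}{D''[x]} = L_1(x) + L_2(x)$ with $L_1 = \ln\frac{D}{D'}$ and $L_2 = \ln\frac{D'}{D''}$, to show that each of $L_1,L_2$ — recentered appropriately — is subgaussian \emph{when $x$ is drawn from $D$}, and then to add the two pieces. Write $\mu_1' = \DKL(D||D') = \Ex_{x\sim D}[L_1]$, $\mu_1'' = \DKL(D'||D)$, $\mu_2' = \DKL(D'||D'') = \Ex_{x\sim D'}[L_2]$ and $\nu = \Ex_{x\sim D}[L_2]$, so that $\DKL(D||D'') = \mu_1'+\nu$. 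The hypotheses give: $L_1 - \mu_1'$ is $\tau_1$-subgaussian under $D$; $L_1 + \mu_1''$ is $\tau_1$-subgaussian under $D'$; $\xi_2 := L_2 - \mu_2'$ is $\tau_2$-subgaussian under $D'$; and $\mu_1',\mu_1'',\mu_2' \le \tau^2/2$. I will also use the exact identities $\Ex_{x\sim D'}[e^{L_1}] = 1$ (so $\Ex_{x\sim D'}[e^{L_1+\mu_1''}] = e^{\mu_1''}\le e^{\tau_1^2/2}$) and $\Ex_{x\sim D}[f] = \Ex_{x\sim D'}[e^{L_1}f]$.

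The first hop is free: $L_1 - \mu_1'$ under $D$ is exactly $\tau_1$-subgaussian. The work is transporting the second hop from $D'$ to $D$, i.e.\ showing that $L_2 - \nu$ under $D$ is $\tau_2'$-subgaussian with $\tau_2' = \tau_2 + O(\tau^{3/2})$. For real $\theta$, changing measure and peeling off means,
\[
\Ex_{x\sim D}\!\left[e^{\theta(L_2-\nu)}\right] = e^{-\theta\gamma-\mu_1''}\,\Ex_{x\sim D'}\!\left[e^{L_1+\mu_1''}\,e^{\theta\xi_2}\right],\qquad \gamma := \nu-\mu_2' = \Ex_{x\sim D'}\!\left[e^{L_1}\xi_2\right],
\]
and $|\gamma|\le \tau_2(\tau_1+3\tau_1^2)$ by Lemma~\ref{lemma:subgauss-product} (with $Y = L_1+\mu_1''$, legitimate since $\tau_1\le \tau\le 1/3$, applied to $\pm\xi_2$; Fact~\ref{fact:subgauss-variance} bounds $\Var(\xi_2)\le\tau_2^2$). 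Expanding $e^{\theta\xi_2} = 1 + \theta\xi_2 + g(\theta\xi_2)$ with $g(u) := e^u - 1 - u \ge 0$, and using $\Ex_{x\sim D'}[e^{L_1+\mu_1''}] = e^{\mu_1''}$ and $\Ex_{x\sim D'}[\xi_2 e^{L_1+\mu_1''}] = \gamma e^{\mu_1''}$, the inner expectation equals $e^{\mu_1''}(1+\theta\gamma) + R(\theta)$ where $R(\theta) := \Ex_{x\sim D'}[e^{L_1+\mu_1''}g(\theta\xi_2)] \ge 0$. The crucial point is that the linear term now cancels exactly, via $(1+u)e^{-u}\le 1$:
\[
\Ex_{x\sim D}\!\left[e^{\theta(L_2-\nu)}\right] = e^{-\theta\gamma}(1+\theta\gamma) + e^{-\theta\gamma-\mu_1''}R(\theta) \le 1 + e^{|\theta\gamma|}R(\theta).
\]
I would bound $R(\theta)$ in two regimes. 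For $|\theta|$ above a threshold $\asymp \tau^{-1/2}$, the crude bound $R(\theta) \le \Ex_{x\sim D'}[e^{L_1+\mu_1''}(e^{\theta\xi_2}+e^{-\theta\xi_2})] \le 2e^{(\tau_1+|\theta|\tau_2)^2/2}$ (subgaussianity of sums, since $g(u)\le e^u+e^{-u}$) is enough, as the resulting additive $O(\tau^2)$ and linear $O(\tau^2)|\theta|$ errors are absorbed into the $\asymp \tau^{5/2}\theta^2$ of slack in $e^{\theta^2(\tau_2')^2/2}$. For $|\theta|$ below the threshold, where $\theta^2\tau_2^2 = O(1)$ and $e^{|\theta\gamma|} = 1+O(\tau^{3/2})$, I use $g(u)\le \tfrac{u^2}{2}e^{|u|}$, Lemma~\ref{lemma:subgauss-product} again, and the moment bounds of Fact~\ref{fact:subgauss-moments} to get $R(\theta) = \tfrac{\theta^2\tau_2^2}{2}\big(1+O(\sqrt\tau)\big)$, hence $1 + e^{|\theta\gamma|}R(\theta) \le e^{\theta^2(\tau_2')^2/2}$.

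Finally I combine the hops. For jointly distributed $a$-subgaussian $X$ and $b$-subgaussian $Y$, Hölder with conjugate exponents $p = \tfrac{a+b}{a}$, $q = \tfrac{a+b}{b}$ gives $\Ex[e^{\lambda(X+Y)}] \le \Ex[e^{p\lambda X}]^{1/p}\Ex[e^{q\lambda Y}]^{1/q} \le e^{\lambda^2(pa^2+qb^2)/2} = e^{\lambda^2(a+b)^2/2}$, so $X+Y$ is $(a+b)$-subgaussian. Taking $X = L_1-\mu_1'$ ($\tau_1$-subgaussian under $D$) and $Y = L_2-\nu$ ($\tau_2'$-subgaussian under $D$), and noting $X+Y = \ln\frac{D[x]}{D''[x]} - \DKL(D||D'')$ because $\DKL(D||D'') = \mu_1'+\nu$, we obtain
\[
\Ex_{x\sim D}\!\left[e^{\lambda(\ln\frac{D[x]}{D''[x]}-\DKL(D||D''))}\right] \le e^{\frac{\lambda^2}{2}(\tau_1+\tau_2')^2} \le e^{\frac{\lambda^2}{2}(2\tau+34\tau^{3/2})^2},
\]
since $\tau_1+\tau_2' \le \tau_1+\tau_2+O(\tau^{3/2}) \le 2\tau + 34\tau^{3/2}$; the hypothesis that $\tau$ (indeed $\tau\cdot s\cdot\log^3 s$) is a sufficiently small constant is what legitimizes the hidden constants and the $O(\cdot)$ estimates. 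The main obstacle is precisely the second hop: a naive Cauchy--Schwarz/Hölder split of $\Ex_{x\sim D'}[e^{L_1}e^{\theta\xi_2}]$ produces a spurious constant factor $e^{\Theta(\tau_1^2)}$ and a spurious $O(\tau^2)|\theta|$ term, both fatal to a clean subgaussian bound near $\theta = 0$; removing them forces one to exploit $\Ex_{x\sim D'}[e^{L_1}] = 1$ and $\Ex_{x\sim D}[L_2-\nu] = 0$ through the Taylor split and $(1+u)e^{-u}\le 1$, after which the remaining effort is the careful but routine uniform-in-$\theta$ estimate of $R(\theta)$.
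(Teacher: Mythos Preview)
Your approach is genuinely different from the paper's and is essentially sound. The paper works directly with the full centered MGF $\Ex_{x\sim D}[e^{\lambda(\ln\frac{D[x]}{D''[x]} - \DKL(D||D''))}]$ and splits into two regimes in $\lambda$: for $|\lambda| \ge \tfrac{1}{8\sqrt\tau}$ it changes measure to $D'$ and applies Cauchy--Schwarz to decouple the two centered log-ratios $S,S''$; for $|\lambda| < \tfrac{1}{8\sqrt\tau}$ it Taylor-expands the MGF and bounds the second moment and the tail $\sum_{k\ge 3}$ of $\ln\frac{D}{D''}$ under $D$ by lengthy moment calculations (Claims~\ref{claim:term-2} and~\ref{claim:term-3}). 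You instead make the argument \emph{modular}: $L_1-\mu_1'$ is $\tau_1$-subgaussian under $D$ for free, and you transport $L_2$ from $D'$ to $D$ to show $L_2-\nu$ is $(\tau_2+O(\tau^{3/2}))$-subgaussian under $D$, then combine the two via the H\"older identity $\Ex[e^{\lambda(X+Y)}]\le e^{\lambda^2(a+b)^2/2}$ with exponents $p=(a+b)/a$, $q=(a+b)/b$. That last step is not in the paper and is a clean, reusable observation; it is what lets you get the additive $\tau_1+\tau_2'$ rather than having to squeeze both pieces through a single Cauchy--Schwarz as the paper does. Your cancellation of the linear term via $(1+u)e^{-u}\le 1$ after peeling off $\Ex_{D'}[e^{L_1}]=1$ and $\Ex_{D'}[e^{L_1}\xi_2]=\gamma$ is also cleaner than the paper's handling of the small-$\lambda$ expansion.

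One small repair: in your large-$|\theta|$ regime, routing through $1 + e^{|\theta\gamma|}R(\theta)$ with $R(\theta)\le 2e^{(\tau_1+|\theta|\tau_2)^2/2}$ leaves an additive $\ln 3$ in the exponent that is \emph{not} absorbed by the $\theta^2\cdot O(\tau^{5/2})$ slack at the threshold $|\theta|\asymp \tau^{-1/2}$ (there the slack is only $O(\tau^{3/2})$). The fix is to skip the Taylor split entirely in this regime and use the direct bound you already set up,
\[
\Ex_{x\sim D}\!\big[e^{\theta(L_2-\nu)}\big] \;=\; e^{-\theta\gamma-\mu_1''}\,\Ex_{x\sim D'}\!\big[e^{(L_1+\mu_1'')+\theta\xi_2}\big] \;\le\; e^{-\theta\gamma-\mu_1''+(\tau_1+|\theta|\tau_2)^2/2},
\]
where the inequality is your H\"older-for-sums applied once at $\lambda=1$. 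This has no stray constants, and the exponent $|\theta\gamma| + \tau_1\tau_2|\theta| + \tau_1^2/2 = O(\tau^2)|\theta| + O(\tau^2)$ is now genuinely dominated by $\theta^2\cdot O(\tau^{5/2})$ for $|\theta|\ge C\tau^{-1/2}$. This is exactly the structure of the paper's Case~I, with your H\"older replacing its Cauchy--Schwarz.
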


\begin{proof}
We assume without loss of generality that:
\begin{align}
& \lambda \cdot \DKL(D'||D) \geq \lambda \cdot \DKL(D'||D'') \nonumber \\
& \Rightarrow (\lambda+1) \cdot \DKL(D'||D) \geq \lambda \cdot \DKL(D'||D'')  \label{eq:group-std-1}
\end{align}
(otherwise we flip the roles of $D$ and $D''$). As in the proof of Lemma \ref{lemma:group-expectation}, for $x \in \X$, we define $S$ to be the {\em centered} value $S(x) = \ln \frac{D'[x]}{D[x]} - \DKL(D'||D)$, and $S''(x)$ to be the centered value
$S''(x) = \ln \frac{D'[x]}{D''[x]} - \DKL(D'||D'')$. Recall that when $x$ is drawn by $D'[x]$, both $S[x]$ and $S''[x]$ are (centered) subgaussian random variables.

We want to show that Inequality \eqref{eq:group-std-2} holds for {\em any} real $\lambda$. We proceed with a case analysis for the value of $\lambda$ as a function of $\tau$.

\paragraph{Case $I$: $|\lambda| \geq \frac{1}{8\sqrt{\tau}}$.} Observe that:
\begin{align}
\Ex_{x \sim D} \left[ e^{\lambda \cdot \ln \frac{D[x]}{D''[x]}} \right] &= \sum_{x \in \X} \left( D[x] \cdot e^{\lambda \cdot \ln \frac{D[x]}{D''[x]}} \right) \nonumber \\
&= \sum_{x \in \X} \left( D'[x] \cdot e^{\ln \frac{D[x]}{D'[x]}} \cdot e^{\lambda \cdot (\ln \frac{D[x]}{D'[x]} + \ln \frac{D'[x]}{D''[x]})} \right) \nonumber \\
&= \sum_{x \in \X} \left( D'[x] \cdot e^{-(\lambda+1) \cdot \ln \frac{D'[x]}{D[x]}} \cdot e^{\lambda \cdot \ln \frac{D'[x]}{D''[x]}} \right) \nonumber \\
&= \sum_{x \in \X} \left( D'[x] \cdot e^{-(\lambda+1) \cdot (S(x) + \DKL(D'||D))} \cdot e^{\lambda \cdot (S''(x) + \DKL(D'||D''))} \right) \nonumber \\
&= \left( e^{(-(\lambda+1) \cdot \DKL(D'||D)) + (\lambda \cdot \DKL(D'||D'')) }\right) \cdot \sum_{x \in \X} \left( D'[x] \cdot e^{-(\lambda+1) \cdot S(x)} \cdot e^{\lambda \cdot S''(x)} \right) \nonumber \\
&\leq \sum_{x \in \X} \left( D'[x] \cdot e^{-(\lambda+1) \cdot S(x)} \cdot e^{\lambda \cdot S''(x)} \right) \nonumber
\end{align}
where the last inequality is by Equation \eqref{eq:group-std-1}. Recall that $S(x)$ and $S''(x)$ are both subgaussian (when $x$ is drawn by $D'[x]$), with standards $\tau_1,\tau_2$. Applying the Cauchy-Schwartz inequality to the above, we conclude:
\begin{align}
\Ex_{x \sim D} \left[ e^{\lambda \cdot \ln \frac{D[x]}{D''[x]}} \right] &\leq \sqrt{\sum_{x \in \X}  D'[x] \cdot e^{-2(\lambda+1) \cdot S(x)}} \cdot \sqrt{\sum_{x \in \X} D'[x] \cdot e^{-2\lambda \cdot S''(x)}} \nonumber \\
&\leq \sqrt{e^{(2\lambda^2 + 4\lambda + 2) \cdot \tau_1^2}} \cdot \sqrt{e^{2\lambda^2 \cdot \tau_2^2}} \nonumber \\
&= e^{(\lambda+1)^2 \cdot \tau_1^2 + \lambda^2 \cdot \tau_2^2} \label{eq:group-std-9}
\end{align}
Since $|\lambda| \geq \frac{1}{8\sqrt{\tau}}$, we have that:
\begin{align*}
(\lambda+1)^2 &= \lambda^2 \cdot (1 + \frac{1}{\lambda})^2 \\
&\leq \lambda^2 \cdot (1 + 8\sqrt{\tau})^2 \\
&= \lambda^2 \cdot (1 + 16\sqrt{\tau} + 64\tau) \\
&\leq \lambda^2 \cdot (1 + 48\sqrt{\tau})
\end{align*}
(recall also that $\tau \leq 1/4$, so $\tau \leq \sqrt{\tau}/2$). Plugging this into Equation \eqref{eq:group-std-9}, we get that:
\begin{align}
\Ex_{x \sim D} \left[ e^{\lambda \cdot \ln \frac{D[x]}{D''[x]}} \right] &\leq  e^{(\lambda+1)^2 \cdot \tau_1^2 + \lambda^2 \cdot \tau_2^2} \nonumber \\
&\leq e^{(\lambda^2 \cdot \tau_1^2 \cdot (1 + 48\sqrt{\tau})) + (\lambda^2 \cdot \tau_2^2)} \nonumber \\
&\leq e^{\lambda^2 \cdot (\tau_1^2 + \tau_2^2 + 48\tau^{2.5})} \nonumber \\
&\leq e^{\frac{\lambda^2}{2} \cdot (2\tau + 12\tau^{1.5})^2}. \nonumber
\end{align}
We conclude that for $\lambda \geq \frac{1}{8\sqrt{\tau}}$, the ``standard'' is bounded by $(2\tau + 12\tau^{1.5})$.

\paragraph{Case $II$: $|\lambda| < \frac{1}{8\sqrt{\tau}}$}
Taking a Taylor expansion, we get that:
\begin{align}
& \Ex_{x \sim D} \left[ e^{\lambda \cdot \ln \frac{D[x]}{D''[x]}} \right] = 1 + \lambda \cdot \DKL(D||D'') + \frac{\lambda^2}{2} \cdot \Ex_{x \sim D} \left[ \ln^2 \frac{D[x]}{D''[x]} \right] + \sum_{k=3}^{\infty} \frac{\lambda^k}{k!} \cdot \Ex_{x \sim D} \left[ \ln^k \frac{D[x]}{D''[x]} \right] \label{eq:group-std-3}
\end{align}
(where we observe that the linear summand in the Taylor expansion is the expected log ratio or ``privacy loss'' from $D$ to $D''$). In the following two claims, we bound the higher moments in the Taylor expansion:

\begin{claim} \label{claim:term-2}
\begin{align}
\Ex_{x \sim D} \left[ \ln^2 \frac{D[x]}{D''[x]} \right] &\leq 2(\tau_1^2 + \tau_2^2) + 2\mu_1^2 + 2\mu_2^2 + 50\tau_1 \cdot \tau_2^2 \nonumber \\
&\leq 4\tau^2 + 51 \tau^3 \nonumber \\
&\leq 4\tau^2 + 26 \tau^{2.5}
\end{align}
\end{claim}

\begin{claim} \label{claim:term-3}
\begin{align}
\sum_{k=3}^{\infty} \frac{\lambda^k}{k!} \cdot \Ex_{x \sim D} \left[ \ln^k \frac{D[x]}{D''[x]} \right] &\leq (33 \cdot \tau_1^{2.5} \cdot \lambda^2) + (32 \cdot \tau_2^{2.5} \cdot \lambda^2) \nonumber \\
&\leq 55 \cdot \tau^{2.5} \cdot \lambda^2 \nonumber
\end{align}
\end{claim}

The proofs of Claims \ref{claim:term-2} and \ref{claim:term-3} follow below. Before presenting these proofs, we complete the proof of the bound for case $II$. Plugging the bounds from the Claims into Inequality \eqref{eq:group-std-3} we get:
\begin{align}
\Ex_{x \sim D} \left[ e^{\lambda \cdot \ln \frac{D[x]}{D''[x]}} \right] &= 1 + \lambda \cdot \DKL(D||D'') + \frac{\lambda^2 \cdot (4\tau^2 + 26 \tau^{2.5})}{2} + 55 \tau^{2.5} \cdot \lambda^2 \nonumber \\
 &= 1 + \lambda \cdot \DKL(D||D'') + \frac{4\lambda^2 \cdot \tau^2}{2} + 68 \tau^{2.5} \cdot \lambda^2 \nonumber \\
&\leq e^{\lambda \cdot \DKL(D||D'') + \frac{\lambda^2}{2} \cdot ((2\tau)^2 + 136 \tau^{2.5})} \nonumber \\
&\leq e^{\lambda \cdot \DKL(D||D'') + \frac{\lambda^2}{2} \cdot (2\tau + 34\tau^{1.5})^2 } \nonumber
\end{align}
Thus, for the {\em centered} privacy-loss random variable we have:
\begin{align}
\Ex_{x \sim D} \left[ e^{\lambda \cdot (\ln \frac{D[x]}{D''[x]} - \DKL(D||D''))} \right] &\leq e^{\frac{\lambda^2}{2} \cdot (2\tau + 34\tau^{1.5})^2 } \nonumber
\end{align}
We conclude that for $\lambda < \frac{1}{8\sqrt{\tau}}$, the ``standard'' is bounded by $(2\tau + 34\tau^{1.5})$.

Before proceeding to prove the claims, we state and prove the following useful fact:
\begin{fact} \label{fact:abpowers}
For a real value $k \geq 1$ and for any two reals $a,b$:
$$(a+b)^k \leq 2^{k-1} \cdot (a^k + b^k)$$
\end{fact}

\begin{proof}
Since the function $x^k$ is convex, by Jensen's inequality we have:
\begin{align*}
(a+b)^k = \left( \frac{2a}{2}  + \frac{2b}{2} \right)^k \leq \frac{(2a)^k + (2b)^k}{2} = 2^{k-1} \cdot (a^k + b^k)
\end{align*}
\end{proof}

\begin{proof}[Proof of Claim \ref{claim:term-2}]
We observe that, since $(a+b)^2 \leq 2a^2 + 2b^2$ (for all $a,b$, see Fact \ref{fact:abpowers}):
\begin{align}
\Ex_{x \sim D} \left[ \ln^2 \frac{D[x]}{D''[x]} \right] &= \Ex_{x \sim D} \left[ (\ln \frac{D[x]}{D'[x]} + \ln \frac{D'[x]}{D''[x]})^2 \right] \nonumber \\
&\leq 2\Ex_{x \sim D} \left[ \ln^2 \frac{D[x]}{D'[x]} \right] + 2\Ex_{x \sim D} \left[ \ln^2 \frac{D'[x]}{D''[x]} \right] \nonumber \\
& \leq 2(\tau_1^2+\mu_1^2) + 2\Ex_{x \sim D} \left[ \ln^2 \frac{D'[x]}{D''[x]} \right]\label{eq:group-std-4}
\end{align}
where the last inequality uses Fact \ref{fact:subgauss-variance} (and the fact that $\Ex[X^2] = \Var(X) + (E[X])^2$ for any RV $X$). To bound the second summand in Inequality \eqref{eq:group-std-4}, we use Lemma \ref{lemma:subgauss-product} and conclude that:
\begin{align}
\Ex_{x \sim D} \left[ \ln^2 \frac{D'[x]}{D''[x]} \right] &= \Ex_{x \sim D'} \left[  \ln^2 \frac{D'[x]}{D''[x]} \cdot e^{-\ln \frac{D'[x]}{D[x]}} \right] \nonumber \\
&\leq  \Ex_{x \sim D'} \left[  \ln^2 \frac{D'[x]}{D''[x]} \right] + \sqrt{\Ex_{x \sim D'} \left[  \ln^4 \frac{D'[x]}{D''[x]} \right]} \cdot (\tau_1 + 3\tau_1^2) \nonumber \\
&=\Ex_{x \sim D'} \left[  \ln^2 \frac{D'[x]}{D''[x]} \right] + \sqrt{\Ex_{x \sim D'} \left[ (S''(x) + \DKL(D'||D''))^4 \right]} \cdot (\tau_1 + 3\tau_1^2) \nonumber \\
&\leq \Ex_{x \sim D'} \left[  \ln^2 \frac{D'[x]}{D''[x]} \right] + \sqrt{8 \cdot(\Ex_{x \sim D'} \left[ S''(x)^4 \right] + \DKL(D'||D'')^4 )} \cdot (\tau_1 + 3\tau_1^2) \nonumber
\end{align}
where the last inequality uses Fact \ref{fact:abpowers} (with $k=4$). By the bound on the 4-th moment of the subgaussian $S''(x)$ (Fact \ref{fact:subgauss-moments}), we conclude that:
\begin{align}
\Ex_{x \sim D} \left[ \ln^2 \frac{D'[x]}{D''[x]} \right] &\leq (\tau_2^2+\mu_2^2) + \sqrt{8 \cdot (16\tau_2^4 + \mu_2^4)} \cdot (\tau_1 + 3\tau_1^2) \nonumber \\
&\leq \nonumber (\tau_2^2+\mu_2^2) + \sqrt{8 \cdot (16\tau_2^4 + \mu_2^4)} \cdot 2\tau_1 \\
&\leq (\tau_2^2+\mu_2^2) + 24\tau_1 \cdot \tau_2^2 + 6 \tau_1 \cdot \mu_2^2 \nonumber \\
&\leq (\tau_2^2+\mu_2^2) + 25\tau_1 \cdot \tau_2^2 \label{eq:group-std-5}
\end{align}
Claim \ref{claim:term-2} follows from inequalities \eqref{eq:group-std-4} and \eqref{eq:group-std-5}:
\begin{align}
\Ex_{x \sim D} \left[ \ln^2 \frac{D[x]}{D''[x]} \right] &\leq 2(\tau_1^2+\mu_1^2)+ 2(\tau_2^2+\mu_2^2 + 25\tau_1 \cdot \tau_2^2) \nonumber \\
&\leq 2(\tau_1^2 + \tau_2^2) + 2\mu_1^2 + 2\mu_2^2 + 50\tau_1 \cdot \tau_2^2 \nonumber
\end{align}
\end{proof}

\begin{proof}[Proof of Claim \ref{claim:term-3}]
Observe that, by Fact \ref{fact:abpowers}:
\begin{align}
\Ex_{x \sim D} \left[ \ln^k \frac{D[x]}{D''[x]} \right] &= \Ex_{x \sim D}  \left[ (\ln \frac{D[x]}{D'[x]} + \ln \frac{D'[x]}{D''[x]})^k \right] \nonumber \\
&\leq \Ex_{x \sim D} \left[ 2^{k-1} \cdot(\ln^k \frac{D[x]}{D'[x]} + \ln^k \frac{D'[x]}{D''[x]}) \right] \nonumber\\
&= 2^{k-1} \cdot \left( \Ex_{x \sim D} \left[  \ln^k \frac{D[x]}{D'[x]} \right]  + \Ex_{x \sim D} \left[ \ln^k \frac{D'[x]}{D''[x]} \right] \right) \label{eq:group-std-6}
\end{align}
By Fact \ref{fact:subgauss-moments}, and using also Fact \ref{fact:abpowers}, the first term is bounded by:
\begin{align}
\Ex_{x \sim D} \left[  \ln^k \frac{D[x]}{D'[x]} \right] &\leq 2^{k-1} \cdot \left( \Ex_{x \sim D} \left[ S(x)^k + \DKL(D||D')^k \right]\right) \nonumber \\
&\leq (2^{k-1} \cdot 2^{\lceil k/2 \rceil +1} \cdot (\lceil k/2 \rceil)! \cdot \tau_1^k) + (2\mu_1)^k/2 \nonumber \\
&= (2\tau_1)^k \cdot  2^{\lceil k/2 \rceil} \cdot (\lceil k/2 \rceil)! + (2\mu_1)^k/2 \nonumber \\
&\leq  \left( (2\tau_1)^k \cdot 2^{k/2+1} \cdot \sqrt{k!}\right) + (2\mu_1)^k/2 \label{eq:group-std-7}
\end{align}
We bound the second term using Lemma \ref{lemma:subgauss-product} and Fact \ref{fact:abpowers}:
\begin{align}
\Ex_{x \sim D} \left[ \ln^k \frac{D'[x]}{D''[x]} \right] &= \Ex_{x \sim D'} \left[ \ln^k \frac{D'[x]}{D''[x]} \cdot e^{-\ln \frac{D[x]}{D'[x]}} \right] \nonumber \\
&\leq \Ex_{x \sim D'} \left[ \ln^k \frac{D'[x]}{D''[x]} \right] + \sqrt{\Ex_{x \sim D'} \left[ \ln^{2k} \frac{D'[x]}{D''[x]} \right]} \cdot (\tau_1 + 3\tau_1^2) \nonumber \\
&\leq 2^{k-1} \cdot \Ex_{x \sim D'} \left[ S''(x)^k + \DKL(D'||D'')^k \right] +  \sqrt{\Ex_{x \sim D'} \left[ \ln^{2k} \frac{D'[x]}{D''[x]} \right]} \cdot (\tau_1 + 3\tau_1^2) \nonumber \\
&\leq 2^{k-1} \left( \Ex_{x \sim D'} \left[ S''(x)^k \right]  + \mu_2^k \right)  +  \sqrt{\Ex_{x \sim D'} \left[ \ln^{2k} \frac{D'[x]}{D''[x]} \right]} \cdot (\tau_1 + 3\tau_1^2) \nonumber \\
&\leq 2^{k-1} \left( \Ex_{x \sim D'} \left[ S''(x)^k \right]  + \mu_2^k \right) + \sqrt{  2^{2k-1} \cdot \Ex_{x \sim D'} \left[ S''(x)^{2k} + \DKL(D'||D'')^{2k} \right] } \cdot (\tau_1 + 3\tau_1^2) \nonumber \\
&\leq  2^{k-1} \cdot  \left( \Ex_{x \sim D'} \left[ S''(x)^k \right]  + \mu_2^k \right) + \left( \sqrt{  2^{2k-1} \cdot \Ex_{x \sim D'} [ S''(x)^{2k} ] }+ \sqrt{2^{2k-1} \cdot \mu_2^{2k}} \right) \cdot (\tau_1 + 3\tau_1^2) \nonumber \\
&\leq 2^{k-1} \cdot \Ex_{x \sim D'} \left[ S''(x)^k \right] + \sqrt{  2^{2k-1} \cdot \Ex_{x \sim D'} \left[ S''(x)^{2k}\right] } \cdot (\tau_1 + 3\tau_1^2) + (2\mu_2)^k, \nonumber
\end{align}
where the last inequality follows because $\tau_1 + 3\tau_1^2 \leq 7/16 < 1/2$.
By Fact \ref{fact:subgauss-moments}, we conclude from the above that:
\begin{align}
\Ex_{x \sim D} \left[ \ln^k \frac{D'[x]}{D''[x]} \right] &\leq \left((2\tau_2)^k \cdot  2^{\lceil k/2 \rceil} \cdot (\lceil k/2 \rceil)! \right) + \left( \sqrt{2^{2k-1} \cdot 2^{k+1} \cdot (k!) \cdot \tau_2^{2k}}  \cdot (\tau_1 + 3\tau_1^2)  \right) + (2\mu_2)^k \nonumber \\
&= \left((2\tau_2)^k \cdot  2^{\lceil k/2 \rceil} \cdot (\lceil k/2 \rceil)! \right) + \left((2\tau_2)^k \cdot 2^{k/2} \cdot \sqrt{k!} \cdot (\tau_1 + 3\tau_1^2)  \right) + (2\mu_2)^k \nonumber \\
&\leq \left((2\tau_2)^k  \cdot  2^{k/2+1} \cdot \sqrt{k!} \right) + (2\mu_2)^k \label{eq:group-std-8}
\end{align}
Plugging the bounds from Equations \eqref{eq:group-std-7} and \eqref{eq:group-std-8} into Equation \eqref{eq:group-std-6}, we conclude that:
\begin{align}
\Ex_{x \sim D} \left[ \ln^k \frac{D[x]}{D''[x]} \right] &\leq \left((4\tau_1)^k + (4\tau_2)^k \right) \cdot \left( 2^{k/2} \cdot \sqrt{k!} \right) + (2\mu_2)^k + (2\mu_2)^k \nonumber \\
&\leq \left((4\tau_1)^k + (4\tau_2)^k \right) \cdot \left( 2^{k/2} \cdot \sqrt{k!} \right) + 2 \cdot (2\mu_1)^k
\end{align}
Using the fact that for $k \geq 3$ we have $\frac{2^{k/2} \cdot \sqrt{k!}}{k!} < 2$, we get that:
\begin{align}
\sum_{k=3}^{\infty} \frac{\lambda^k}{k!} \cdot \Ex_{x \sim D} \left[ \ln^k \frac{D[x]}{D''[x]} \right] &\leq \sum_{k=3}^{\infty} 2\left((4\tau_1 \cdot \lambda)^k + (4\tau_2 \cdot \lambda)^k \right)  + 2\cdot \frac{(2\mu_1 \cdot \lambda)^k}{k!} \nonumber \\
&= \left( 2\cdot (4\tau_1 \cdot \lambda)^3 \cdot \sum_{k=0}^{\infty} (4\tau_1 \cdot \lambda)^k \right) + \left( 2\cdot (4\tau_2 \cdot \lambda)^3 \cdot \sum_{k=0}^{\infty} (4\tau_1 \cdot \lambda)^k \right) + (\mu_1 \cdot \lambda)^2 \nonumber
\end{align}
where the last inequality uses the fact that for $k \geq 3$, we have $\frac{(2\mu_1 \cdot \lambda)^k}{k!} < (\mu_1 \cdot \lambda)^2 \cdot (1/4)^k$ (because we assume here that $\lambda < 1/8\sqrt{\tau}$, so $\mu_1 \cdot \lambda \leq \tau^{1.5}/16 < 1/128$). Moreover, since $4\tau_1 \cdot \lambda, 4\tau_2 \cdot \lambda \leq 1/2$, the geometric sums above converge to a value smaller than 2, and we get:
\begin{align}
\sum_{k=3}^{\infty} \frac{\lambda^k}{k!} \cdot \Ex_{x \sim D} \left[ \ln^k \frac{D[x]}{D''[x]} \right] &\leq 4^4 \cdot (\tau_1 \cdot \lambda)^3 + 4^4 \cdot (\tau_2 \cdot \lambda)^3 + (\mu_2 \cdot \lambda)^2 \nonumber
\end{align}
Finally, since $\sqrt{\tau_1}\cdot \lambda,\sqrt{\tau_2}\cdot \lambda \leq 1/8$, and $\mu_2 \leq \tau_1^2/2$, we get that:
\begin{align}
\sum_{k=3}^{\infty} \frac{\lambda^k}{k!} \cdot \Ex_{x \sim D} \left[ \ln^k \frac{D[x]}{D''[x]} \right] &\leq (32 \cdot \tau_1^{2.5} \cdot \lambda^2) + (32 \cdot \tau_2^{2.5} \cdot \lambda^2) + (\tau_1^4 \cdot \lambda^2) \nonumber \\
&\leq (33 \cdot \tau_1^{2.5} \cdot \lambda^2) + (32 \cdot \tau_2^{2.5} \cdot \lambda^2) \nonumber
\end{align}
\end{proof}
\end{proof}

\remove{
\section{Applications}

\subsection{Cohorts or First $k$ Above Threshold}

\subsection{Top $k$ Counts}
}

\bibliographystyle{alpha}

\bibliography{refs}

\end{document}